\def\plist@algorithm{Alg.\space}
\newtheorem{theorem}{Theorem}
\newcounter{CommentCounter}
\newcommand{\clientsize}{{n_c}}
\newcommand{\serversizekth}{n_{s}^{i}}
\newcommand{\serversetnum}{N}
\newcommand{\tversky}{\text{Tv}}
\newcommand{\tverskyfull}{\tversky_{\alpha, \beta}}
\newcommand{\cpk}{pk}
\newcommand{\csk}{sk}
\newcommand{\collec}{\mathcal{Y}}
\newcommand{\msfunc}{\textsc{agg-process}}
\newcommand{\msclientfunc}{\textsc{reveal}}
\newcommand{\serverfunc}{\textsc{process}}
\newcommand{\clientfunc}{\textsc{reveal}}
\newcommand{\queryfunc}{\textsc{query}}
\newcommand{\psmserverfunc}{\textsc{match-process}}
\newcommand{\psmclientfunc}{\textsc{match-reveal}}
\newcommand{\querysdfunc}{\textsc{SD-}\queryfunc}
\newcommand{\querybothfunc}{\textsc{(SD-)}\queryfunc}
\newcommand{\malcheckfunc}{\textsc{mal-check}}
\newcommand{\malchecksdfunc}{\textsc{SD-}\malcheckfunc}
\newcommand{\malcheckbothfunc}{\textsc{(SD-)}\malcheckfunc}
\newcommand{\addvariant}[2]{{\textsc{#1-}#2}}
\newcommand{\psmvar}{\addvariant{match}{\serverfunc}}
\newcommand{\termisin}[1]{\encvar{s_{#1}}}
\newcommand{\termisinserver}[1]{\encvar{t_{#1}}}
\newcommand{\psmresp}{\encvar{\gamma_k}}
\newcommand{\auxdata}{\mathbb{A}}
\newcommand{\retauxdata}{\mathcal{D}}
\newcommand{\clientquery}{Q}
\newcommand{\cardinality}{\textsf{ca}}
\newcommand{\match}{Match\xspace}
\newcommand{\agg}{Agg\xspace}
\newcommand{\enccardinality}{\encvar{\cardinality}}
\newcommand{\tverskyParamTrans}{\textsf{Tversky-param-process}}
\newcommand{\compiuteCardinality}{$\enccardinality \leftarrow \sum_{i \in \NatNumUpTo{m}} \fheiszero(\termisin{i})$}
\newcommand{\tmin}{t_{\text{min}}}
\newcommand{\tmax}{t_{\text{max}}}
\newcommand{\msPSMans}[1]{\encvar{\gamma_{#1}}}
\newcommand{\msZeroAns}[1]{\encvar{b_{#1}}}
\newcommand{\msresponse}{\encvar{A}}
\newcommand{\retIndex}{\kappa}
\newcommand{\retCtr}[1]{\encvar{\text{ctr}_{#1}}}
\newcommand{\funcind}[1]{\mathbbm{1}[{#1}]}
\newcommand{\randin}{\mathrel{{\leftarrow}\vcenter{\hbox{\tiny\rmfamily\upshape\$}}}}
\newcommand{\secpar}{\ell}
\newcommand{\code}[1]{\text{\textbf{#1}}}
\newcommand{\codify}[1]{ \text{\textsf{#1}} }
\newcommand{\NatNumUpTo}[1]{[#1]}
\newcommand{\varAsList}[1]{{\langle #1 \rangle}}
\newcommand{\fheParamGen}{\textsf{HE.ParamGen}}
\newcommand{\fheKeyGen}{\textsf{HE.KeyGen}}
\newcommand{\fheEnc}{\textsf{HE.Enc}}
\newcommand{\fheDec}{\textsf{HE.Dec}}
\newcommand{\fheiszero}{\textsf{HE.IsZero}}
\newcommand{\fheisin}{\textsf{HE.IsIn}}
\newcommand{\fheeval}{\textsf{HE.Eval}}
\newcommand{\encvar}[1]{\llbracket{#1}\rrbracket}
\newcommand{\Zq}{\mathbb{Z}_q}
\newcommand{\param}{\textsf{params}}
\newcommand{\fheaddop}{+}
\newcommand{\fhesubop}{-}
\newcommand{\fhemultop}{\cdot}
\newcommand{\polyDegree}{N_{deg}}
\newcommand{\plainMod}{m_{pt}}
\newcommand{\ctxMod}{m_{ct}}
\newcommand{\bfvParam}[1]{P_{#1k}}
\newcommand{\clientCopyNum}{k}
\newcommand{\compequiv}{\stackrel{\text{c}}{\equiv}}
\newcommand{\simview}[1]{\text{View}_{\text{#1}}}
\newcommand{\sims}[1]{\mathcal{S}_{#1}}
\newcommand{\yes}{{\color{OliveGreen}\checkmark}}
\newcommand{\no}{{\color{red}$\times$}}
\newcommand{\order}{\mathcal{O}}
\newcommand{\tcclient}[1]{{\color{red}#1}}
\newcommand{\tcserver}[1]{{\color{OliveGreen}#1}}
\newcommand{\interalgspace}{\vspace{1.5mm}}
\newcommand{\parait}[1]{\noindent\textit{#1}} 
\newcommand{\para}[1]{\vspace{0.5mm}\textit{#1.}}
\newcommand{\parasec}[1]{\paragraph*{\textbf{#1}}}
\newcommand{\requirement}[2]{\hypertarget{req:#1}{}\para{#1: #2}}
\newcommand{\reqlink}[1]{\protect\hyperlink{req:#1}{#1}}
\theoremstyle{definition}
\newtheorem{definition}{Definition}
\newtheorem*{definition*}{Definition}
\newcommand{\tikzlongarrow}[2]{
\begin{tikzpicture}[]
  \draw[#1](0,0) -- node[above=-0.5ex]{\ensuremath{#2}} (1.5,0);
  \node[draw=none] (bottom) at (0,-0.5ex) {};
  \node[draw=none] (top) at (0,1ex) {};
  \node[draw=none] (lowerleft) at (bottom-|current bounding box.west) {};
  \node[draw=none] (topright) at (top-|current bounding box.east) {};
  \pgfresetboundingbox
  \draw[draw=none,use as bounding box] (lowerleft) rectangle (topright);
\end{tikzpicture}
}
\newcommand{\diagramsend}[1]{\tikzlongarrow{->}{#1}}
\newcommand{\diagramrecv}[1]{\tikzlongarrow{<-}{#1}}
\newcommand{\adv}{\mathcal{A}}
\begin{document}
  \author{Kasra EdalatNejad}
  \affiliation{\institution{EPFL}\country{Switzerland}}
  \email{kasra.edalat@epfl.ch}

  \author{Mathilde Raynal}
  \affiliation{\institution{EPFL}\country{Switzerland}}
  \email{mathilde.raynal@epfl.ch}

  \author{Wouter Lueks}
  \affiliation{\institution{CISPA Helmholtz Center for Information Security}\country{Germany}}
  \email{lueks@cispa.de}
  
  \author{Carmela Troncoso}
  \affiliation{\institution{EPFL}\country{Switzerland}}
  \email{carmela.troncoso@epfl.ch}

  \title{Private Collection Matching Protocols}

  \begin{abstract}
  {
    We introduce \emph{Private Collection Matching (PCM)} problems, in which a client aims to determine whether a collection of sets owned by a server matches their interests.
    Existing privacy-preserving cryptographic primitives cannot solve PCM problems efficiently without harming privacy.
    We propose a modular framework that enables designers to build privacy-preserving PCM systems that output one bit: whether a collection of server sets matches the client's set.
    The communication cost of our protocols scales linearly with the size of the client's set and is independent of the number of server elements.
    We demonstrate the potential of our framework by designing and implementing novel solutions for two real-world PCM problems: determining whether a dataset has chemical compounds of interest, and determining whether a document collection has relevant documents.
    Our evaluation shows that we offer a privacy gain with respect to existing works at a reasonable communication and computation cost.
  }
  \end{abstract}
  \keywords{Private set intersection, private computation, homomorphic encryption, private aggregation}

  \maketitle

\section{Introduction}
\label{sec:intro}

In many scenarios, a server holds a collection of sets and clients wish to determine whether these server sets \emph{match} their own set, while both client and server keep their privacy. We call these \emph{Private Collection Matching (PCM)} problems. In this paper, we study for the first time the requirements of PCM problems. 

We identify the privacy and efficiency requirements of PCM problems by analyzing three real-world use cases: determining whether a pharmaceutical database contains compounds that are chemically similar to the client's~\cite{stumpfe2011, laufkotter2019, shimizu2015}, determining whether an investigative journalist holds relevant documents~\cite{DatashareNetwork} (or how many), and matching a user's profile to items or other users in mobile apps~\cite{thefork, stravaexploreroute, tinder}. We find that PCM problems have three common characteristics: 
(1) Clients want to compare their \emph{one} set with \emph{all} sets at the server. (2) Clients do not need per-server set results, only an aggregated output (e.g., whether any server set matches). (3) Clients and server want privacy: the server should learn nothing about the clients' set, and the clients should only learn the aggregated output. 
However, PCM problems differ in their definition of \emph{when sets match} and \emph{how to combine individual matching responses}.
Now, we discuss these two aspects in more details: \looseness=-1

\para{Set matching} 
Typically, set matching is defined as a function of the intersection of two sets. Hence, clients could detect a matching server set by using private set intersection (PSI) protocols~\cite{CristofaroGT12, CT09, CT10, Pinkas0TY19, PSZ14, PSZ18, PSWW18, Pinkas0TY19, KissnerS05} to privately compute the intersection, then post-process the intersection to determine interest locally. PCM applications differ in their matching criteria and may decide interest using measures such as a cardinality threshold, containment, or set similarity. This local processing approach, unfortunately, \emph{reduces privacy of the server's sets} by leaking information beyond the set's matching status to the client.
Such leakage could, for instance, reveal secret chemical properties of compounds, or the content of journalists' sensitive documents.

\para{Many-set} In PCM problems, the server holds \emph{a collection of $N$ sets}. This creates two challenges. First, running one matching (or PSI) interaction per server set is inefficient.
Second, revealing individual set-matching statuses harms server privacy.
While servers may be interested in selling data to or collaborating with clients, they want to ensure that clients cannot use `the PCM solution' to extract information about sets. Clients, meanwhile, often only need an aggregated response summarizing the utility of a collection.
Servers therefore enact application-depended aggregation policies ensuring that clients can, e.g., determine only whether at least one set matches or learn only the the number of matching sets.

We construct a framework that leverages computation \emph{in the encrypted domain} to solve PCM problems.
In our framework, shown in~\cref{fig:overview}, given an encrypted client set, the server uses a matching criteria $f_M$ to compute per-server-set \emph{binary} answers to ``is this set of interest to the client?''. 
Next, the server uses an aggregation policy $f_A$ to combine per server-set responses into a  collection-wide response. Finally, the client decrypts the aggregated result. \looseness=-1

\begin{figure}[tb]
    \centering
    \includegraphics[trim=6pt 2pt 7pt 0, width=0.9\linewidth, clip]{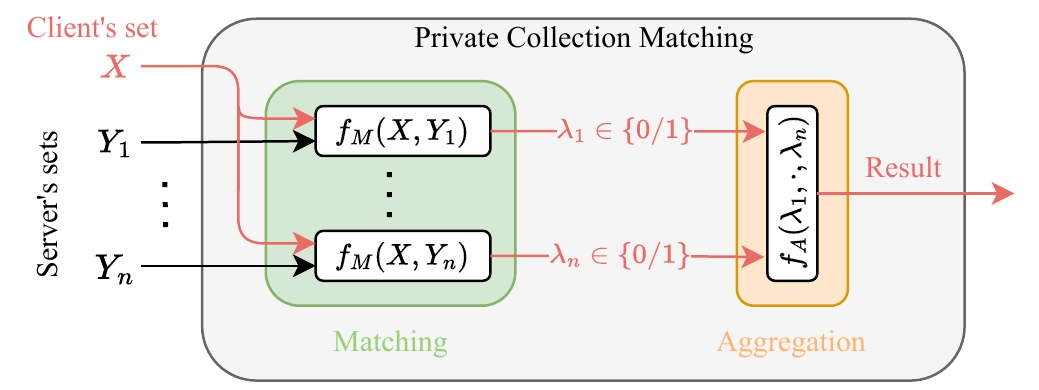}
    \caption{Structure of our PCM framework. Red arrows show values encrypted under the client's key. $f_M$ designates a matching function: it outputs a binary value $\lambda$ indicating whether two sets match. $f_A$ is an aggregation function that combines $n$ matching statuses  into a collection-wide result.}
    \label{fig:overview}
\end{figure}

Our work makes the following contributions:
\begin{itemize}[nosep, wide]
    \item[\checkmark] We introduce Private Collection Matching (PCM) problems. We derive their requirements from three real-world problems.

    \item[\checkmark] We design single-set protocols where the client learns a \emph{one-bit output} -- whether one server set is of interest to the client -- and many-set protocols where the client learns a \emph{collection-wide output} that aggregates individual matching responses. The communication cost of our protocols scales linearly with the size of the client's set and is \emph{independent} of the number of server sets and their total size. 

    \item[\checkmark] We propose a modular design that separates flexible set matching criteria from many-set aggregation. Our modularity enables extending our design with new matching or aggregation policies and simplifies building privacy-preserving PCM solutions. 

    \item[\checkmark] We demonstrate our framework's capability by solving chemical similarity and document search problems. 
    We show that our framework offers improved privacy with competitive cost compared to custom-made solutions, and significantly improving the latency, client's computation cost, and communication cost with respect to generic solutions that offer the same privacy guarantee.
\end{itemize}

\section{Private Collection Matching}
\label{sec:scenarios}
In this section, we define the Private Collection Matching (PCM) problem. We derive its basic requirements from three real-world matching problems. 
We also explain why existing PSI solutions cannot satisfy the privacy requirements of PCM problems.

\subsection{Case studies}
\label{sec:case-studies}
We study three cases that can benefit from PCM.

\para{Chemical similarity.}
Chemical research and development is a multi-billion dollar industry. When studying a new chemical compound, knowing the properties of similar compounds can speed up the research. In an effort to monetize research, companies sell datasets describing thousands to millions of compounds and their properties.
Chemical R\&D teams are willing to pay high prices for these datasets but \emph{only if} they include compounds similar to their research target. Determining whether this is the case is tricky: buyers want to hide the compound they are currently investigating~\cite{shimizu2015}, and sellers want to hide information about the compounds in their dataset before the sale is finalized.

Chemical similarity of compounds is determined by comparing molecular fingerprints of compounds~\cite{stumpfe2011, laufkotter2019, xue2001,willett1998,cereto2015,muegge2016}. Fingerprints are based on the substructure of compounds and are represented as fixed-size bit vectors -- these vectors are between a few hundred and few thousand bits long.
Measures such as Tversky~\cite{tversky1977features} and Jaccard~\cite{jaccard1912distribution} determine the similarity of these fingerprints, and thus of the compounds.

Revealing pair-wise intersection cardinalities or even similarity scores between the fingerprints of a target compound and the seller's compounds results in unacceptable leakage. A buyer can reconstruct an $n$-bit molecular fingerprint $F$ by learning similarity values between $F$ and $n+1$ known compounds~\cite{shimizu2015}.
To prevent inferences, the buyer should learn only the number of similar compounds in the seller's dataset, or, better, only learn whether at least one similar compound exists.

\para{Peer-to-Peer document search.}
Privacy-preserving peer-to-peer search engines help users and organizations with strict privacy requirements to collaborate safely. We take the example of investigative journalists who, while unwilling to make their investigations or documents publicly available, want to find collaboration opportunities within their network~\cite{DatashareNetwork}.

To identify those opportunities, a journalist performs a search to learn whether a document owner has documents of interest. 
A search query consists of keywords relevant to the journalist's investigation. The document owner compares the query to all documents in his collection. A document is deemed relevant if it contains \emph{all} or \emph{a sufficient number of} queried keywords.
Journalists own a collection of a thousand documents (on average), and each document is represented by around a hundred keywords.

The sensitivity of journalists' investigations demand that both the content of the documents and of the queries remain private~\cite{DatashareNetwork}. Journalists only need to learn one bit of information -- that at least one or a threshold number of documents in the owner's collection is relevant -- to determine whether they should contact the owner.

\para{Matching in mobile apps.}
A common feature in mobile apps is enabling users to find records of interest in the app servers' databases, e.g., restaurants~\cite{thefork}, routes for running~\cite{stravaexploreroute}, or suitable dating partners~\cite{tinder, OkCupid}. 
Users are typically interested in records that have at least a number of matching characteristics in common with their search criteria or that are a perfect match. Also, users need to be able to retrieve these records.

The user-provided criteria -- typically range choices entered via radio buttons or drop-down menus -- are compared to the attributes of records. An app database can have millions of records and records can have dozens to hundreds of attributes.

Both search criteria and records are sensitive. Knowing search criteria enables profiling of user interests. These are particularly sensitive for dating applications. Thus, search queries should be kept private. The secrecy of the records in the database is not only at the core of the business value of these apps but also required by law in cases where records contain personal data (e.g., dating apps).

\subsection{PCM requirements}
\label{sub:requirements}

We extract requirements that PCM protocols should fulfill based on the commonalities between the use cases. These come in addition to basic PSI properties such as client privacy.

\requirement{RQ.1}{Flexible set matching} \textit{PCM protocols need to be able to determine matches between sets without revealing other information such as intersections or cardinalities to the client.}
In the use cases, clients do not need to know the intersection or its cardinality. They are interested only in whether there is a match. Matches in these examples are a function of the intersection between a client and a server set: a chemical compound is a match when the Tversky or Jaccard similarity with the query exceeds a threshold; a document is a match when it contains some or all query keywords; and a record is a match when it includes a threshold of query attributes.
PCM protocols must be able to detect matching sets and compute a \emph{single} one-bit matching status per-set.

\requirement{RQ.2}{Aggregate many-set responses}
\textit{PCM protocols need to have the capability to provide an aggregated response for a collection of sets without leaking information about individual sets.}
Our use cases highlight 
that in many applications, a client (buyer, journalist, user) may want to compare their input (compounds under investigation, keywords of interest, search criteria) with a \emph{collection} of sets (compounds in a database, documents in a collection, records in a database). 
More importantly, we observe that clients wish to know how interesting the collection is as a whole.
For example, a buyer is interested in a chemical dataset if it contains at least one similar compound and a querier journalist may contact a document owner if the owner has a number of relevant documents in their collection.
Therefore, to satisfy clients' needs yet protect the server's privacy, PCM protocols should only reveal aggregated per-collection results.

\requirement{RQ.3}{Extreme imbalance}
\textit{PCM problems have thin clients and imbalanced input sizes; thus, protocols must not require communication and computation linear to the server's input size from clients.}
Drawing from our earlier scenarios, the total input size of the server may be as much as \emph{6 orders of magnitude} larger than the client's input, as shown by the chemical similarity scenario.
The server, holding many sets, can safely be assumed to be resourceful, while clients may be constrained in their capabilities, e.g., a client running the PCM protocol from their mobile phone. This can be in terms of computation, e.g., battery has to be preserved in mobile apps; or in terms of bandwidth, e.g., journalists that can be in locations with poor Internet access. Therefore, PCM protocols should not incur a large client-side cost.

\newcommand{\totalserversize}{N_s}

\subsection{Formal PCM definition}
\label{sub:formal-def}
Let $X$ be a client set with $\clientsize$ elements $\{x_1, \ldots, x_{\clientsize}\}$ from input domain $D$ and $\collec$ be a collection of $N$ server sets $\{Y_1, \ldots Y_N\}$ where the $i$'th server set $Y_i = \{y_{i, 1}, \ldots, y_{i, \serversizekth}\}$ has $\serversizekth$ elements, also from $D$, leading to a total server size of $\totalserversize = \sum_i \serversizekth$. 
We define two families of functions as follows: 

Matching functions $\lambda_i \leftarrow f_M (X, Y_i)$ take two sets $X$ and $Y_i$ as input and compute a binary matching status $\lambda_i$ determining interest. This family represents our flexible matching criteria \reqlink{RQ.1}. For example, in the document search scenario where a server set (document) is of interest when it contains all queried keywords, we define $f_M(X, Y_i)$ as 1 if $X \subseteq Y_i$ and 0 otherwise.

Aggregation functions $A \leftarrow f_A(\lambda_1, \ldots, \lambda_N)$ take $N$ binary matching statuses ($\lambda_i$) and aggregates them into a single response $A$. This family represents our aggregation requirement \reqlink{RQ.2}. For example, if we want to count the number of relevant documents in a search, we define $ f_A(\lambda_1, \ldots, \lambda_N)$ as $\sum_j \lambda_j$.

In \cref{tab:proto-summary} in \cref{ap:proto-summary}, we summarize the matching and aggregation functions that we implement.

\begin{definition}[PCM]
  \label{def:PCM}
  PCM protocols are two-party computations between a client and a server with common inputs $f_M$ and $f_A$, where the client learns an aggregated matching status and the server learns nothing. 
  Formally: 
  \[\left(A = f_A\left(f_M\left(X, Y_1\right), \ldots, f_M\left(X, Y_N\right)\right), \bot\right)  \leftarrow PCM_{f_M, f_A}(X, \collec)  \]
  We use this notation to define formal properties of PCM protocols.
\end{definition}

\begin{definition}[Correctness]
    \label{def:correctness}
    A PCM protocol is correct if the client output matches the result of $f_A(f_M(X, Y_1), \allowbreak \ldots, \allowbreak f_M(X, Y_N))$. 
\end{definition}

\begin{definition}[Client Privacy]
    \label{def:cl-priv}
    A PCM protocol is client private if the server cannot learn any information about the client's set beyond the maximum size of the client's set.
\end{definition} 

\begin{definition}[Server Privacy]
    \label{def:sv-priv}
    A PCM protocol is server private if the client cannot learn any information about the server elements beyond the number of server sets, the maximum server set size, and the explicit client output $A$.
\end{definition}

\section{Related Work}\label{sec:rel}
While ad-hoc solutions for chemical (Shimizu et al.~\cite{shimizu2015}) and document search (EdalatNejad et al.~\cite{DatashareNetwork}) exist, most prior work focuses on building private set intersection (PSI) protocols, which are a special case of PCM. 
We introduce the PSI protocols most relevant to our work. We leave the detailed comparison with ad-hoc solutions to the evaluation (see \cref{sec:eval}). 

We compare existing work on two critical aspects of PCM problems: privacy and efficiency. We summarize existing schemes and their suitability for the PCM scenario in \Cref{tab:related-work}.\looseness=-1

For privacy, we assess whether existing approaches provide flexible matching (\reqlink{RQ.1}) and aggregated many-set responses (\reqlink{RQ.2}). 
We note that the majority of prior works do not consider, or support, many sets. 
When there is no natural extension to support many sets at once, we run a single-set interaction per server set, leading to an  $N\times$ increase in cost. This naive extension does not provide the privacy enhancement of many-set aggregation but enables us to reason about the efficiency of these approaches.

For efficiency, taking into account the extreme imbalance requirement (\reqlink{RQ.3}), we focus on the client's computation and communication cost and require schemes to have a client cost of $\omega(\totalserversize (\approx N \serversizekth))$.

\begin{table}[tb]
  \caption{Overview of PSI approaches in the PCM setting.}
  \begin{tabular}{@{}l@{\hskip6pt}c@{\hskip6pt}c@{\hskip6pt}c@{}}
      \toprule
       & \multicolumn{2}{c}{Privacy} & Efficiency \\
      \cmidrule(lr){2-3} \cmidrule(lr){4-4}
       & \reqlink{RQ.1} & \reqlink{RQ.2} & \reqlink{RQ.3} \\
      \midrule
    OT \hfill \cite{FalkNO19, PSZ14, PSSZ15, PSZ18, KKRT16, PinkasRTY19}             & \no  & \no   & \no  \\ 
    Asym. OPRF \hfill\cite{CT09, CT10, KLS17, CristofaroGT12, RosulekT21} & \no  & \no   & \yes$^*$ \\
    OPE \hfill \cite{FNP04, H15, DRMY12, CLR17} & \no  & \no   & \yes \\
    Generic SMC \hfill \cite{HEK12, KRS19,  PinkasSSW09}        & \yes & \yes  & \no  \\
    Circuit-PSI \hfill \cite{KarakocK20, CiampiO18, PSWW18, Pinkas0TY19, ChandranGS22, RindalS21,MaC22}  & \yes & \yes   & \no  \\
    Flexible func.\hfill\cite{YingCPXL22, 0001C18, GhoshS19, IonKNPSSSY17, ZhaoC17, shimizu2015} & \yes & \no   & \no  \\
      \textbf{This paper} & \yes    & \yes  & \yes \\
      \bottomrule
  \end{tabular}
  \caption*{\footnotesize ${ ^*}$Efficient communication requires pre-processing.}
  \label{tab:related-work}
\end{table}

\parasec{Traditional single-set PSI} \label{sub:rel-work-psi}
We first study protocols \emph{solely focusing on single-set intersection or cardinality}.
We study schemes that offer \emph{enhanced functionality or privacy} separately.

In PSI, clients learn information about the intersection of two sets while (i) not learning anything about the server's non-intersecting elements, and (ii) not leaking any information about their own set to the server. PSI protocols in the literature focus on providing two possible outputs: the \emph{intersection} (e.g., finding common network intrusions~\cite{NMH10}, or discovering contacts~\cite{demmler2018pir}); and the \emph{cardinality of the intersection} (e.g., privately counting the number of common friends between two social media users~\cite{NCD13}, or performing genomic tests~\cite{baldi2011countering}).
Works in this area opt for a variety of trade-offs between the computational capability and the amount of bandwidth required to run the protocol~\cite{CLR17, KLS17, PinkasRTY20, KRS19}.
These works show that PSI can scale to large datasets~\cite{Pinkas0TY19, kamara2013scaling}, and support light clients~\cite{KLS17}.

We classify PSI protocols according to the cryptographic primitive they use to implement the intersection: 

\para{OT-based protocols}
The fastest class of PSI protocols builds on oblivious transfers (OTs)~\cite{FalkNO19, PSSZ15, PSZ18, KKRT16, PSZ14,PinkasRTY19}. 
These protocols use hashing to map elements to small buckets and then apply efficient oblivious PRFs based on OTs to the items in these buckets to enable one-to-one plaintext comparisons.
Typical OT-based approaches reveal comparison results (and thus intersections or cardinalities), to the client; therefore, they do not satisfy our single-set privacy requirement (\reqlink{RQ.1}). As these approaches reveal individual detailed set responses, \emph{private} aggregation (\reqlink{RQ.2}) is impossible.
The communication cost is linear in the size of the client \emph{and} server set, and consequently, linear in the total server size $\totalserversize$; thus, they do not satisfy our efficiency requirement (\reqlink{RQ.3}).
In \cref{ap:spot-bench}, we confirm our efficiency assessment by evaluating the cost of SpOT-light~\cite{PinkasRTY19}, one of the fastest OT-based PSI protocols, in the PCM setting. We show significant improvement in latency (\numrange{10}{65}x), client's computation (\numrange{1800}{24800}x), and transfer cost (\numrange{1.7}{27}x).

\para{OPRFs using asymmetric cryptography}
Some protocols construct Oblivious Pseudo-random Functions (OPRFs) from asymmetric primitives such as Diffie-Hellman~\cite{RosulekT21, KLS17},  RSA~\cite{CT09, CT10, KLS17} or discrete logarithms~\cite{CristofaroGT12}. The client obliviously evaluates the PRF on its elements with the server, and the server sends the PRF evaluation of its elements to the client. The client then locally compares.

OPRF-based approaches rely on comparison, similar to OT, and cannot compute flexible set matches without leaking intermediate data nor do they support aggregation. The communication cost is linear in the size of client \emph{and} server sets. 
But, preprocessing can make the transfer cost independent of the server size~\cite{KLS17}.

\para{Oblivious polynomial evaluation}
Another approach is to use (partial) homomorphic encryption~\cite{RAD78} to determine set intersection using oblivious polynomial evaluation (OPE)~\cite{FNP04, H15, DRMY12, CLR17}. The client encrypts their elements and sends them to the server. The server constructs a polynomial with its set elements as roots, evaluates the polynomial on encrypted client elements, randomizes the result, and sends them back to the client. The client decrypts the results, a 0 indicates a matching element. We use a similar approach in our schemes. Existing OPE-based approaches do not support flexible set matching or aggregation, but achieve cost independent of the server input size ($\totalserversize$) for the client.

\para{Generic SMC}
PSI protocols based on SMC tools~\cite{BeaverMR90,Yao86} construct full circuits such as sort-compare-shuffle~\cite{HEK12} to compute the intersection~\cite{KRS19,  PinkasSSW09}. They can be extended to support flexible set matching or many-set aggregation. However, circuits have communication linear in the size of their inputs (wires) which guarantees a transfer cost of $\order(\clientsize + \totalserversize)$.  This is a fundamental limit. Thus, circuits cannot satisfy our efficiency requirement (\reqlink{RQ.3}).
Because circuits can satisfy our privacy requirements, we develop a generic alternative to our framework using an SMC compiler in \cref{sub:doc-search-eval} and show that our system improves latency (\numrange{2}{96}x), client's computation (\numrange{75}{2250}x), and transfer cost (\numrange{93}{2800}x).
Besides for generic protocols, circuits are used to (1) extend OT-based protocols (discussed as `Circuit-PSI' below) or (2) obliviously evaluate PRFs~\cite{KRS19, KLS17} such as AES or LowMC~\cite{AlbrechtR0TZ15} (discussed as `OPRF').

\parasec{Custom PSI protocols}\label{sub:rel-beyond-intersection} Some PSI protocols go beyond cardinality and compute more complex functions over the intersection.

\para{Circuit-PSI}
A new line of work extends hashing- and OT-based PSI protocols to support arbitrary extensions with generic SMCs~\cite{CiampiO18, PSWW18, Pinkas0TY19, ChandranGS22, RindalS21, KarakocK20, MaC22}.
These works compute the intersection of two sets but instead of revealing the plain result to the client, they secret share the intersection between the two parties. This secret shared output enables parties to privately compute arbitrary functions on top of the intersection.
Unfortunately, these work focus on scenarios with equal client and server sizes as their cost is linear in the input of both parties $\order(\clientsize + \totalserversize)$. This linear communication cost is a fundamental limit.
As these approaches can satisfy our privacy requirement, we evaluate Chandran et al.~\cite{ChandranGS22}, a state-of-the-art Circuit-PSI paper, in the PCM setting in \cref{sub:doc-search-eval}. We show that our framework significantly improves latency (\num{580}x), client's computation (\num{70000}x), and transfer cost (\num{2360}x).

\para{Flexible functionality}
Several privacy-preserving custom protocols provide functionality beyond computing intersection or cardinality. For instance, computing the sum or statistical functions over associated data~\cite{IonKNPSSSY17, YingCPXL22}, evaluating a threshold on intersection size~\cite{0001C18, GhoshS19, ZhaoC17}, or computing Tversky similarity~\cite{shimizu2015}. These approaches improve privacy by supporting flexible set matching but do not extend well to many-sets scenarios and aggregation.
They are optimized for a specific setting and do not achieve cost independent of the server input size.

\parasec{Orthogonal works}\label{sub:rel-other-works}
We briefly mention two groups of related work that, while of interest, are orthogonal to PCM problems and solve different challenges. 
\emph{Encrypted databases and ORAM}~\cite{PoddarBP19, PopaRZB11, StefanovDSCFRYD18} let a client query their own outsourced data. This cannot address PCM problems, where the server holds private information in addition to the client, and wishes to minimize what the clients learn. 
In \emph{Multi-party PSI}~\cite{BayEAV21, WangBU21, KolesnikovMPRT17} $p$ parties each with a set $S_i$ compute one intersection $I = \bigcap S_i$. PCM problems, instead, are a two-party protocol where the server holds $N$ sets.

\section{A Framework for PCM Schemes}
\label{sec:framework}

\begin{figure}[tb]
  \centering
  \includegraphics[width=\linewidth, clip]{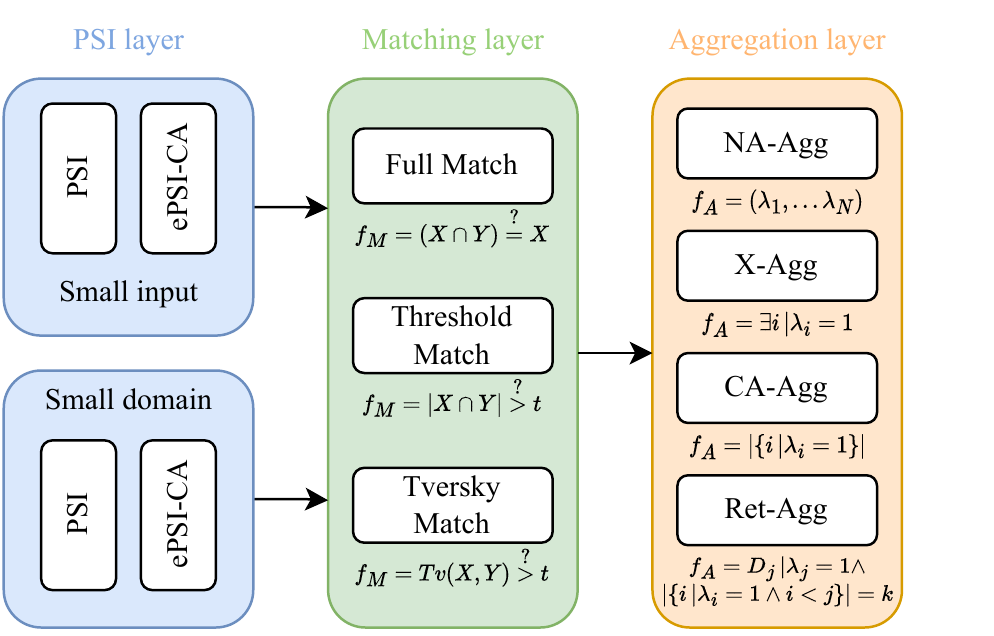}
  \caption{An overview of our layers and their composition. Refer to \cref{tab:proto-summary} for a summary of protocol definitions. 
  }
  \label{fig:psi_layers}
\end{figure}

We showed that existing work -- except for ad-hoc solutions -- cannot solve PCM problems without losing either privacy or efficiency. We now introduce a modular framework that enables the design of PCM solutions with minimal effort and strong privacy while providing
performance close to ad-hoc solutions.

The framework has three layers, shown in \cref{fig:psi_layers}:

\begin{description}[font=\normalfont\itshape]
\item [PSI layer] protocols operate directly on a client's and a server's set. These protocols compute single-set PSI functionalities such as intersection or cardinality. Our implementation focuses on two scenarios: small input domain size and small constant-size client sets.

\item [Matching layer] protocols use PSI layer protocols to compute a binary answer determining whether each of the server sets matches the interest of the client according to a pre-defined matching function $f_M$ (\reqlink{RQ.1}). Computation in this layer is the same regardless of the scenario chosen in the PSI layer.
\item [Aggregation layer] protocols aggregate $N$ single-set responses into one collection-wide response according to a pre-defined aggregation function $f_A$ (\reqlink{RQ.2}). This layer achieves constant size responses and ensures client's communication efficiency (\reqlink{RQ.3}).
\end{description}

\para{Modularity}
While we provide a large number of protocols for the PSI, Matching, and Aggregations layers (see \cref{fig:psi_layers}), an advantage of our framework is extensibility.
Whenever an application requires new matching or aggregation criteria, designers can add (or adapt) a single functionality while taking advantage of the existing optimized layers.
As an example, we extend our framework to support the single-set PSI-SUM functionality in \cref{ap:psi-sum}.

The layers can also be used as standalone protocols. 
We include blocks such as `Naive' aggregation such that even if an application does not require matching or aggregation the designer can use the framework to enjoy its capability to tackle the many-set scenario.

\para{Security goals} 
Our framework must enforce the privacy of clients and servers against malicious adversaries. We have a more relaxed goal for correctness. Servers are free to choose their input, allowing them to degrade the quality of the protocol's result without any misbehavior. We accept this inherent weakness of PCM protocols and make the deliberate decision to aim for correctness in the semi-honest model only, and \emph{not} when the server is malicious.

\section{Technical Background}
We introduce our notation and define the syntax of the fully homomorphic encryption scheme we use.

\para{Notation} Let $\secpar$ be a security parameter. We write $x \randin X$ to denote that $x$ is drawn uniformly at random from the set $X$. Let $q$ be a positive integer, then $\Zq$ denotes the set of integers $[0, \ldots, q)$, and $\Zq^*$ represent the elements of $\Zq$ that are co-prime with $q$.
We write $\NatNumUpTo{n}$ to denote the set $\{1, \ldots, n\}$, and use
$\varAsList{a_i}_m$ to present the list $[a_1,  \ldots, a_m]$. We drop the
subscript $m$ when the list length is clear from the context. We write $\encvar{x}$ to denote the encryption of  $x$.
We write $\mathbbm{1}[E]$ to refer to the indicator function that returns `1' if the event $E$ is true, and `0' otherwise.
\Cref{tab:notation}  in \cref{ap:notation} summarizes our notation.

\subsection{Homomorphic Encryption}
\label{sub:fhe}
Homomorphic encryption (HE) schemes enable arithmetic operations on encrypted values without decryption. We use HE schemes that operate over the ring $\Zq$ with prime $q$, such as BFV~\cite{FanV12}.

\para{Syntax} HE is defined by the following procedures:

\begin{itemize}[nosep]
\item $\param \leftarrow \fheParamGen(q)$. Generates HE parameters with the plaintext domain $\Zq$.
\item $pk, sk \leftarrow \fheKeyGen(\param)$. Takes the parameters $\param$ and generates a fresh pair of keys $(pk, sk)$. For brevity, we do not explicitly mention evaluation keys $evk$ and consider them to be incorporated in the public key.
\item $\encvar{x} \leftarrow \fheEnc(pk, x)$. Takes the public key $pk$ and a message $x \in \Zq$ and returns the ciphertext $\encvar{x}$.
\item $x \leftarrow \fheDec(sk, \encvar{x})$. Takes the secret key $sk$ and a ciphertext $\encvar{x}$ and returns the decrypted message $x$.
\end{itemize}
The correctness property of homomorphic encryption ensures that $\fheDec(sk, \fheEnc(pk, x)) \equiv x \pmod q$.

\para{Homomorphic operations} HE schemes support homomorphic addition (denoted by~$\fheaddop$) and subtraction (denoted by~$\fhesubop$) of ciphertexts: $\fheDec(\encvar{a} \fheaddop \encvar{b}) = a + b \bmod q$ and $\fheDec(\encvar{a} \fhesubop \encvar{b}) = a - b \bmod q$.
HE schemes also support multiplication (denoted by~$\fhemultop$) of ciphertexts: $\fheDec(\encvar{a} \fhemultop \encvar{b}) = a \cdot b \bmod q$.

Besides operating on two ciphertexts, it is possible to perform addition and multiplication with plaintext scalars. In many schemes, such scalar-ciphertext operations are more efficient than first encrypting the scalar and then performing a standard ciphertext-ciphertext operation.
We abuse notation and write $a\encvar{x} + b$ to represent $(\encvar{a}\cdot\encvar{x}) + \encvar{b} =\encvar{ax + b}$.

\para{Multiplicative depth} 
Our framework is designed with fully homomorphic encryption (FHE) in mind and assumes unbounded multiplication depth. For practical purposes, we port the majority, but not all, of our protocols to support execution with somewhat homomorphic encryption (SWHE) and optimize operations in \cref{sec:practical}.\looseness=-1

\subsection{Core functions}
The complex functionality of PCM protocols can be reduced to a sequence of zero detection and inclusion test procedures.
These two functions allow us to describe our protocol at a higher abstraction level. Moreover, any improvement to these basic functions automatically enhances our framework.

\para{Zero detection} The function $\encvar{b} \leftarrow \fheiszero(pk, \encvar{x})$ computes whether the ciphertext $\encvar{x}$ is an encryption of zero. The binary output $b \in \{0, 1\}$ is defined as $b = 1$ if $x \equiv 0 \pmod q$ otherwise $b = 0$.
\begin{algorithm}[tbp]
    \footnotesize
    \caption{Check whether $\encvar{x}$ is zero.}
    \label{alg:bool}
    \begin{algorithmic}
        \Function{$\fheiszero$}{$pk, \encvar{x}$}
        \State $\encvar{b} = 1 - \encvar{x}^{(q-1)}$
        \Comment $b \leftarrow (x = 0)$
        \State \Return  $\encvar{b}$
    \EndFunction
    \end{algorithmic}
\end{algorithm}
We use Fermat's Little Theorem for zero detection~\cite{BurkhartSMD10}. 
We rely on the prime ring structure of $\Zq$  as any non-zero variable $x \in \Zq^*$ to the power $q - 1$ is congruent to one modulo the prime $q$. We can perform this exponentiation with $\lg(q)$ multiplications.
See Algorithm~\ref{alg:bool} for the implementation.

The high multiplicative depth of $\fheiszero$ makes it impractical for use with most SWHE schemes. We hope that the research and advances in HE comparison enables efficient instantiations of this function and unlock our framework's full capabilities. When evaluating our framework in \cref{sec:eval}, we use ad-hoc techniques to prevent the need for this function.

\para{Inclusion test} The function $\encvar{I} \leftarrow \fheisin(pk, \encvar{x}, Y)$ checks if $x$ is included in the set $Y$ of cardinality $n$.
We consider two variants. In the first, $Y$ is a set of ciphertexts $\encvar{y_i}$, in the second, $Y$ is a set of plaintexts $y_i$. In both cases, the output $I$ equals 0 if and only if a $y_i$ exists such that $x \equiv y_i \pmod q$, otherwise $I$ will be a uniformly random element in $\Zq^*$.
See Algorithm~\ref{alg:isin} for the implementation.

\newcommand{\isinprodval}{I}
\begin{algorithm}[tbp]
    \footnotesize
    \caption{Check inclusion of an encrypted variable $x$ in a plain $Y = \{y_1, \ldots, y_n\}$ or an encrypted $Y=\{\encvar{y_1}, .., \encvar{y_n}\}$ set.}
    \label{alg:isin}
    \begin{algorithmic}
        \Function{$\fheisin$}{$pk, \encvar{x}, Y=\{\encvar{y_1}, .., \encvar{y_n}\}$}
            \State $\encvar{\isinprodval} \leftarrow \prod_{i \in \NatNumUpTo{n}}(\encvar{x} \fhesubop \encvar{y_i})$
            \State $r \randin \Zq^*$
            \State \Return  $r \fhemultop \encvar{\isinprodval}$
        \EndFunction
        \Function{$\fheisin$}{$pk, \encvar{x}, Y=\{y_1, .., y_n\}$}
            \State $[a_0, \ldots, a_n] \leftarrow \code{ToCoeffs}(Y)$ \Comment{Such that $\prod_i(\mathrm{x} \fhesubop y_i) = \sum_i a_i \mathrm{x}^i $}
            \State $\encvar{\isinprodval} \leftarrow \sum_{i \in [0 \ldots n]} a_i \cdot \encvar{x}^i $
            \State $r \randin \Zq^*$
            \State \Return  $r \fhemultop \encvar{\isinprodval}$
        \EndFunction
    \end{algorithmic}
\end{algorithm}

The function $\fheisin$ relies on oblivious polynomial evaluation (OPE)~\cite{FNP04,H15}. We create an (implicit) polynomial $P$ with roots $y_i$, and evaluate $\encvar{\isinprodval} = \encvar{r\cdot P(x)}$. If $x$ is in the set, there exists a variable $y_i$ where $x \equiv y_i$, thus $\isinprodval$ is zero. Otherwise, $\isinprodval$ is the product of $n$ non-zero factors modulo $q$. Since $q$ is prime, the product of non-zero values is non-zero. The random value $r$ ensures uniformity in this case. 
The multiplicative depth of $\fheisin$ scales with the size of $Y$. We use the second form, where $Y$ is a set of plaintexts, to lower the multiplicative depth when $\encvar{x^i}$ are known, see \cref{sub:fhe-optimization}.

\section{PSI Layer}
\label{sec:base-layer}
The PSI layer of our framework implements basic PSI functionalities: computing intersection or intersection cardinality.
These protocols can be used in isolation, but in our framework they serve to form the input to the matching layer (see \cref{fig:psi_layers}).
We build PSI protocols for two scenarios: (1) scenarios where the client set has a small constant size (e.g., document search queries which typically have less than 10 keywords); and (2)  scenarios  where set elements come from a small input domain (e.g., gender and age in a dating profile).
First, we assume semi-honest clients and build basic protocols. In \cref{sub:base-malicious} we secure our protocols against malicious clients.

We structure our single-set protocols following \cref{fig:single_psi_proto}.
The client generates a HE key pair $(\cpk, \csk) \leftarrow \fheKeyGen(\param)$ and sends the public key $\cpk$ to the server ahead of the protocol. Clients perform $\queryfunc$ and send an encrypted representation of their set to the server. The server runs a protocol-specific processing function $\serverfunc$ to obtain the result $M$. The protocol is either used as the first layer and passes $M$ into the second layer, or is stand-alone and returns $M$ to the client. The server optionally runs $\malcheckfunc$ to randomize the result of malicious queriers ($\encvar{R}$ is zero for honest queries). Finally, the client runs the protocol-specific function $\clientfunc$ to compute the result. 
We denote algorithms run by the \tcclient{client} in \tcclient{red} and by the \tcserver{server} in \tcserver{green}.
We use \textcolor{blue}{blue} to show the \emph{optional} \textcolor{blue}{server-side} protection against malicious clients.

\subsection{Small constant-size client set}
\label{sub:base-proto}
We start with scenarios where client sets are small and constant-size, typical for representing a search criteria. 
Clients use $\queryfunc$ to encrypt their set elements $x_i \in X$ as a query $Q$ and send it to the server.
Algorithm~\ref{alg:single_psi_algs} instantiates small input functions.

\para{PSI} The PSI protocol computes $\text{PSI}(X, Y_k) = X \cap Y_k$. The server uses the inclusion test $\fheisin$ (see \cref{sub:fhe}) to compute an inclusion status $\termisin{i}$ for each client element $x_i$ (see $\addvariant{PSI}{\serverfunc}$). An element $x_i$ is in the intersection if and only if the corresponding inclusion status $\termisin{i}$ is zero (recall that the inclusion test produces zero for values \emph{in} the set).
When used as a stand-alone protocol, the server returns the list of encrypted inclusion values $M$, which the client then decrypts (see $\addvariant{PSI}{\clientfunc}$).

\para{Cardinality}
The PSI cardinality protocols compute $\text{PSI-CA}(X, Y_k) \allowbreak= |X \cap Y_k|$. There exist two variants: the standard PSI-CA variant in which the \emph{client} learns the cardinality $|X \cap Y_k|$~\cite{CristofaroGT12, DebnathD15}, and the ePSI-CA variant in which the \emph{server} learns an \emph{encrypted} cardinality~\cite{0001C18}.
We focus on the latter to enable further computation on the intersection cardinality in the next layers.

Our ePSI-CA protocol (see $\addvariant{ePSI-CA}{\serverfunc}$) first computes the inclusion statuses $\termisin{i}$ using $\addvariant{PSI}{\serverfunc}$ and then uses
$\fheiszero$ to compute -- in the ciphertext domain -- the cardinality, i.e., the number of elements $\termisin{i}$ that are zero.
When used as a stand-alone protocol, the server returns $\enccardinality$ to the client which decrypts it to obtain the answer (see $\addvariant{ePSI-CA}{\clientfunc}$).

When the cardinality protocol is used as a stand-alone protocol without next layers, it is possible to mimic earlier work~\cite{CristofaroGT12} and construct a cardinality protocol from the above-mentioned naive PSI protocol by shuffling server responses $M$ before returning them.

\para{Efficiency}
 While literature often dismisses OPE-based schemes due to their `quadratic' total computation cost $\order(|X|\cdot|Y_k|)$, this approach excels in PCM scenarios with small client input.
Our protocols achieve client computation and communication costs of $\order(\clientsize)$, which is independent of the server's input size. While the `extra' burden for the server is linear in the size of the client set which is a small constant.

\begin{figure}[tb]
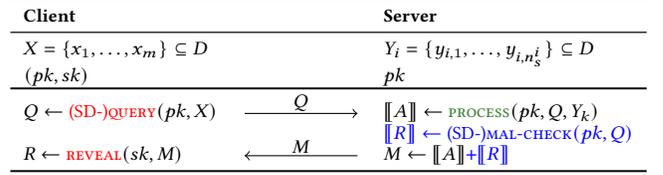

    \centering
    \footnotesize
    \begin{tabular}{l@{\hskip8pt}c@{\hskip8pt}l}
        \toprule
        \textbf{Client} & & \textbf{Server} \\
        \midrule
        $ X = \{x_1, \ldots, x_m\} \subseteq D $ & & $ Y_i = \{y_{i, 1}, \ldots, y_{i, \serversizekth}\} \subseteq D $ \\
        $ (\cpk, \csk) $ & & $\cpk$ \\
        \specialrule{\heavyrulewidth}{0pt}{5pt}
        $Q \gets \tcclient{\textsc{(SD-)}\queryfunc}(\cpk, X)$ &
            \diagramsend{Q }
           & $\encvar{A} \leftarrow \tcserver{\serverfunc}(\cpk, Q, Y_k)$ \\
            & & \textcolor{blue}{$\encvar{R} \leftarrow \textsc{(SD-)}\malcheckfunc(\cpk, Q)$} \\
        $R \leftarrow \tcclient{\clientfunc}(\csk, M)$ & \diagramrecv{M}  &  $M \leftarrow \encvar{A} \textcolor{blue}{+ \encvar{R}}$  \\
        \bottomrule
    \end{tabular}
    \vspace{-2ex}
    \caption{Single-set protocol structure. SD refers to small domain. \textcolor{blue}{Blue parts} (optional) protect against malicious clients.}
    \label{fig:single_psi_proto}
\end{figure}

\begin{algorithm}[tb]
    \footnotesize
    \caption{Single set procedures with \emph{small input size}.}
    \label{alg:single_psi_algs}
    \begin{algorithmic}
        \Function{\tcclient{$\queryfunc$}}{$\cpk, X$}
            \State $\encvar{{x_i}} \leftarrow \fheEnc(\cpk, x_i)$
            \State \Return  $Q = \varAsList{\encvar{{x_i}}}$
        \EndFunction
        \interalgspace

        \Function{\tcserver{$\addvariant{PSI}{\serverfunc}$}}{$\cpk, Q = \varAsList{\encvar{{x_i}}} , Y_k$}
            \State $\termisin{i} \leftarrow \fheisin(\cpk, \encvar{{x_i}}, Y_k)$
            \State \Return $M \leftarrow \varAsList{\termisin{i}}$
        \EndFunction
        \Function{\tcclient{$\addvariant{PSI}{\clientfunc}$}}{$\cpk,  M = \langle \termisin{i} \rangle$}
            \State \Return  $\{x_i \mid \fheDec(\csk, \termisin{i}) = 0\}$
        \EndFunction
        \interalgspace

        \Function{\tcserver{$\addvariant{ePSI-CA}{\serverfunc}$}}{$\cpk,  Q = \varAsList{\encvar{{x_i}}}, Y_k$}
            \State $\varAsList{\termisin{i}} \leftarrow \addvariant{PSI}{\serverfunc}(\cpk,  \langle \encvar{{x_i}} \rangle, Y_k)$
            \State \compiuteCardinality
            \State \Return $M \leftarrow \enccardinality $
        \EndFunction
        \Function{\tcclient{$\addvariant{ePSI-CA}{\clientfunc}$}}{$\cpk,  M = \enccardinality$}
            \State \Return  $\fheDec(\csk, \enccardinality)$
        \EndFunction
        \interalgspace

        \Function{\textcolor{blue}{$\malcheckfunc$}}{$\cpk, Q = \varAsList{\encvar{{x_i}}}$}
            \State  $\encvar{T} \leftarrow \prod_{i \in \NatNumUpTo{|Q|}, j \in \NatNumUpTo{i-1}} (\encvar{x_i}-\encvar{x_j})$
            \State $\encvar{R} \leftarrow r \cdot \fheiszero(\cpk, \encvar{T})$
            \State \Return $\encvar{R}$
        \EndFunction

    \end{algorithmic}
\end{algorithm}

\subsection{Small input domain}
\label{sub:small-domain}
When the set's input domain $D$ is small, sets can be efficiently represented and manipulated as bit-vectors~\cite{BayEAV21, RuanWMZ19, shimizu2015}. Parties agree on a fixed ordering $d_1, \ldots, d_{|D|}$ of elements of $D$.  
Then, clients use $\querysdfunc$ to compute a vector of encrypted inclusion statuses $\varAsList{\encvar{z_i}}$, where $z_i = 1$ iff $d_i \in X$ and 0 otherwise, for all $d_i$s in $D$.
We instantiate small domain procedures, except for $\clientfunc$ processes that are not impacted by the domain size, in Algorithm~\ref{alg:single_psi_algs_sd}.

\begin{algorithm}[tb]
    \footnotesize
    \caption{Single set procedures with \emph{small input domain}.}
    \label{alg:single_psi_algs_sd}
    \begin{algorithmic}
        \Function{\tcclient{$\querysdfunc$}}{$\cpk, X$}
            \State $z_i \leftarrow  (d_i \in X)$ 
            \State $\encvar{{z_i}} \leftarrow \fheEnc(\cpk, z_i)$ 
            \State \Return  $Q = \varAsList{\encvar{{z_i}}}$
        \EndFunction
        \interalgspace

        \Function{$\tcserver{\addvariant{PSI-SD}{\serverfunc}}$}{$\cpk, Q = \varAsList{\encvar{z_i}}, Y_k$}
            \State $v_i \leftarrow (d_i \in Y_k)$
            \State $\termisin{i} \leftarrow \encvar{z_i} \cdot v_i$
            \State \Return $M \leftarrow \varAsList{\termisin{i}}$
        \EndFunction
        \interalgspace
        
        \Function{$\tcserver{\addvariant{ePSI-CA-SD}{\serverfunc}}$}{$\cpk,   Q = \varAsList{\encvar{z_i}}, Y_k$}
            \State $\varAsList{\termisin{i}} \leftarrow \addvariant{PSI-SD}{\serverfunc}(\cpk,  \langle \encvar{{z_i}} \rangle, Y_k)$
            \State $\enccardinality \leftarrow  \sum_{d_i \in D} \termisin{i}$
            \State \Return $M \leftarrow \enccardinality $
        \EndFunction
        \interalgspace

        \Function{\textcolor{blue}{$\malchecksdfunc$}}{$\cpk, Q = \varAsList{\encvar{{z_i}}}$}
            \State  $\encvar{t_i} \leftarrow \fheisin(\cpk, \encvar{z_d}, \{0, 1\})$
            \State $\encvar{R} \leftarrow \sum_{d \in D} \encvar{t_i}$
            \State \Return $\encvar{R}$
        \EndFunction
    \end{algorithmic}
\end{algorithm}

The function $\addvariant{PSI-SD}{\serverfunc}$ creates a bit vector of the intersection. 
The status $\termisin{i}$ is an encryption of 1 if $d_i$ is present in both sets and 0 otherwise.
To do so, the server multiplies the indicator $\encvar{z_i}$ with another binary indicator $v_i$ determining whether the element $d_i$ is present in the server set $Y_k$. 
The function $\addvariant{ePSI-CA-SD}{\serverfunc}$ computes the sum of the inclusion statuses $\termisin{i}$ for all domain values $d_i \in D$. 
The functions $\queryfunc$, $\malcheckfunc$, and base layer $\serverfunc{}$ must have the same domain size. The rest of the functions and layers are not impacted by the choice of domain.

\para{Efficiency}
While the idea of representing sets as bit-vectors is not new~\cite{BayEAV21, RuanWMZ19, shimizu2015}, existing works dismiss FHE protocols as too costly and focus on additively homomorphic solutions.
We use the inherent parallelism of schemes such as BFV~\cite{FanV12}, that we discuss in \cref{sec:practical}, to achieve lower computation and communication costs, especially in the many-set scenario. In \cref{ap:sd-bench}, we show that PSI-CA-SD has a competitive performance to the existing schemes such as Ruan et al.~\cite{RuanWMZ19} and provides better privacy.

\subsection{Malicious queries}\label{sub:base-malicious}
The previous sections assumed semi-honest clients that perform $\textsc{(SD-)}\queryfunc$ correctly. Now, we allow misbehaving clients and provide protocols that protect servers against malicious queries. More specifically, we ensure that each query maps to a \emph{valid set} $X$, under the assumption that each ciphertext either decrypts to a unique scalar plaintext value or the decryption fails.\footnote{We discuss issues arising from using HE in practice, and how they limit the security of our implementation in \cref{ap:circuit-priv}.}

Malicious clients can deviate from the protocol to learn more than permitted.
Since the query is encrypted, the server cannot detect this misbehavior. 
Zero-knowledge proofs are not practical in the FHE setting, so we rely on an HE technique to ensure honest behavior from malicious clients. 
The server uses $\malcheckbothfunc$ to compute a randomizer term $\encvar{R}$ that is random in $\Zq$ if the client misbehaves and 0 otherwise. 
By adding $\encvar{R}$ to the result $\encvar{A}$, misbehaving clients learn nothing about the real result. 
With abuse of notation, the server adds a vector of fresh randomizers when the result is a vector such as the output of \addvariant{PSI-SD}{\serverfunc}.
The server can amortize the cost of computing $N$ randomizers $\encvar{R_i}$ for $N$ variables $i$: The server first computes $\encvar{R}$ as before, then picks a fresh randomness $\delta_i \randin \Zq^*$ and sets $\encvar{R_i} \leftarrow \delta_i \cdot \encvar{R}$.\looseness=-1

\parasec{Small domain}
Malicious clients can submit non-binary ciphertexts $\encvar{z_i}$ to learn more than the cardinality.
We compute a term $\encvar{R}$ which is zero when all $z_i$s are binary, and is random otherwise.
The term $\encvar{t_i} \leftarrow \fheisin(\cpk, \encvar{z_i}, \{0, 1\})$ evaluates to 0 if $z_i \in \{0, 1\}$ and to a uniformly random element in $\Zq^*$ otherwise. Therefore, the distribution of $\encvar{R} \leftarrow \sum_{d \in D} \encvar{t_i}$ will be close to uniformly random in $\Zq$ as long as at least one non-binary $z_i$ exists in the client's query. See Appendix~\ref{ap:uniform-sum} for the exact distribution.

\parasec{Small input} 
The client sends a list of encrypted values $\encvar{x_i}$ to the server. This list represents a set as long as all elements are distinct, so the server needs to ensure that no two client elements are equal.
First, the server computes $\encvar{T}$, the product of pairwise differences of the client elements. Since these multiplications are performed in a prime group $\Zq^*$, the product will be zero if and only if two equal elements exist. Second, the server uses a uniformly random element $r \gets \Zq$ in combination with the zero detection on $\encvar{T}$ to compute the additive randomizer $\encvar{R}$.

Since the zero detection function is impractical, we provide a practical alternative protection method which deviates from our structure. The client can deterministically compute $T$, allowing us to protect the result in a multiplicative way by returning $M \gets \encvar{A} \cdot \encvar{T}$. As long as $T$ is not zero, which signifies a malicious query, the client can reverse $T$ and recover $A \gets M \cdot T^{-1}$. 

Without this protection, malicious clients are still limited to submitting a list of scalar values, with a limited size, which is equivalent to a \emph{multi-set}. Depending on the flexible matching function, allowing multi-set queries may or may not impact the security. On one hand, if the server reveals PSI-CA, allowing multi-sets may lead to the extraction of the intersection from the cardinality. On the other hand, if the server is computing F-Match, which checks $X \subseteq Y_k$, there is no difference between querying a set or a multi-set.

At first glance, $\malcheckfunc$'s  quadratic computation cost $\order(\clientsize^2)$  seems expensive. However, we target imbalanced scenarios where $\clientsize \ll \totalserversize$. In \cref{sub:asym-cost}, we discuss how this optional protection cost will be overshadowed by the cost of our PSI layer $\order(\clientsize\cdot \totalserversize)$; thus, not having any impact on our final cost.

\parasec{Next layers}
We note that the protection against malicious queries in the PSI layer extends to the matching and aggregation layers as they rely on the PSI layer to process the query. Furthermore, the randomizer $\encvar{R}$ can be applied in any layer.

\section{Matching Layer}
\label{sec:psm-layer}
Given the output of the PSI layer, the Matching layer determines whether the server set $Y_k$ is of interest to the client. The Matching layer outputs a matching status $\psmresp$. Similar to the inclusion test, $\fheisin$ in \cref{sub:fhe}, $\gamma_k$ is zero for sets of interest and a random value in $\Zq^*$ otherwise. The value $\psmresp$ can be revealed as a binary output $\lambda_k$ or passed-on to the next layer.

These matching operations can use either the small input or the small domain PSI layer protocols.
To instantiate a matching protocol, the client and the server proceed as in \cref{fig:single_psi_proto}, but plug in the desired $\serverfunc$ variant based on $f_M$.
As $\psmserverfunc$ functions have identical outputs, they share the same ${\psmclientfunc}$ method. 

We provide three matching functions $f_M$: full matching (F-Match), which determines if the client's query set is fully contained in the server set; threshold matching (Th-Match), which determines if the size of intersection exceeds a threshold; and Tversky matching (Tv-Match), which determines if the Tversky similarity between the client's and the server's set exceeds a threshold. The associated $\serverfunc$ variants are described in Algorithm~\ref{alg:single_psm_algs}.

\begin{algorithm}[tb]
    \footnotesize
    \caption{Processing matching variants.
     }
    \label{alg:single_psm_algs}
    \begin{algorithmic}
        \Function{$\tcserver{\addvariant{F}{\psmserverfunc}}$}{$\cpk,  \clientquery, Y_k$}
            \Comment Full match
            \State $\varAsList{\termisin{i}} \leftarrow \addvariant{PSI}{\serverfunc}(\cpk,  \clientquery, Y_k)$
            \State $\psmresp \leftarrow \sum_{i \in \NatNumUpTo{n}} \termisin{i}$
            \State \Return $\psmresp$
        \EndFunction
        \interalgspace

        \Function{$\tcserver{\addvariant{Th}{\psmserverfunc}}$}{$\cpk, \clientquery, Y_k, \auxdata = t_{\text{min}}$}
            \Comment Threshold
            \State $\enccardinality \leftarrow \addvariant{ePSI-CA}{\serverfunc}(\cpk,  \clientquery, Y_k)$
            \State $T \leftarrow \{t \mid t \in \Zq,  \tmin \leq t \leq \min(|Q|, |Y_k|) \}$
            \Comment Threshold to set
            \State $\psmresp \leftarrow \fheisin(\cpk, \enccardinality, T)$
            \State \Return $\psmresp$
        \EndFunction
        \interalgspace

        \Function{$\tcserver{\addvariant{Tv}{\psmserverfunc}}$}{$\cpk, \clientquery, Y_k, \auxdata = (t, \alpha, \beta)$}
            \Comment Tversky

            \State $\enccardinality \leftarrow \addvariant{ePSI-CA}{\serverfunc}(\cpk,  \clientquery, Y_k)$

            \State $(a, b, c) \leftarrow \tverskyParamTrans(\alpha, \beta, t)$

            \State $T = \{t \mid t \in \Zq,  0 \leq t \leq (a-b-c)|Y_k|\}$
            \State $\encvar{\textsf{sizeX}} \gets \left\{
              \begin{array}{@{}l@{\hskip4pt}l}
                \encvar{|Q|} & \textrm{For small input size variants} \\
                \sum_i \encvar{z_i} & \textrm{For small domain var. with $Q = \varAsList{\encvar{z_i}}$}
              \end{array}
            \right.$
            \State $\encvar{Tv} \leftarrow a \cdot \enccardinality - b \encvar{\textsf{sizeX}} - c|Y_k|$
            \State $\psmresp \leftarrow \fheisin(\cpk, \encvar{Tv}, T)$
            \State \Return $\psmresp$
        \EndFunction
        \interalgspace

        \Function{$\tcclient{{\psmclientfunc}}$}{$\csk,  M = \psmresp$}
            \Comment Reveal matching output
            \State  $\gamma_k \leftarrow \fheDec(\csk, \psmresp) $
            \State $\lambda_k \leftarrow \mathbbm{1}[\lambda_k = 0] $
            \State \Return $\lambda_k$
        \EndFunction
    \end{algorithmic}
\end{algorithm}

\para{Full matching} The F-Match variant determines if all the client elements are inside the server's set, i.e., $f_M(X, Y_k) = funcind{X \subseteq Y_k}$.  The server first computes the inclusion statuses $\termisin{i}$ by calling $\addvariant{PSI}{\serverfunc}$ (see $\addvariant{F}{\psmserverfunc}$). Recall $\termisin{i}$ is zero when $x_i \in Y_k$.
Therefore, when $X \subseteq Y_k$, the sum $\psmresp$ of all $\termisin{i}$ is zero. When an element $x_i$ is not in the server set, its inclusion status $s_i$ is uniformly random in $\Zq^*$, and therefore the sum $\lambda_k$ is also random.

The F-Match protocol has a small false-positive probability when more than one $x_i$ exists such that $x_i \notin Y_k$. Adding multiple random $\termisin{i} \in \Zq^*$ PSI responses can, incorrectly, lead to a zero sum $\psmresp$. In Appendix~\ref{ap:uniform-sum}, we bound the probability of a false-positive to $1/(q-1)$. Moreover, we bound the difference between the distribution of the sum $R$ and uniformly random over $\Zq$ to $1/{(q-1)^2}$  at all points when more than one $x_i$ is missing. The false-positive probability is zero when only one $x_i$ is missing.

Note that the F-Match protocol computes containment and not equality. Thus, hashing and comparing the client set with server's set does not work as the server would need to hash every combination of $|X|$ server set elements resulting in an exponential cost.

\para{Threshold matching} The Th-Match variant determines if the two sets have at least $\tmin$ elements in common, i.e., $f_M(X, Y_k; \tmin)=\funcind{|X \cap Y_k| \geq \tmin}$. The server first computes the encrypted cardinality $\enccardinality$ using $\addvariant{ePSI-CA}{\serverfunc}$, then evaluates the inequality $\cardinality \geq \tmin$ (see $\addvariant{Th}{\psmserverfunc}$).

Directly computing this one-sided inequality over encrypted values is costly. However, we know that
$|X\cap Y_k| \leq \min(|Q|, |Y_k|) = t_{max}$, where we bound the client set size by the size of the query.
The server evaluates the inequality $\tmin \leq \cardinality \leq \tmax$ by performing the inclusion test $\cardinality \in \{ t \mid \tmin \leq t \leq \tmax \}$ using $\fheisin$.

\para{Tversky similarity} The Tv-Match variant determines if the Tversky similarity of the two sets exceeds a threshold $t$. Formally, the protocol computes $f_M(X, Y_k; \alpha, \beta, t) = \funcind{\tverskyfull(X, Y_k) \geq t}$ where
\begin{equation*}
\tverskyfull(X, Y_k) = \frac{|X \cap Y_k|}{|X \cap Y_k| + \alpha|X - Y_k| + \beta|Y_k - X|}
\end{equation*}
is the Tversky similarity with parameters $\alpha$ and $\beta$. Computing the Tversky
similarity in this form is difficult as it requires floating-point operations. We assume $t, \alpha,$ and $\beta$ are rational, and known to both the client and the server. 
We follow the approach of Shimizu et al.~\cite{shimizu2015} and transform the inequality $\tverskyfull(X,Y) \geq t$ to 
\begin{align*}
      (t^{-1}-1+\alpha+\beta)|X \cap Y| - \alpha|X| - \beta|Y| & \geq 0 \\
      \tag{1} \Rightarrow (a, b, c) \in \Zq^3, \quad a|X \cap Y| - b|X| - c|Y| & \geq 0 \label{eq:tv-base}
\end{align*}
for appropriate integer values of $a, b, c$. The server either knows $|X|$ ($|X| = |Q|$) or can compute it ($\encvar{|X|} = \sum_i \encvar{{z_i}}$ for small domain protocols). The server also knows $|Y_k|$ and can compute $\encvar{\cardinality} = |X \cap Y_k|$ using $\addvariant{ePSI-CA}{\serverfunc}$.
Evaluating the inequality requires two steps (see $\addvariant{Tv}{\psmserverfunc}$):

\noindent \textbf{Step 1}. Transform coefficients $(t^{-1}-1+\alpha+\beta)$, $\alpha$ and $\beta$ to equivalent integer coefficients $a, b, c$. We describe this in detail in Appendix~\ref{ap:tversky}. 

\noindent \textbf{Step 2}. Evaluate the Tversky similarity inequality~\eqref{eq:tv-base}. We convert this inequality to a two-sided equation. We know $|X| \geq |X \cap Y_k|$ and $|Y_k| \geq |X \cap Y_k|$, thus 
\begin{equation*}
a|X \cap Y_k| - b|X| - c|Y_k| \leq (a-b-c)|X \cap Y_k| \leq  (a-b-c)|Y_k|.
\end{equation*}
Therefore, two sets $X$ and $Y_k$ satisfy $\tverskyfull(X, Y_k) \geq t$ iff
\begin{equation*}
  0 \leq a\enccardinality - b|X| - c|Y_k| \leq (a-b-c)|Y_k|.
\end{equation*}
The server uses an inclusion test to evaluate this inequality.

\section{Aggregation Layer}
\label{sec:aggregation-layer}
The aggregation layer combines the outputs of the matching layer, over many sets, into a single collection-wide result. We provide four aggregation functions $f_A$: naive aggregation (NA-\agg), which returns outputs as is; existential search (X-\agg), which returns whether at least one server set matched; cardinality search (CA-\agg) which returns the number of matching server sets; and retrieval (Ret-\agg) which returns the index of the $\retIndex$'th matching server set.

\Cref{fig:ms-proto} shows the structure of the aggregation protocols.
Upon receiving the query $Q$, the server runs the desired matching protocol $\psmvar$ (e.g., one from Algorithm~\ref{alg:single_psm_algs}) on each of its sets to compute the matching output $\msPSMans{j}$. The response $\msPSMans{j}$ is zero if the set $Y_j$ is interesting for the client and random otherwise. The server next runs an aggregation function $\msfunc$ that takes $N$ matching responses $\msPSMans{j}$ as input and computes the final result $A$. 
We show  how to instantiate $\msfunc$ and $\msclientfunc$ in Algorithm~\ref{alg:ms_algs}. 
If using naive aggregation (NA-\agg), the client runs $\addvariant{NA}{\msclientfunc}$. Otherwise,  the client runs ${\msclientfunc}$ to compute the result.
Finally, the server can run $\malcheckbothfunc$ on the query and apply it to the final result $A$ to protect against malicious clients.

\begin{figure}[tb]
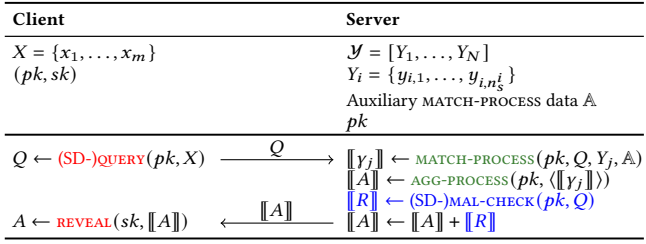

    \centering
    \footnotesize
    \begin{tabular}{@{\hskip3pt}l@{\hskip3pt}c@{\hskip3pt}l@{}}
        \toprule
        \textbf{Client} & & \textbf{Server} \\
        \midrule
        $ X = \{x_1, \ldots, x_m\} $ & & $ \collec=[Y_1, \ldots, Y_\serversetnum]  $ \\
      $ (\cpk, \csk)  $ &  & $Y_i = \{y_{i,1}, \ldots, y_{i,\serversizekth}\}$ \\
         & & Auxiliary $\psmvar$ data $\auxdata$ \\
         & & $\cpk$ \\
         \specialrule{\heavyrulewidth}{0pt}{5pt}
         $Q \gets \tcclient{\textsc{(SD-)}\queryfunc}(\cpk, X)$ &
             \diagramsend{Q } & $\msPSMans{j} \leftarrow \tcserver{\psmvar}(\cpk, Q, Y_j, \auxdata)$ \\
        &   & $\msresponse \leftarrow \tcserver{\msfunc}(\cpk,  \varAsList{\msPSMans{j}})$ \\
        & & \textcolor{blue}{$\encvar{R} \leftarrow \textsc{(SD-)}\malcheckfunc(\cpk, Q)$} \\
        $A \leftarrow \tcclient{\msclientfunc}(\csk, \msresponse)$ & \diagramrecv{\msresponse}  & $\encvar{A} \gets \encvar{A} + \textcolor{blue}{\encvar{R}}$ \\
        \bottomrule
    \end{tabular}
    \caption{Aggregation protocol structure.}
    \label{fig:ms-proto}
\end{figure}

\begin{algorithm}[tb]
    \footnotesize
    \caption{Processing $f_M$.}
    \label{alg:ms_algs}
    \begin{algorithmic}
        \Function{$\tcserver{\addvariant{NA}{\msfunc}}$}{$\cpk, \varAsList{\msPSMans{j}}$}
            \State \Return $\varAsList{\msPSMans{j}}$
        \EndFunction

        \Function{$\tcclient{\addvariant{NA}{\msclientfunc}}$}{$\csk,  M = \varAsList{\msPSMans{j}}$}
            \Comment Naive output
            \State  $\textsf{match}_j \leftarrow {\psmclientfunc}(\csk,  \msPSMans{j})$
            \State \Return $\varAsList{\textsf{match}_j }$
        \EndFunction
        \interalgspace

        \Function{$\tcserver{\addvariant{X}{\msfunc}}$}{$\cpk, \varAsList{\msPSMans{j}}$}
            \Comment At least one match
            \State $\msresponse \leftarrow \prod_{j \in \NatNumUpTo{\serversetnum}} \msPSMans{j}$
            \State \Return $\msresponse$
        \EndFunction
        \interalgspace

        \Function{$\tcserver{\addvariant{CA}{\msfunc}}$}{$\cpk, \varAsList{\msPSMans{j}}$}
            \Comment Number of matches
            \State $\msZeroAns{j} \leftarrow \fheiszero(\cpk, \msPSMans{j})$
            \State $\msresponse \leftarrow \sum_{j \in \NatNumUpTo{\serversetnum}} \msZeroAns{j}$
            \State \Return $\msresponse$
        \EndFunction
        \interalgspace

        \Function{$\tcserver{\addvariant{Ret}{\msfunc}}$}{$\cpk, \varAsList{\msPSMans{j}}, \retauxdata, \retIndex$} \Comment Retrieve data for the $\retIndex$'th match
            \State $\msZeroAns{j} \leftarrow \fheiszero(\cpk, \msPSMans{j})$
            \State $\retCtr{0} \leftarrow \encvar{0} $
            \State $\retCtr{j} \leftarrow \retCtr{j-1} + \msZeroAns{j} $
            \Comment \#matches before $j$'th set
            \State $\encvar{I_j} \leftarrow \fheiszero(\cpk, \retCtr{j} \cdot
            \msZeroAns{j} - \retIndex) $
            \Comment Create index
            \State $\msresponse \leftarrow \sum_{j \in \NatNumUpTo{\serversetnum}} \retauxdata_j  \cdot \encvar{I_j}$
            \State \Return $\msresponse$
        \EndFunction
        \interalgspace

        \Function{$\tcclient{{\msclientfunc}}$}{$\csk,  M = \msresponse$}
            \State \Return $\fheDec(\csk, \msresponse)$
        \EndFunction

    \end{algorithmic}
\end{algorithm}

\para{Naive aggregation (NA-\agg)} The naive variant runs the matching protocol on all $N$ server sets and returns $N$ results to the client, i.e., $f_A(\lambda_1, ...\lambda_N)=\lambda_1, ...\lambda_N$. This enables our framework to support many-sets when there is no need for aggregation and reduces the client's cost by computing and sending the query $Q$ only once.

\para{Existential search (X-\agg)} The existential search variant determines if at least one  server set $Y_j$ is of interest to the client. Formally, the aggregation computes $f_A(\lambda_1, ...\lambda_N)= \funcind{\exists i \, | \lambda_i=1}$.
Recall that interesting sets produce zero matching responses $\msPSMans{j}$, so a collection will have an interesting set if and only if the product of matching responses is zero (see $\addvariant{X}{\msfunc}$). 
As responses $\gamma_j$ are elements of the prime field $\Zq$, their product will never be zero without having a zero response; thus, there will be no false-positives. \looseness=-1

\para{Cardinality search (CA-\agg)} The cardinality search variant counts the number of interesting server sets $Y_j$, i.e., $f_A(\lambda_1, ..., \lambda_N) = |\{i \, | \lambda_i = 1\}|$.
This aggregation (see $\addvariant{CA}{\msfunc}$) follows the same process as ePSI-CA and uses $\fheiszero$ to turn the matching responses $\msPSMans{j}$ into binary values $b_j$ and computes their sum.

Similar to the single set PSI-CA, we can use shuffling to convert the naive aggregation into cardinality search with minimal computational overhead. This gain comes at the cost of increased communication as the protocol needs to send the $\serversetnum$ shuffled set responses to the client instead of a single encrypted cardinality.

\para{Retrieval (Ret-\agg)} The  retrieval variant returns associated data $\retauxdata_j$ of the $\retIndex$th matching server set $Y_j$. Formally, 
$f_A(\lambda_1, ...\lambda_N)=\retauxdata_j \, | (\lambda_j = 1) \land ( |\{i \, |\lambda_i = 1 \land i < j\}| = \retIndex)$. Clients use this variant when they are not concerned about whether a matching set exists, but rather about information related to this matching set -- such as an index ($\retauxdata_j = j$) for retrieving records. A good example is the matching scenario where apps want to retrieve a lot of data about the matching records. Apps would first run the Ret-\agg protocol to retrieve the index of the matching record and then follow with a PIR request to retrieve the matching set's associated data.

The Ret-\agg protocol takes an input parameter $\retIndex$ denoting that the client wants to retrieve the associated data of the $\retIndex$th matching set. The server builds an encrypted index of interesting sets in three steps (see $\addvariant{Ret}{\msfunc}$): (1) The server uses $\fheiszero$ to compute $\msZeroAns{j}$ indicating if a set is interesting. (2) The server computes a counter $\retCtr{j}$ to track how many interesting sets exist in the first $j$ sets. (3) The server combines $\msZeroAns{j}$ and $\retCtr{j}$ to compute an index $\encvar{I_j}$ where $I_j$ is 1 if $Y_j$ is the $\retIndex$th interesting set, and zero otherwise. Adding weighted $\encvar{I_j}$ values produces the result. \looseness=-1

\section{Security and Privacy}
\label{sec:priv}
In \cref{sub:formal-def} we define correctness (\cref{def:correctness}), client privacy (\cref{def:cl-priv}), and server privacy (\cref{def:sv-priv}) of PCM protocols. 
Now we discuss the security of our framework in the following 3 threat models: semi-honest, malicious client, and malicious server. Our theorems rely on properties of the HE schemes, which are all defined formally in \cref{ap:circuit-priv}. 

\parasec{Semi-honest} When both parties are semi-honest, our protocols achieve all three security properties.

\begin{theorem}
    \label{thm:semi-honest}
    Our protocols are correct, client private, and server private against semi-honest adversaries as long as the HE scheme is IND-CPA secure and circuit private.
\end{theorem}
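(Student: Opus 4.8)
The plan is to prove the three properties separately, following the layered structure of the framework and reducing each to standard properties of the HE scheme.

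\textbf{Correctness.} I would argue by induction on the layer composition, working outward from the PSI layer. First, for each PSI-layer protocol I would check that the server's processing function followed by the client's reveal function computes the stated PSI functionality on plaintexts; this reduces to correctness of HE addition, multiplication, scalar operations, and of the core routines $\fheiszero$ and $\fheisin$ (the latter relying on the fact that $\Zq$ is a field, so a product of non-zero elements is non-zero, and that the masking factor $r \randin \Zq^*$ does not change the zero/non-zero status). For the Matching layer I would verify that each $\psmserverfunc$ variant (F-Match, Th-Match, Tv-Match) outputs $\psmresp$ that is zero exactly when $f_M(X,Y_k)=1$ — for F-Match this uses the already-established bound on the false-positive probability $1/(q-1)$ from Appendix~\ref{ap:uniform-sum}, so correctness here is up to negligible error; for Th-Match and Tv-Match it uses the two-sided interval reformulations of the inequalities shown in \cref{sec:psm-layer}. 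Finally, for the Aggregation layer I would check each $\msfunc$ variant against its $f_A$: X-\agg uses that the product of field elements is zero iff some factor is, CA-\agg and Ret-\agg use $\fheiszero$ to binarize, and Ret-\agg additionally uses the counter construction. Composing, the client output equals $f_A(f_M(X,Y_1),\dots,f_M(X,Y_N))$ except with probability bounded by the sum of the per-layer false-positive terms, which is negligible; I would note explicitly that in the semi-honest model the optional $\malcheckfunc$ randomizer is $\encvar{R}=\encvar{0}$ so it does not affect the output.

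\textbf{Client privacy.} This is the easiest part: the server's entire view consists of $\cpk$ and the ciphertext query $Q = \langle \fheEnc(\cpk,x_i) \rangle$ (of fixed length equal to the maximum client set size), and everything the server computes afterwards is a deterministic/randomized function of its own inputs and of $Q$. I would build a simulator that, knowing only the maximum client-set size, outputs $\cpk$ together with encryptions of zeros (or any fixed plaintexts) of the right length, and argue that the real and simulated views are computationally indistinguishable by a standard hybrid argument over the coordinates of $Q$, each step using IND-CPA security of the HE scheme. Since the protocol is non-interactive from the server's side after receiving $Q$, there is nothing else in the server's view to simulate.

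\textbf{Server privacy.} Here I would construct a simulator that, given only the number of server sets $N$, the maximum server set size, and the explicit output $A$, produces a view computationally indistinguishable from the client's real view. The client's view is $(\csk,\cpk,Q,\msresponse)$, and the only nontrivial object is the returned ciphertext(s) $\msresponse$. The key tool is \emph{circuit privacy} of the HE scheme: the server's output ciphertext reveals nothing about the server's computation beyond the underlying plaintext, so it can be replaced by a fresh encryption of that plaintext. Thus the simulator decrypts nothing of the server's — it simply encrypts the known output $A$ (for X-\agg and CA-\agg a single value determined by $A$; for NA-\agg a length-$N$ vector; for Ret-\agg the value $\retauxdata_\kappa$) under the honestly generated $\cpk$, and outputs $(\csk,\cpk,Q,\fheEnc(\cpk,A))$. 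Indistinguishability follows from circuit privacy in one hybrid step; I would also remark that the intermediate masking (the $r\randin\Zq^*$ factors in $\fheisin$ and the per-collection $\gamma_j$ being uniform in $\Zq^*$ on non-matches) is what makes the plaintext underlying $\msresponse$ a function of $A$ alone rather than of finer information, so that even a client who inspects plaintexts — not just ciphertexts — learns only $A$.

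\textbf{Main obstacle.} The routine reductions to IND-CPA (client privacy) and to plaintext-correctness (correctness) are straightforward. The delicate point is server privacy: one must be careful that circuit privacy of the HE scheme is really enough, i.e., that the plaintext carried by every ciphertext sent back to the client is a function only of the allowed leakage ($N$, max server-set size, and $A$) and not, say, of which particular set matched or of intersection cardinalities. This requires checking each Matching- and Aggregation-layer variant to confirm the masking (random $\Zq^*$ multipliers, summed random field elements) genuinely destroys all side information before the value is aggregated, and that the false-positive analysis of Appendix~\ref{ap:uniform-sum} bounds not just correctness error but also the statistical distance of these masked values from the ideal distribution. I expect the bulk of the proof effort to go into this per-variant bookkeeping, with the cryptographic core being a short hybrid argument invoking IND-CPA and circuit privacy.
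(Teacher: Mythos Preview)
Your proposal is correct and follows the same route as the paper: correctness by per-layer inspection, client privacy by an IND-CPA simulation of the query ciphertexts, and server privacy by simulating the returned ciphertext via circuit privacy after arguing that its underlying plaintext is a (randomized) function of $A$ alone. The only imprecision is that for the binary-output variants the simulator should not encrypt $A$ itself but rather $0$ on a match and a fresh uniform $\Zq^*$ element on a non-match (the paper writes this as $\fheEnc(\cpk,\, r\cdot A)$ with $r\randin\Zq^*$); you clearly anticipate exactly this in your ``Main obstacle'' paragraph, so it is a notational slip rather than a gap.
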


We simulate our protocols in a real-world/ideal-world setting to prove \cref{thm:semi-honest} in \cref{ap:simulation}.
 
\parasec{Malicious server} When the server is malicious, server privacy is not applicable and we do \emph{not} provide a correctness guarantee to clients.
Malicious servers can produce arbitrary, fixed (always match/no match), or input-dependent responses to the clients without detection. 
However, servers cannot learn information about the clients' private input even when they misbehave.

\begin{theorem}
    \label{thm:client-priv}
    Our protocols provide client privacy against malicious servers, as long as the HE scheme is IND-CPA.
\end{theorem}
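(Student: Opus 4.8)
The plan is to show that the view of a malicious server in a protocol execution can be simulated without access to the client's private set $X$, relying solely on the IND-CPA security of the HE scheme. The key observation is that in every protocol of our framework (single-set, matching, and aggregation), the only messages the client sends to the server are (i) the public key $\cpk$, sent ahead of time and independent of $X$, and (ii) the query $Q$ produced by $\textsc{(SD-)}\queryfunc$, which consists entirely of fresh HE ciphertexts encrypting either the client's elements $x_i$ (small-input variant) or the binary indicators $z_i$ (small-domain variant). A malicious server receives nothing else from the client: the client's final step $\clientfunc$ (or $\msclientfunc$) only performs local decryption and post-processing and produces no outgoing message. Therefore the server's view is entirely determined by $\cpk$ and a list of $|Q|$ ciphertexts whose length depends only on the public set-size bound (for small input) or on $|D|$ (for small domain) --- in both cases a quantity the server already knows.

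First I would construct the simulator $\sims{\text{server}}$: on input the public parameters, the server's own inputs $(\collec, \auxdata, \ldots)$, and the public bound on $|X|$ (equivalently $|D|$), it runs $\fheKeyGen(\param)$ honestly to obtain $(\cpk,\csk)$, then forms a dummy query $\widehat{Q}$ by encrypting, say, the all-zero set (or any fixed valid set of the right size) under $\cpk$, and hands $(\cpk,\widehat{Q})$ to the malicious server, letting the server run its arbitrary code. Second, I would argue via a hybrid/reduction that $\widehat{Q}$ is computationally indistinguishable from the real query $Q$: any distinguisher between the real view and the simulated view yields an IND-CPA adversary against the HE scheme, by the standard hybrid argument that replaces the ciphertexts one at a time (there are at most $|Q|$ hybrids, and $|Q|$ is polynomial in the security parameter). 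Since the public key and the ciphertext count are identical in both worlds, and the server's subsequent behavior is a deterministic-plus-randomness function of its view, the two output distributions are computationally close. Third, I would note that this argument is insensitive to whatever the server does --- it may deviate arbitrarily, produce malformed ciphertexts, or abort --- because the server's deviations only affect the message $M$ sent \emph{back} to the client, which never re-enters the server's view; hence no interaction leaks anything beyond $\cpk$ and $Q$.

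The main obstacle, and the only subtlety, is making precise that the client sends \emph{no} message after receiving $M$: one must verify across all protocol variants in Figures~\ref{fig:single_psi_proto} and~\ref{fig:ms-proto} that $\clientfunc$, ${\psmclientfunc}$, $\msclientfunc$, and the retrieval post-processing are purely local (in particular that the retrieval variant's follow-up PIR request, if used, is treated as a separate protocol outside the PCM functionality, or is itself client-private). Granting that, client privacy reduces cleanly to the semantic security of the ciphertexts in $Q$, with no need for circuit privacy of the HE scheme (that property is only needed to protect the \emph{server}, and hence is irrelevant here) --- which is exactly why the hypothesis of Theorem~\ref{thm:client-priv} is weaker than that of Theorem~\ref{thm:semi-honest}. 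I would place the formal reduction and the per-variant check in the appendix alongside the semi-honest simulation proof.
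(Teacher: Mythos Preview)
Your proposal is correct and rests on the same core observation as the paper: the only client-to-server message is the public key plus a vector of fresh HE ciphertexts, so client privacy reduces directly to the semantic security of the HE scheme, with circuit privacy playing no role.

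The paper packages the argument slightly differently. Rather than a real/ideal simulation with a dummy-input simulator and a hybrid over the $|Q|$ ciphertexts, the paper first restates client privacy as an IND-style distinguishing game (the adversarial server outputs two equal-size sets $X_0,X_1$, receives $Q_b$, and must guess $b$) and then gives a \emph{direct} reduction to a single IND-CPA instance, exploiting the fact that with batching the entire query fits in one ciphertext. This yields a tight reduction (the paper remarks that $k$ ciphertexts would cost only a $k$-fold loss, which is morally your hybrid argument). Your simulation framing is closer to the standard secure-computation paradigm and makes the ``server may deviate arbitrarily'' point explicit, while the paper's game-based framing is more compact and avoids constructing a simulator at all. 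Either presentation is fine; the underlying idea is identical.
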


We prove \cref{thm:client-priv} by reducing our client privacy to our HE scheme's IND-CPA security in \cref{ap:mal-server}. The client's sole interaction is sending encrypted queries. The server cannot extract information from these encrypted queries without knowing the secret key or breaking the security of the HE scheme.

\parasec{Malicious client} When the client is malicious, client privacy is not defined. As the server receives no output in our protocols, correctness is not applicable either. Therefore, we only discuss server privacy in this setting.

\begin{theorem}
    \label{thm:server-priv}
    Our protocols provide server privacy against malicious clients if the HE scheme is IND-CPA and strongly input private.
\end{theorem}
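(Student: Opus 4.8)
The plan is to prove \cref{thm:server-priv} by a simulation argument in the real/ideal paradigm, reusing the skeleton of the proof of \cref{thm:semi-honest} (given in \cref{ap:simulation}) but now allowing the environment to corrupt the client and have it run an arbitrary, possibly malformed, strategy. Concretely, I would construct a simulator that interacts with the ideal PCM functionality of \cref{def:PCM} and with the corrupted client, and that reproduces the client's view using only the information permitted by \cref{def:sv-priv}: the number of server sets $\serversetnum$, the maximum server-set size, and the explicit output $A$. The protocols in \cref{fig:single_psi_proto} and \cref{fig:ms-proto} have the client send exactly one message $Q$ and receive exactly one message $M = \encvar{A} \fheaddop \encvar{R}$ (a vector thereof, or $\encvar{A}\fhemultop\encvar{T}$ in the multiplicative variant), so the simulator's job reduces to three sub-tasks: extract an effective input set $\hat X$ from the adversarial $Q$, call the ideal functionality on $\hat X$ to learn $A$, and emit a message distributed like the real $M$.

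For the extraction step I would invoke the assumption, formalized as part of \emph{strong input privacy} in \cref{ap:circuit-priv}, that every ciphertext decrypts to a unique scalar of $\Zq$ or else is invalid. This lets the simulator canonically read a tuple $\langle x_i \rangle$ (resp.\ $\langle z_i \rangle$ in the small-domain case) out of $Q$; the set it represents is $\hat X$. Simulating $M$ then hinges on the randomizer produced by $\malcheckbothfunc$. I would establish a lemma, leaning on the distribution analysis in \cref{ap:uniform-sum}, that $\encvar{R}$ decrypts to $0$ when $\langle x_i \rangle$ (resp.\ $\langle z_i \rangle$) really is a set and to a value statistically close to uniform over $\Zq$ otherwise: in $\malcheckfunc$ the product of pairwise differences is nonzero iff the elements are distinct, and $\fheiszero$ followed by scaling by a random $r$ turns this into $0$ for a set and a uniform $r$ for a multiset; in $\malchecksdfunc$ each $\fheisin(\cpk,\encvar{z_i},\{0,1\})$ is $0$ for binary $z_i$ and uniform in $\Zq^*$ otherwise, so the sum is $0$ for a genuine indicator vector and near-uniform once a non-binary coordinate appears. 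The amortized vector randomizer $\delta_i\cdot\encvar{R}$ makes the coordinates of vector outputs independently near-uniform too. Consequently, when $\hat X$ is a valid set $M$ decrypts to $A = f_A(f_M(\hat X,Y_1),\dots,f_M(\hat X,Y_{\serversetnum}))$, exactly the ideal output; and when the query is malformed $M$ decrypts to a value statistically independent of $\collec$, which the simulator can sample on its own.

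Knowing the correct plaintext is still not enough — I need the whole ciphertext $M$, noise and all, to be simulatable. This is precisely why plain circuit privacy (which suffices for \cref{thm:semi-honest}) must be upgraded to strong input privacy here: the server evaluates a fixed circuit on input ciphertexts the adversary chose (subject only to validity), and strong input privacy guarantees that the distribution of the output ciphertext depends only on the output plaintext and the public parameters, not on the circuit or on $\collec$. Thus the simulator, holding only $A$ (valid case) or a freshly sampled $\Zq$-element (malformed case), can produce a ciphertext of the right distribution. The composition of the PSI, Matching and Aggregation layers then proceeds exactly as in \cref{ap:simulation}; since there is still only the single server message, no extra rounds arise, and IND-CPA enters only through its role in the formal definition of strong input privacy in \cref{ap:circuit-priv}.

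The main obstacle is making the extraction and the ciphertext-simulation steps rigorous at the same time: we must fix the definition of \emph{strongly input private} so that it simultaneously (a) licenses the canonical extraction of $\hat X$ from adversarial ciphertexts, (b) survives the high-multiplicative-depth circuits our layers build — in particular the $\lg q$-depth exponentiation inside $\fheiszero$ nested within the Matching and Aggregation functions — and (c) keeps the statistical-distance bounds of \cref{ap:uniform-sum}, which are of order $1/(q-1)$ and $1/(q-1)^2$ per test, negligible after a union bound over the $\serversetnum$ sets and all coordinates of vector-valued outputs. A secondary subtlety is handling queries that trigger a decryption failure rather than a well-defined scalar: the argument must show the server's reply is still independent of $\collec$ in that case (treating failure as a distinguished outcome, or again absorbing it into the randomizer), so that no server information leaks through error-versus-non-error behaviour.
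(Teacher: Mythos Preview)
Your high-level simulation skeleton matches the paper's: extract an effective input $\hat X$ from the adversarial query $Q$, feed it to the ideal functionality, use the strong-input-privacy simulator $\simcircuitp$ to manufacture a response ciphertext with the right noise profile, and split the analysis into the cases valid set / malformed query / decryption failure. Your treatment of the $\malcheckbothfunc$ randomizer and of the output-range case split also tracks \cref{ap:mal-priv} closely.

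Where you diverge is exactly on the step you yourself flag as ``the main obstacle''. You propose to resolve extraction by \emph{strengthening the definition} of strong input privacy so that it ``licenses the canonical extraction of $\hat X$ from adversarial ciphertexts''. The paper does not do this: in \cref{def:str-input-priv} the secret key is handed to the HE-level simulator $\simcircuitp$ only so it can reproduce noise on a given plaintext; nothing in that definition gives the PCM-level simulator a way to decrypt $Q$. Instead, the paper introduces a separate \emph{cipher-world} paradigm (\cref{thm:evaluator-priv}): the trusted party is made computationally unbounded, recovers $\csk$ from the adversarially produced $\cpk$, decrypts $Q$, checks the input restrictions, and returns both $\csk$ and the ideal output (or $\bot$) to the simulator. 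A preliminary theorem then shows that this extra leakage---essentially an extraction oracle $\oracle{Ext}$ on HE public keys---is independent of the evaluator's private data $\collec$, so cipher-world and ideal-world give the same server-privacy guarantee. Only after this reduction is in place does the simulation you sketch (and which the paper then carries out almost identically) go through.

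So your plan is not wrong in spirit, but it leaves the extraction problem open: you write ``we must fix the definition'' without actually fixing it, and you misattribute the extraction capability to the existing \cref{def:str-input-priv}. The paper's contribution at this point is precisely the cipher-world device that closes the gap without altering the HE security definitions.
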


Performing simulations on HE-based protocols in a malicious setting is very challenging as the security of the HE scheme prevents extracting effective inputs necessary for the simulation. In \cref{ap:mal-priv}, we extend the real-world/ideal-world simulation into a new notion called cipher-world, where the trusted party can extract secrets from HE keys. We show that the cipher-world provides the same server privacy guarantee as the ideal-world; then, we use this new paradigm to prove  \cref{thm:server-priv}.\\

\parasec{System-wide security} We proved the security of our framework in isolation, where there is no honest interaction outside our framework. 
In practice, when integrating our framework into a larger system, e.g., allowing journalists to contact document owners after a document search, user interactions may leak information such as the outcome of the search. Therefore, whenever protocol designers decide to integrate our framework into a larger system, they \emph{must perform a system-wide security analysis}.

\newcommand{\malCheck}{R}
\section{From Theory To Practice}
\label{sec:practical}
We first analyze the asymptotic cost of our schemes, and explain the optimizations we implement to make our schemes practical. 

\subsection{Asymptotic cost}
\label{sub:asym-cost}
Our framework is modular and supports arbitrary protocol combinations. Hence, we report the cost of layers separately. 
\cref{tab:psi-cost} summarizes the asymptotic costs of modules in our framework. 
We report the number of homomorphic additions, multiplications, and exponentiations (counting scalar-ciphertext operations as ciphertext-ciphertext operations; but excluding operations in earlier layers).
The PSI and matching layers' costs are reported for a single set ($\serversizekth$ is the size of the server's $i$'th set).
As the protection against malicious clients is optional, we report its one-time cost separately. 

As an example, we compute the cost of using existential search with full-match interest criteria with no input domain restriction. 
The client runs a single X-\agg protocol, which calls one F-\match and one small input PSI  protocol per server set.
This leads to $N$ multiplication in the aggregation layer, $\sum_i^N (\clientsize) = N\clientsize$ addition in the matching layer, and $\sum_i^N \clientsize\serversizekth = \clientsize\totalserversize$ multiplications in the PSI layer. Additionally, the server needs to perform one malicious protection per query leading to ${\clientsize}^2$ extra multiplications. The total cost will be  $N + N\clientsize + \clientsize\totalserversize + {\clientsize}^2$, but knowing that the total number of server elements ($\totalserversize$) is larger than the set number ($N$)  or the client size ($\clientsize$), we simplify the cost to $\order(\clientsize\totalserversize)$.

\newcommand{\sdMalCost}{\textcolor{blue}{|D|^*}}
\newcommand{\sdMalCostP}{\textcolor{blue}{+|D|^*}}

\begin{table}[tb]
    \caption{Asymptotic cost of our protocols. See \cref{tab:notation} for a summary of our parameter definition. SD stands for small domain protocols.
    The column \emph{Calls} denotes which protocols from earlier layers are called.
    }
    \begin{tabular}{lccccc}
        \toprule
         & Calls & Add & Mult. & Exp. &  \\ 
        \midrule
        PSI            & - & $\clientsize \serversizekth$ & $\clientsize \serversizekth$ &  &  \\
        PSI-SD         & - & & $|D|$ & \\
        ePSI-CA        & $1\times$PSI & $\clientsize$ &  &  \\
        ePSI-CA-SD     & $1\times$PSI-SD & $|D|$ &  &  \\
        \midrule
        F-\match          & $1\times$PSI & $\clientsize$ &  &  \\
        Th-\match         & $1\times$ePSI-CA & $|T|$ & $|T|$ & \\
        Tv-\match         & $1\times$ePSI-CA & $|T|$ & $|T|$ & \\
        \midrule
        NA-\agg      & $N\times$\match & & & \\
        X-\agg       & $N\times$\match & & $N$ & \\
        CA-\agg      & $N\times$\match & $N$ & & $N$ \\
        Ret-\agg     & $N\times$\match & $2N$ & $2N$ & $2N$ \\
        \midrule
        Mal protect      & - & & $\clientsize^2$ &  \\
        SD-Mal protect     & - & $|D|$ & $|D|$ &  \\
        \bottomrule
    \end{tabular}
    \label{tab:psi-cost}
\end{table}

\para{Communication} 
The client sends a query $Q$ in the PSI layer and receives a response $A$ from the aggregation layer.
The query's size only depends on the PSI layer variant and is independent from the server's input.
Small input queries are an encrypted list of client elements containing $\clientsize$ encrypted scalars while small domain queries are encrypted bit-vectors containing $|D|$ scalar elements. 
The response size depends on the aggregation method. 
Naive aggregation has to produce $N$ individual responses for $N$ sets, while other aggregation methods produce a single scalar value. This results in a total cost of $\order(|Q|)$ for non-naive and $\order(|Q|+N)$ for naive protocols which is the minimum achievable cost.

\para{Computation}
The PSI layer dominates the cost of our framework. Each PCM run calls $N$ instances of PSI leading to a cost of $\order(\clientsize \totalserversize)$ for small client input and $\order(\serversetnum\cdot |D|)$ for small input domain variants. 
Our cost is asymptotically higher than alternatives~\cite{CristofaroGT12, PinkasRTY20, KLS17} and is often dismissed as ``quadratic'' and too expensive. 
In contrast, approaches~\cite{CT09, CT10, ACT19, Pinkas0TY19, CiampiO18} with ``linear'' cost $\order(k(\clientsize + \totalserversize))$, where $10 < k < 100$ determines the protocol's false positive rate, are considered acceptable.
The extreme imbalance requirement (\reqlink{RQ.3}) leads to scenarios where $\clientsize \ll \totalserversize$ or even $\clientsize < \log(\totalserversize)$ and impacts how asymptotic costs should be interpreted. We show a how ``quadratic'' cost $\order(\clientsize \totalserversize)$ can outperform the  ``linear'' cost  $\order(k(n_c+N_s))$ in \cref{sec:eval}.

\subsection{Implementation}
Fully homomorphic encryption schemes assume unbounded multiplication depth and rely on bootstrapping, which is prohibitively expensive. Thus, we use the somewhat homomorphic BFV cryptosystem~\cite{FanV12} with a fixed multiplicative depth. 

Let $\polyDegree$ be the degree of the RLWE polynomial, $\plainMod$ be the plaintext modulus, and $\ctxMod$ be the ciphertext modulus. The polynomial degree $\polyDegree$ and ciphertext modulus $\ctxMod$ determine the multiplicative depth of the scheme. 
The plaintext modulus determines the input domain. 
We define two sets of parameters: $P_{8k}(\polyDegree=\num{8192}, \plainMod=\num{4079617})$ and $P_{32k}(\polyDegree=\num{16384}, \plainMod=\num{163841})$, and follow the Homomorphic Encryption Security Standard guidelines~\cite{Albrechtetal18} to choose $\ctxMod$s that provide 128 bits of security.
We use relinearization keys to support multiplication, and rotation keys to support some of our optimizations (see next section). Generating and communicating keys is expensive. Therefore, we assume that clients generate these keys once at setup and use them for all subsequent protocols. 
In \cref{ap:bfv-micro}, we provide more details on our parameters in \cref{table:bfv-param} and a microbenchmark of basic BFV operations and key sizes in \cref{table:bfv-cost}. 

We implement our protocols using the Go language. Our code is open-source and 1,620 lines long.\footnote{We will release the code after publication.}   
We use the Lattigo library~\cite{lattigo, MouchetTH20} for BFV operations.
Unfortunately, Lattigo does not support circuit privacy. We discuss the implications of this lack of support and possible countermeasures in \cref{ap:circuit-priv}. 
We run experiments on a machine with an Intel i7-9700 @ 3.00 GHz processor and 16 GiB of RAM. Reported numbers are single-core costs. As the costly operations are inherently parallel, we expect our scheme to scale linearly with the number of cores.

\subsection{Optimizations}
\label{sub:fhe-optimization}

We explain how we optimize our implementation and limit the multiplicative depth of (some of) our algorithms to improve efficiency. 

\para{NTT batching}
We use BFV in combination with the number-theoretic transformation
(NTT) so that a BFV ciphertext encodes a vector of $\polyDegree$
elements~\cite{SmartV14}; BFV additions and multiplications act as element-wise vector
operations. This batching enables single instruction multiple data (SIMD) operations on encrypted scalars. 
Performing operations between scalars in the same ciphertext (such as computing the sum or product of elements) requires modifying their position through rotations. Applying rotation requires rotation keys. 
Batching renders $\fheiszero$ infeasible. The exponentiation with $\plainMod - 1$ consumes a multiplicative depth of $\lg(\plainMod)$ in $\fheiszero$, so $\plainMod$ must be small. To use batching, however, the plaintext modulus $\plainMod$ must be prime and large enough that $2\polyDegree \mid \plainMod-1$.
Batching does not support small $\plainMod$s and consequently $\fheiszero$; we prioritize efficiency and only evaluate variants that do not require zero detection. The parameters we use support batching.

\para{Replication} The client's query is small with respect to the capacity of batched ciphertexts, i.e., $|Q| \ll \polyDegree$. We use two forms of replications to improve parallelization: powers and duplicates. When using a small input PSI variant, the client encodes \emph{powers} of each element $x$ (e.g., $[x, x^2, \allowbreak \ldots, x^k]$) into the query ciphertext to reduce the multiplicative depth of $\fheisin$, see the second variant in Algorithm~\ref{alg:isin}. 
Next, the client encodes $\clientCopyNum$ \emph{duplicates} of the full query (including powers, when in use) regardless of the PSI variant.

Replication is straightforward when the client is semi-honest, but impacts security when the client is malicious. 
We follow a similar process to $\malcheckfunc$ to enforce correct replication. The server computes a second randomizer $\encvar{\malCheck}$ such that $\malCheck$ will be zero when (1) all duplicates are equal and (2) for all consecutive power elements $x_i$ and $x_{i+1}$, the equality $x_{i+1} = x_1\cdot x_i$ holds. 
The server needs to compute this check only once per query.
We implemented this check and included its cost in all figures.

\section{PCM in Practice}\label{sec:eval}
To demonstrate our framework's capability, we solve the chemical similarity and document search problems. We discuss matching in mobile apps in Appendix~\ref{ap:dating}. We focus on end-to-end PCM solutions and do not evaluate single-set protocols or scenarios.

\subsection{Chemical similarity}

Recall from \cref{sec:case-studies} that chemical similarity of compounds is determined by computing and comparing molecular fingerprints~\cite{stumpfe2011, laufkotter2019, xue2001,willett1998,cereto2015,muegge2016}. We use the Tversky similarity matching algorithm, Tv-Match, to compute whether a compound in the seller's database is similar to the query compound. As fingerprints are short, we instantiate Tv-Match with the small-domain ePSI-CA-SD and represent molecular fingerprint as bit-vectors. 

We follow a popular configuration~\cite{shimizu2015} where compounds are represented by 166-bit MACCS fingerprints~\cite{durant2002} and Tversky parameters are set to $\alpha = 1, \beta = 1, t = 80\%$. Processing these raw parameters (see Algorithm~\ref{alg:single_psm_algs}) leads to the inequality
\begin{gather*}
  a, b, c = 9, 4, 4 \leftarrow \tverskyParamTrans(1, 1, 0.8)\\
  0 \leq 9\enccardinality - 4|X| - 4|Y| \leq |Y|.
\end{gather*}

We evaluate two aggregation policies. We apply X-\agg aggregation to determine whether at least one compound in the database matches and CA-\agg to count the number of matching compounds. These variants have high multiplicative depth, so we modify them to enable efficient deployment without relying on bootstrapping.

\parait{Existential search.} The X-\agg protocol applied to a collection of $N$ compounds requires a multiplicative depth of $\lg(N)$ to compute $\prod_{j \in \NatNumUpTo{N}}\msPSMans{j}$. This depth is too high for our parameters. Instead, we relax the output requirements and reduce the output as much as possible: For a fixed depth $l$, we return $\lceil N/{2^l} \rceil$ encrypted scalars $\encvar{\lambda_k} = \prod_{k2^l \leq j < (k+1)2^l}\msPSMans{j}$ to the client.
This relaxation reduces the privacy of X-\agg~-- it is less than full X-\agg but better than CA-\agg~-- at the gain of efficiency.  If this reduced privacy is unacceptable, the client just needs to choose larger HE parameters.

\parait{Cardinality search.} We use the shuffling variant of the CA-\agg protocol due to its lower multiplicative depth. When the server shuffles the $N$ matching responses, the client only learns the number of interesting compounds. Shuffling batched encrypted values is hard, so the server shuffles the compounds (server sets) as plaintext before processing the query to the same effect.

Our modifications to both aggregators increase the transfer cost from a single aggregated result to linear in the number of server sets. Yet, the X-\agg protocol sends 1 scalar value per $2^l$ compounds.

\para{Evaluation}
We evaluate our similarity search on the ChEMBL database~\cite{mendez2019chembl, gaulton2012chembl}, which is one of the largest public chemical databases with more than 2 million compounds. The ChEMBL database contains compounds in the SMILES format. We use the RDKit library~\cite{rdkit} to compute MACCS fingerprints from this format.

\Cref{fig:chem_eval} shows the cost of running our protocols with the BFV parameter set $\bfvParam{32}$. We ran 5 times the experiments with databases containing up to 256k compounds, and 3 times the larger experiments. We report the average cost. Standard errors of the mean are small, so error bars are not visible.

The server batches 128 compounds per ciphertext. Performing the PSI layer protocol ePSI-CA only requires 1 multiplication per ciphertext, while computing the binary Tversky score requires a depth of 7.
For the X-\agg protocol, we set $l=6$ and aggregate up to $64$ matching results into each scalar $\encvar{\gamma_k}$. As each ciphertext holds 32k scalars, the aggregation of up to $2M$ server sets requires 1 ciphertext.
CA-\agg transfers 1 ciphertext per 32k compounds. The cost of the extra X-\agg multiplications is insignificant compared to computing similarity, leading to similar computation costs for both protocols. \looseness=-1

The client computation of searching among 2 million compounds is less than a second and the transfer cost is 12MB for X-\agg and 378MB for CA-\agg. These protocols can be run by a thin client. The server, however, requires 3.5 hours of single-core computation.

\begin{figure}[tb]
  \centering
  \includegraphics[scale=0.5]{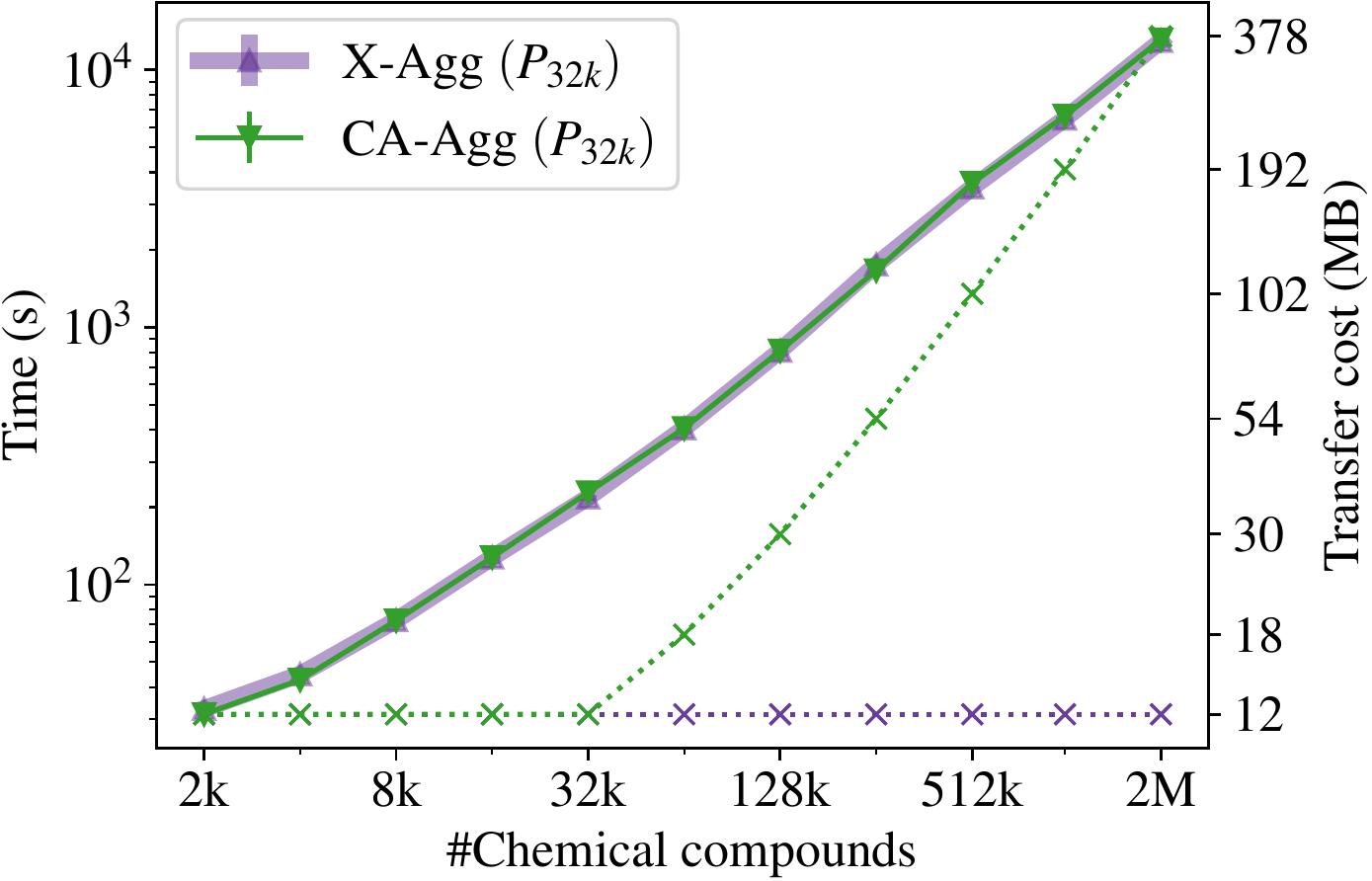}
  \caption{Computation time (solid lines) and transfer cost (dotted lines) for
    computing aggregated chemical similarity.\looseness=-1}
  \label{fig:chem_eval}
\end{figure}

\begin{figure*}[tbp]
  \centering
  \includegraphics[width=0.9\textwidth]{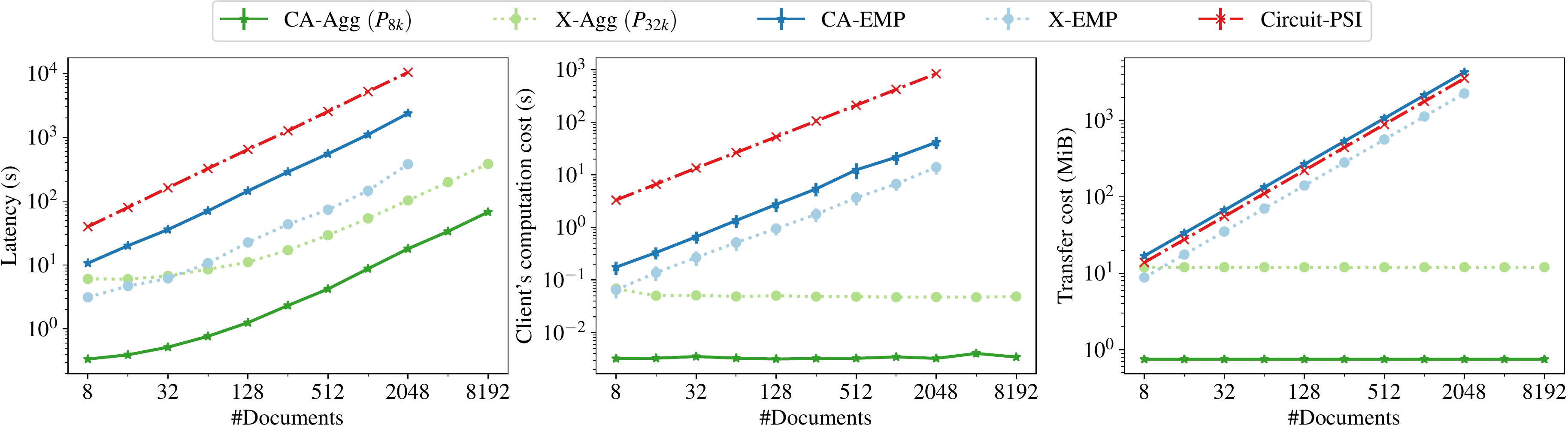}
  \caption{
    The end-to-end search latency (left), client's computation cost (middle), and communication cost (right) of document search. We limit execution to a single CPU core and enforce a bandwidth of 100 Mbps and an RTT of 100~ms on the connection.
  }
  \label{fig:doc_search}
\end{figure*}

\parasec{Comparison to ad-hoc solution}
Shimizu et al.~\cite{shimizu2015} build a custom chemical search that computes the number of matching compounds. They offer the same functionality/privacy as our CA-\agg protocol, but reveal more information than our X-\agg variant. We have identical threat models -- protect privacy against malicious adversaries while malicious servers can violate correctness. However, they use the secp192k1 curve which provides less than 100-bit of security, while we offer full 128-bit security. Moreover, their use of differential privacy requires distributional assumptions. 
Shimizu et al. report the cost of searching \emph{1.3 million compounds} as: 167 seconds of server computation, 172 seconds of client computation, and 265MB of data transfer.
Our X-\agg and CA-\agg solutions achieve higher security, lower client computation, and lower ratio of bandwidth consumed per compound, at the cost of higher server computation.\looseness=-1

\subsection{Peer-to-Peer document search}\label{sub:doc-search-eval}
To implement peer-to-peer document search, we represent queries and documents by the set of keywords they contain. A single document, represented by the keyword set $S$, is of interest to the querier if it contains all query keywords $Q$, i.e., $Q \subseteq S$. This functionality can be implemented with the full matching (F-Match) variant.

The client and the server must agree on how keywords are represented in $\Zq$. We use hash functions to do the conversion. As the search queries typically contain few keywords and the domain for searchable keywords is too large for our small domain variant, we construct F-Match with PSI with a small client input.
There are two sources of false-positive in our setup:

\para{False positive of mapping words}
The parameter $\plainMod$ (recall $\plainMod = q$) determines the input domain.
Since $\plainMod$ is small, multiple words could be mapped to the same $\Zq$ element. This can lead to F-Match claiming there is a match, even though one of the colliding keywords is absent in the server's set. Since $\plainMod$ impacts the multiplicative depth of our HE schemes, choosing a large value is impractical.

Instead of directly increasing the size of $\plainMod$ to reduce the false positive rate, the client hashes the $n_c$ query keywords with $t$ hash functions and encodes them into $t\,n_c$ scalar values, which reduces the false-positive rate to $1/{q^t}$.
When running $\addvariant{PSI}{\serverfunc}$, the server knows the corresponding hash function for each scalar value and hashes them accordingly.
Afterward, the server runs the F-Match protocol on all $t\,n_c$ PSI outputs together; 
a document matches if and only if all hashed keywords are present.

Using multiple hashes to reduce false positives does not impact privacy, as it is straightforward to simulate a query with $t$ hashes, with $t$ F-Match queries.
This modification increases the computation cost and the number of scalar values in the query by a factor of $t$. However, there is no concrete change in the communication cost as the client can still pack $tn_c$ scalars into one ciphertext.

\para{False positive of F-Match} The F-Match protocol itself has a false positive rate of $\sim 1/q$ (see \cref{sec:psm-layer}) caused by internal randomness. An easy way to reduce this FP rate is to run $r$ instances of F-Match with different randomness and reveal the $r$ responses. This repetition reduces the FP rate of a single matching response to $\sim 1/{q^r}$, while increasing the server's computation cost by a factor $r$.

\para{Aggregation} We explore two aggregation policies: existential (X-\agg) to determine whether at least one document in the collection matches; and cardinality (CA-\agg) to count matching documents. Since the multiplicative depth of F-Match is low, we can fully reduce the X-\agg output to one encrypted scalar. For the CA-\agg variant, we still use the shuffling variant for lower multiplicative depth.

\para{Evaluation}
We use the parameters $\bfvParam{8}$ for CA-\agg and $\bfvParam{32}$ for X-\agg protocols. We base our evaluation on the requirements set out in EdalatNejad et al.~\cite{DatashareNetwork}. We limit the number of keywords in each query to 8 and generate random documents of 128 keywords. We represent each keyword with two hash functions leading to false-positive rates of $2^{-44}$ for CA-\agg and $2^{-39}$ for X-\agg due to the mapping to $\Zq$. 
We need to account for false-positive errors as we run a single F-Match per document.
This error increases with the number of documents and reaches its peak at 8k documents where the probability of overestimating the cardinality is $0.2\%$ (CA-\agg) and existence is 1\% (X-\agg). Our protocols do not have false negatives.
We skip $\malcheckfunc$ from \cref{sub:base-malicious} since F-Match is not impacted by repeated keywords in the query, but still perform checks from \cref{sub:fhe-optimization} to enforce honest batching. Our use of power replication leads to a multiplicative depth of 1 in the PSI layer while F-Match only uses addition. 
We report the cost of our document search in \cref{fig:doc_search} and discuss our performance in \cref{sub:circuit-bench}.\looseness=-1

\parasec{Comparison to ad-hoc solution} 
EdalatNejad et al.~\cite{DatashareNetwork} build a document search engine that performs PSI-CA in a many-set setting and post-processes the output, in the client, to detect relevant documents.
We have identical threat models -- guarantee privacy against malicious adversaries and correctness against semi-honest servers. To enhance performance, they sacrifice privacy and leak more information than individual intersection cardinalities, yet less than vanilla PSI.
The protocol of EdalatNejad et al. has a latency of less than 1 second to search 1k documents while our framework requires $\SI{8.7}{\second}$ for CA-\agg and $\SI{54}{\second}$ for X-\agg. 
Our CA-\agg search does not reveal information about individual documents and X-\agg \emph{only reveals a single bit about the collection}. As expected, this privacy gain comes at a performance cost.\looseness=-1

\subsection{Comparison with generic solutions}
\label{sub:circuit-bench}

After comparing against ad-hoc solutions, we compare our document search against generic SMC and circuit PSI that can offer the same privacy and functionality (see \cref{tab:related-work}). See \cref{ap:circuit-bench} for implementation details. In \cref{ap:spot-bench}, we compare our document search to one of the fastest OT-based solutions, which \emph{cannot} satisfy our privacy requirements.

\parasec{Generic SMC} We use an SMC compiler, EMP tool-kit~\cite{emp-toolkit}, to build custom search engines equivalent to our X-\agg and CA-\agg searches.
The circuits we designed support many-sets.
\parasec{Circuit-PSI} We compare against Chandran et al.~\cite{ChandranGS22} which is a state-of-the-art single-set circuit-PSI protocol. To support many-set, we run one PSI per document. 
Chandran et al.\ support extending the intersection computation with arbitrary circuits allowing us to perform F-Match and then X-\agg or CA-\agg aggregation. However, we consider the cost of computing the intersection as a lower bound for the cost of search and do not extend the circuit.

\Cref{fig:doc_search} reports the latency, client's computation, and transfer cost of document search. We repeat each experiment 4 times and report the average cost and the standard error of the mean. 
Unfortunately, the EMP and circuit-PSI solutions crash on runs with 4k documents and more, hence we only show entries up to 2k documents.
The stable trend of our measurements let us think that one could easily extrapolate results for those settings from the measurements that we report.
To estimate the latency of our framework, we run it in a single process and add the expected network time (as we have one round, we consider 1 round trip time plus transfer time).\looseness=-1

Now we show that our framework supports thin clients by significantly reducing the clients' computation and communication costs. Moreover, our framework provides better latency and in \cref{ap:circuit-bench} we show that we have a competitive server cost.

\para{Latency} 
Both EMP and Circuit-PSI require many rounds of communication and are thus heavily impacted by the network's round-trip time (RTT). We assume a transatlantic RTT of \SI{100}{\milli\second}.
In this setting, framework consistently outperforms competitors. When performing an existential search (X-\agg) our framework cuts the end-to-end latency in half while our improvement factor increases to $96$x when doing a cardinality search (CA-\agg) on 1k documents.

\para{Client's computation}
Circuit-based approaches have a balanced computation load between clients and servers while we outsource almost all the computations to the server. The client's computation cost of our framework is independent of the number of the server sets and is only $\SI{50}{\milli\second}$ for the X-\agg and $\SI{5}{\milli\second}$ for the CA-\agg search.
While slower, the EMP approach is computationally efficient and only requires $\SI{3.7}{\second}$ for the existential (X-\agg) and $\SI{11.3}{\second}$  for the cardinality (CA-\agg) search for a 1k document collection. EMP's cost is still acceptable for thin clients, however using our framework leads to a saving factor of \numrange{75}{2250}x on client computation. In contrast, Chandran et al. require a prohibitive cost of $\SI{352}{}$ seconds, which is \emph{$\num{70000}$x higher than ours}.

\para{Communication} The X-\agg protocol has a constant communication cost of 12 MB while the cost of CA-\agg grows linearly with the number of documents. However, this cost is fixed to 768KB for CA-\agg  in our evaluation since we can pack up to 8k results in a $\bfvParam{8}$ ciphertext. Both EMP and Circuit-PSI have costs linear in the inputs size. The EMP search requires $\SI{1.1}{\gibi\byte}$ for X-\agg and $\SI{2.1}{\gibi\byte}$ for CA-\agg queries when searching 1k documents while Chandran et al. require $\SI{1.7}{\gibi\byte}$; thus, both approaches are prohibitively expensive.
Ultimately, our framework reduces communication by a factor of \numrange{93}{2800}x.

\section{Takeaways and Future Work}
In this work, we introduce and formalize private collection matching problems. 
Using homomorphic encryption, we build a modular framework for solving them. 
Our framework and its layers-based design simplify the construction of PCM schemes that achieve complex goals while limiting the leakage to what is required by the functionality, nothing more.
Relying on homomorphic encryption is extremely advantageous in theory. Our work shows, however, that using it in practice is challenging. 
We have overcome these challenges by using optimizations from the literature, at the cost of reduced flexibility of our framework. 
Ultimately, our framework is competitive with ad-hoc solutions and outperforms generic approaches in all three latency, communication, and computation costs, sometimes by several orders of magnitude.
As example, our framework requires 12MB of communication and less than a second of client computation to determine matching of a chemical compound against a database of 2 million compounds; or respectively 768KB and 50ms to determine whether the owner of a thousand documents has content of interest to a querier journalist.
We believe further work on homomorphic encryption schemes -- and ciphertext-based comparison methods in particular -- will allow our framework to operate in a wider range of settings. 
We hope that our work fosters the evolution of homomorphic encryption in further areas so that the community can build a wider range of privacy-preserving applications.

  \bibliographystyle{ACM-Reference-Format}
  \bibliography{sources}


\begin{thebibliography}{81}


\ifx \showCODEN    \undefined \def \showCODEN     #1{\unskip}     \fi
\ifx \showDOI      \undefined \def \showDOI       #1{#1}\fi
\ifx \showISBNx    \undefined \def \showISBNx     #1{\unskip}     \fi
\ifx \showISBNxiii \undefined \def \showISBNxiii  #1{\unskip}     \fi
\ifx \showISSN     \undefined \def \showISSN      #1{\unskip}     \fi
\ifx \showLCCN     \undefined \def \showLCCN      #1{\unskip}     \fi
\ifx \shownote     \undefined \def \shownote      #1{#1}          \fi
\ifx \showarticletitle \undefined \def \showarticletitle #1{#1}   \fi
\ifx \showURL      \undefined \def \showURL       {\relax}        \fi
\providecommand\bibfield[2]{#2}
\providecommand\bibinfo[2]{#2}
\providecommand\natexlab[1]{#1}
\providecommand\showeprint[2][]{arXiv:#2}

\bibitem[\protect\citeauthoryear{Albrecht, Chase, Chen, Ding, Goldwasser,
  Gorbunov, Halevi, Hoffstein, Laine, Lauter, Lokam, Micciancio, Moody,
  Morrison, Sahai, and Vaikuntanathan}{Albrecht et~al\mbox{.}}{2018}]%
        {Albrechtetal18}
\bibfield{author}{\bibinfo{person}{Martin Albrecht}, \bibinfo{person}{Melissa
  Chase}, \bibinfo{person}{Hao Chen}, \bibinfo{person}{Jintai Ding},
  \bibinfo{person}{Shafi Goldwasser}, \bibinfo{person}{Sergey Gorbunov},
  \bibinfo{person}{Shai Halevi}, \bibinfo{person}{Jeffrey Hoffstein},
  \bibinfo{person}{Kim Laine}, \bibinfo{person}{Kristin Lauter},
  \bibinfo{person}{Satya Lokam}, \bibinfo{person}{Daniele Micciancio},
  \bibinfo{person}{Dustin Moody}, \bibinfo{person}{Travis Morrison},
  \bibinfo{person}{Amit Sahai}, {and} \bibinfo{person}{Vinod Vaikuntanathan}.}
  \bibinfo{year}{2018}\natexlab{}.
\newblock \bibinfo{booktitle}{\emph{{Homomorphic Encryption Security
  Standard}}}.
\newblock \bibinfo{type}{{T}echnical {R}eport}.
  \bibinfo{institution}{HomomorphicEncryption.org}.
\newblock


\bibitem[\protect\citeauthoryear{Albrecht, Rechberger, Schneider, Tiessen, and
  Zohner}{Albrecht et~al\mbox{.}}{2015}]%
        {AlbrechtR0TZ15}
\bibfield{author}{\bibinfo{person}{Martin~R. Albrecht},
  \bibinfo{person}{Christian Rechberger}, \bibinfo{person}{Thomas Schneider},
  \bibinfo{person}{Tyge Tiessen}, {and} \bibinfo{person}{Michael Zohner}.}
  \bibinfo{year}{2015}\natexlab{}.
\newblock \showarticletitle{Ciphers for {MPC} and {FHE}}. In
  \bibinfo{booktitle}{\emph{{EUROCRYPT} {(1)}}}.
\newblock


\bibitem[\protect\citeauthoryear{Ateniese, Cristofaro, and Tsudik}{Ateniese
  et~al\mbox{.}}{2011}]%
        {ACT19}
\bibfield{author}{\bibinfo{person}{Giuseppe Ateniese},
  \bibinfo{person}{Emiliano~De Cristofaro}, {and} \bibinfo{person}{Gene
  Tsudik}.} \bibinfo{year}{2011}\natexlab{}.
\newblock \showarticletitle{{(If) Size Matters: Size-Hiding Private Set
  Intersection}}. In \bibinfo{booktitle}{\emph{{PKC}}}.
\newblock


\bibitem[\protect\citeauthoryear{Baldi, Baronio, De~Cristofaro, Gasti, and
  Tsudik}{Baldi et~al\mbox{.}}{2011}]%
        {baldi2011countering}
\bibfield{author}{\bibinfo{person}{Pierre Baldi}, \bibinfo{person}{Roberta
  Baronio}, \bibinfo{person}{Emiliano De~Cristofaro}, \bibinfo{person}{Paolo
  Gasti}, {and} \bibinfo{person}{Gene Tsudik}.}
  \bibinfo{year}{2011}\natexlab{}.
\newblock \showarticletitle{Countering {GATTACA}: efficient and secure testing
  of fully-sequenced human genomes}. In \bibinfo{booktitle}{\emph{{CCS}}}.
\newblock


\bibitem[\protect\citeauthoryear{Bay, Erkin, Alishahi, and Vos}{Bay
  et~al\mbox{.}}{2021}]%
        {BayEAV21}
\bibfield{author}{\bibinfo{person}{Asl{\'{\i}} Bay}, \bibinfo{person}{Zeki
  Erkin}, \bibinfo{person}{Mina Alishahi}, {and} \bibinfo{person}{Jelle Vos}.}
  \bibinfo{year}{2021}\natexlab{}.
\newblock \showarticletitle{Multi-Party Private Set Intersection Protocols for
  Practical Applications}. In \bibinfo{booktitle}{\emph{{SECRYPT}}}.
\newblock


\bibitem[\protect\citeauthoryear{Beaver, Micali, and Rogaway}{Beaver
  et~al\mbox{.}}{1990}]%
        {BeaverMR90}
\bibfield{author}{\bibinfo{person}{Donald Beaver}, \bibinfo{person}{Silvio
  Micali}, {and} \bibinfo{person}{Phillip Rogaway}.}
  \bibinfo{year}{1990}\natexlab{}.
\newblock \showarticletitle{The Round Complexity of Secure Protocols (Extended
  Abstract)}. In \bibinfo{booktitle}{\emph{{STOC}}}.
  \bibinfo{publisher}{{ACM}}.
\newblock


\bibitem[\protect\citeauthoryear{Burkhart, Strasser, Many, and
  Dimitropoulos}{Burkhart et~al\mbox{.}}{2010}]%
        {BurkhartSMD10}
\bibfield{author}{\bibinfo{person}{Martin Burkhart}, \bibinfo{person}{Mario
  Strasser}, \bibinfo{person}{Dilip Many}, {and} \bibinfo{person}{Xenofontas~A.
  Dimitropoulos}.} \bibinfo{year}{2010}\natexlab{}.
\newblock \showarticletitle{{SEPIA:} Privacy-Preserving Aggregation of
  Multi-Domain Network Events and Statistics}. In
  \bibinfo{booktitle}{\emph{{USENIX} Security Symposium}}.
\newblock


\bibitem[\protect\citeauthoryear{Cereto-Massagu{\'e}, Ojeda, Valls, Mulero,
  Garcia-Vallv{\'e}, and Pujadas}{Cereto-Massagu{\'e} et~al\mbox{.}}{2015}]%
        {cereto2015}
\bibfield{author}{\bibinfo{person}{Adri{\`a} Cereto-Massagu{\'e}},
  \bibinfo{person}{Mar{\'\i}a~Jos{\'e} Ojeda}, \bibinfo{person}{Cristina
  Valls}, \bibinfo{person}{Miquel Mulero}, \bibinfo{person}{Santiago
  Garcia-Vallv{\'e}}, {and} \bibinfo{person}{Gerard Pujadas}.}
  \bibinfo{year}{2015}\natexlab{}.
\newblock \showarticletitle{Molecular fingerprint similarity search in virtual
  screening}.
\newblock \bibinfo{journal}{\emph{Methods}}  \bibinfo{volume}{71}
  (\bibinfo{year}{2015}).
\newblock


\bibitem[\protect\citeauthoryear{Chandran, Gupta, and Shah}{Chandran
  et~al\mbox{.}}{2022}]%
        {ChandranGS22}
\bibfield{author}{\bibinfo{person}{Nishanth Chandran}, \bibinfo{person}{Divya
  Gupta}, {and} \bibinfo{person}{Akash Shah}.} \bibinfo{year}{2022}\natexlab{}.
\newblock \showarticletitle{Circuit-PSI With Linear Complexity via Relaxed
  Batch {OPPRF}}.
\newblock \bibinfo{journal}{\emph{PoPETs}} \bibinfo{number}{1}
  (\bibinfo{year}{2022}).
\newblock


\bibitem[\protect\citeauthoryear{Chen, Laine, and Rindal}{Chen
  et~al\mbox{.}}{2017}]%
        {CLR17}
\bibfield{author}{\bibinfo{person}{Hao Chen}, \bibinfo{person}{Kim Laine},
  {and} \bibinfo{person}{Peter Rindal}.} \bibinfo{year}{2017}\natexlab{}.
\newblock \showarticletitle{{Fast Private Set Intersection from Homomorphic
  Encryption}}. In \bibinfo{booktitle}{\emph{{CCS}}}.
\newblock


\bibitem[\protect\citeauthoryear{Ciampi and Orlandi}{Ciampi and
  Orlandi}{2018}]%
        {CiampiO18}
\bibfield{author}{\bibinfo{person}{Michele Ciampi} {and}
  \bibinfo{person}{Claudio Orlandi}.} \bibinfo{year}{2018}\natexlab{}.
\newblock \showarticletitle{{Combining Private Set-Intersection with Secure
  Two-Party Computation}}. In \bibinfo{booktitle}{\emph{{SCN}}}.
\newblock


\bibitem[\protect\citeauthoryear{Cristofaro, Gasti, and Tsudik}{Cristofaro
  et~al\mbox{.}}{2012}]%
        {CristofaroGT12}
\bibfield{author}{\bibinfo{person}{Emiliano~De Cristofaro},
  \bibinfo{person}{Paolo Gasti}, {and} \bibinfo{person}{Gene Tsudik}.}
  \bibinfo{year}{2012}\natexlab{}.
\newblock \showarticletitle{{Fast and Private Computation of Cardinality of Set
  Intersection and Union}}. In \bibinfo{booktitle}{\emph{{CANS}}}.
\newblock


\bibitem[\protect\citeauthoryear{Cristofaro, Kim, and Tsudik}{Cristofaro
  et~al\mbox{.}}{2010}]%
        {CT10}
\bibfield{author}{\bibinfo{person}{Emiliano~De Cristofaro},
  \bibinfo{person}{Jihye Kim}, {and} \bibinfo{person}{Gene Tsudik}.}
  \bibinfo{year}{2010}\natexlab{}.
\newblock \showarticletitle{{Linear-Complexity Private Set Intersection
  Protocols Secure in Malicious Model}}. In
  \bibinfo{booktitle}{\emph{{ASIACRYPT}}}.
\newblock


\bibitem[\protect\citeauthoryear{Cristofaro and Tsudik}{Cristofaro and
  Tsudik}{2009}]%
        {CT09}
\bibfield{author}{\bibinfo{person}{Emiliano~De Cristofaro} {and}
  \bibinfo{person}{Gene Tsudik}.} \bibinfo{year}{2009}\natexlab{}.
\newblock \showarticletitle{{Practical Private Set Intersection Protocols with
  Linear Computational and Bandwidth Complexity}}.
\newblock \bibinfo{journal}{\emph{{IACR} Cryptol. ePrint Arch.}}
  (\bibinfo{year}{2009}).
\newblock


\bibitem[\protect\citeauthoryear{Dachman{-}Soled, Malkin, Raykova, and
  Yung}{Dachman{-}Soled et~al\mbox{.}}{2009}]%
        {DRMY12}
\bibfield{author}{\bibinfo{person}{Dana Dachman{-}Soled}, \bibinfo{person}{Tal
  Malkin}, \bibinfo{person}{Mariana Raykova}, {and} \bibinfo{person}{Moti
  Yung}.} \bibinfo{year}{2009}\natexlab{}.
\newblock \showarticletitle{Efficient robust private set intersection}. In
  \bibinfo{booktitle}{\emph{{ACNS}}}.
\newblock


\bibitem[\protect\citeauthoryear{de~Castro, Juvekar, and
  Vaikuntanathan}{de~Castro et~al\mbox{.}}{2021}]%
        {CastroJV20}
\bibfield{author}{\bibinfo{person}{Leo de Castro}, \bibinfo{person}{Chiraag
  Juvekar}, {and} \bibinfo{person}{Vinod Vaikuntanathan}.}
  \bibinfo{year}{2021}\natexlab{}.
\newblock \showarticletitle{{Fast Vector Oblivious Linear Evaluation from Ring
  Learning with Errors}}. In \bibinfo{booktitle}{\emph{WAHC@CCS}}.
\newblock


\bibitem[\protect\citeauthoryear{Debnath and Dutta}{Debnath and Dutta}{2015}]%
        {DebnathD15}
\bibfield{author}{\bibinfo{person}{Sumit~Kumar Debnath} {and}
  \bibinfo{person}{Ratna Dutta}.} \bibinfo{year}{2015}\natexlab{}.
\newblock \showarticletitle{{Secure and Efficient Private Set Intersection
  Cardinality Using Bloom Filter}}. In \bibinfo{booktitle}{\emph{{ISC}}}.
\newblock


\bibitem[\protect\citeauthoryear{Demmler, Rindal, Rosulek, and Trieu}{Demmler
  et~al\mbox{.}}{2018}]%
        {demmler2018pir}
\bibfield{author}{\bibinfo{person}{Daniel Demmler}, \bibinfo{person}{Peter
  Rindal}, \bibinfo{person}{Mike Rosulek}, {and} \bibinfo{person}{Ni Trieu}.}
  \bibinfo{year}{2018}\natexlab{}.
\newblock \showarticletitle{{{PIR-PSI}: Scaling Private Contact Discovery}}.
\newblock \bibinfo{journal}{\emph{PoPETs}} \bibinfo{number}{4}
  (\bibinfo{year}{2018}).
\newblock


\bibitem[\protect\citeauthoryear{{Durant, Joseph L and Leland, Burton A and
  Henry, Douglas R and Nourse, James G}}{{Durant, Joseph L and Leland, Burton A
  and Henry, Douglas R and Nourse, James G}}{2002}]%
        {durant2002}
\bibfield{author}{\bibinfo{person}{{Durant, Joseph L and Leland, Burton A and
  Henry, Douglas R and Nourse, James G}}.} \bibinfo{year}{2002}\natexlab{}.
\newblock \showarticletitle{{Reoptimization of MDL keys for use in drug
  discovery}}.
\newblock \bibinfo{journal}{\emph{Journal of chemical information and computer
  sciences}} (\bibinfo{year}{2002}).
\newblock


\bibitem[\protect\citeauthoryear{Edalatnejad, Lueks, Martin, Led{\'{e}}sert,
  L'H{\^{o}}te, Thomas, Girod, and Troncoso}{Edalatnejad et~al\mbox{.}}{2020}]%
        {DatashareNetwork}
\bibfield{author}{\bibinfo{person}{Kasra Edalatnejad}, \bibinfo{person}{Wouter
  Lueks}, \bibinfo{person}{Julien~Pierre Martin}, \bibinfo{person}{Soline
  Led{\'{e}}sert}, \bibinfo{person}{Anne L'H{\^{o}}te}, \bibinfo{person}{Bruno
  Thomas}, \bibinfo{person}{Laurent Girod}, {and} \bibinfo{person}{Carmela
  Troncoso}.} \bibinfo{year}{2020}\natexlab{}.
\newblock \showarticletitle{{DatashareNetwork: A Decentralized
  Privacy-Preserving Search Engine for Investigative Journalists}}. In
  \bibinfo{booktitle}{\emph{{USENIX}}}.
\newblock


\bibitem[\protect\citeauthoryear{Falk, Noble, and Ostrovsky}{Falk
  et~al\mbox{.}}{2019}]%
        {FalkNO19}
\bibfield{author}{\bibinfo{person}{Brett~Hemenway Falk},
  \bibinfo{person}{Daniel Noble}, {and} \bibinfo{person}{Rafail Ostrovsky}.}
  \bibinfo{year}{2019}\natexlab{}.
\newblock \showarticletitle{Private Set Intersection with Linear Communication
  from General Assumptions}. In \bibinfo{booktitle}{\emph{WPES@CCS}}.
\newblock


\bibitem[\protect\citeauthoryear{Fan and Vercauteren}{Fan and
  Vercauteren}{2012}]%
        {FanV12}
\bibfield{author}{\bibinfo{person}{Junfeng Fan} {and} \bibinfo{person}{Frederik
  Vercauteren}.} \bibinfo{year}{2012}\natexlab{}.
\newblock \showarticletitle{{Somewhat Practical Fully Homomorphic Encryption}}.
\newblock \bibinfo{journal}{\emph{{IACR} Cryptol. ePrint Arch.}}
  (\bibinfo{year}{2012}).
\newblock


\bibitem[\protect\citeauthoryear{Freedman, Nissim, and Pinkas}{Freedman
  et~al\mbox{.}}{2004}]%
        {FNP04}
\bibfield{author}{\bibinfo{person}{Michael~J. Freedman}, \bibinfo{person}{Kobbi
  Nissim}, {and} \bibinfo{person}{Benny Pinkas}.}
  \bibinfo{year}{2004}\natexlab{}.
\newblock \showarticletitle{{Efficient Private Matching and Set Intersection}}.
  In \bibinfo{booktitle}{\emph{{EUROCRYPT}}}.
\newblock


\bibitem[\protect\citeauthoryear{Gamal}{Gamal}{1985}]%
        {Elgamal85}
\bibfield{author}{\bibinfo{person}{Taher~El Gamal}.}
  \bibinfo{year}{1985}\natexlab{}.
\newblock \showarticletitle{A public key cryptosystem and a signature scheme
  based on discrete logarithms}.
\newblock \bibinfo{journal}{\emph{{IEEE} Trans. Inf. Theory}}
  (\bibinfo{year}{1985}).
\newblock


\bibitem[\protect\citeauthoryear{{Gaulton, Anna and Bellis, Louisa J and Bento,
  A Patricia and Chambers, Jon and Davies, Mark and Hersey, Anne and Light,
  Yvonne and McGlinchey, Shaun and Michalovich, David and Al-Lazikani, Bissan
  and others}}{{Gaulton, Anna and Bellis, Louisa J and Bento, A Patricia and
  Chambers, Jon and Davies, Mark and Hersey, Anne and Light, Yvonne and
  McGlinchey, Shaun and Michalovich, David and Al-Lazikani, Bissan and
  others}}{2012}]%
        {gaulton2012chembl}
\bibfield{author}{\bibinfo{person}{{Gaulton, Anna and Bellis, Louisa J and
  Bento, A Patricia and Chambers, Jon and Davies, Mark and Hersey, Anne and
  Light, Yvonne and McGlinchey, Shaun and Michalovich, David and Al-Lazikani,
  Bissan and others}}.} \bibinfo{year}{2012}\natexlab{}.
\newblock \showarticletitle{{ChEMBL: a large-scale bioactivity database for
  drug discovery}}.
\newblock \bibinfo{journal}{\emph{Nucleic acids research}}
  (\bibinfo{year}{2012}).
\newblock


\bibitem[\protect\citeauthoryear{Gentry}{Gentry}{2009}]%
        {Gentry09}
\bibfield{author}{\bibinfo{person}{Craig Gentry}.}
  \bibinfo{year}{2009}\natexlab{}.
\newblock \emph{\bibinfo{title}{A fully homomorphic encryption scheme}}.
\newblock \bibinfo{thesistype}{Ph.D. Dissertation}. \bibinfo{school}{Stanford
  University, {USA}}.
\newblock


\bibitem[\protect\citeauthoryear{Ghosh and Simkin}{Ghosh and Simkin}{2019}]%
        {GhoshS19}
\bibfield{author}{\bibinfo{person}{Satrajit Ghosh} {and} \bibinfo{person}{Mark
  Simkin}.} \bibinfo{year}{2019}\natexlab{}.
\newblock \showarticletitle{{The Communication Complexity of Threshold Private
  Set Intersection}}. In \bibinfo{booktitle}{\emph{{CRYPTO}}}.
\newblock


\bibitem[\protect\citeauthoryear{Hazay}{Hazay}{2015}]%
        {H15}
\bibfield{author}{\bibinfo{person}{Carmit Hazay}.}
  \bibinfo{year}{2015}\natexlab{}.
\newblock \showarticletitle{{Oblivious Polynomial Evaluation and Secure
  Set-Intersection from Algebraic PRFs}}. In \bibinfo{booktitle}{\emph{TCC}}.
\newblock


\bibitem[\protect\citeauthoryear{Huang, Evans, and Katz}{Huang
  et~al\mbox{.}}{2012}]%
        {HEK12}
\bibfield{author}{\bibinfo{person}{Yan Huang}, \bibinfo{person}{David Evans},
  {and} \bibinfo{person}{Jonathan Katz}.} \bibinfo{year}{2012}\natexlab{}.
\newblock \showarticletitle{{Private Set Intersection: Are Garbled Circuits
  Better than Custom Protocols?}}. In \bibinfo{booktitle}{\emph{{NDSS}}}.
\newblock


\bibitem[\protect\citeauthoryear{Ion, Kreuter, Nergiz, Patel, Saxena, Seth,
  Shanahan, and Yung}{Ion et~al\mbox{.}}{2017}]%
        {IonKNPSSSY17}
\bibfield{author}{\bibinfo{person}{Mihaela Ion}, \bibinfo{person}{Ben Kreuter},
  \bibinfo{person}{Erhan Nergiz}, \bibinfo{person}{Sarvar Patel},
  \bibinfo{person}{Shobhit Saxena}, \bibinfo{person}{Karn Seth},
  \bibinfo{person}{David Shanahan}, {and} \bibinfo{person}{Moti Yung}.}
  \bibinfo{year}{2017}\natexlab{}.
\newblock \showarticletitle{{Private Intersection-Sum Protocol with
  Applications to Attributing Aggregate Ad Conversions}}.
\newblock \bibinfo{journal}{\emph{{IACR} Cryptol. ePrint Arch.}}
  (\bibinfo{year}{2017}).
\newblock


\bibitem[\protect\citeauthoryear{Jaccard}{Jaccard}{1912}]%
        {jaccard1912distribution}
\bibfield{author}{\bibinfo{person}{Paul Jaccard}.}
  \bibinfo{year}{1912}\natexlab{}.
\newblock \showarticletitle{The distribution of the flora in the alpine zone}.
\newblock \bibinfo{journal}{\emph{New phytologist}} \bibinfo{volume}{XI},
  \bibinfo{number}{2} (\bibinfo{year}{1912}).
\newblock


\bibitem[\protect\citeauthoryear{Kales, Rechberger, Schneider, Senker, and
  Weinert}{Kales et~al\mbox{.}}{2019}]%
        {KRS19}
\bibfield{author}{\bibinfo{person}{Daniel Kales}, \bibinfo{person}{Christian
  Rechberger}, \bibinfo{person}{Thomas Schneider}, \bibinfo{person}{Matthias
  Senker}, {and} \bibinfo{person}{Christian Weinert}.}
  \bibinfo{year}{2019}\natexlab{}.
\newblock \showarticletitle{{Mobile Private Contact Discovery at Scale}}. In
  \bibinfo{booktitle}{\emph{{USENIX}}}.
\newblock


\bibitem[\protect\citeauthoryear{Kamara, Mohassel, Raykova, and
  Sadeghian}{Kamara et~al\mbox{.}}{2013}]%
        {kamara2013scaling}
\bibfield{author}{\bibinfo{person}{Seny Kamara}, \bibinfo{person}{Payman
  Mohassel}, \bibinfo{person}{Mariana Raykova}, {and} \bibinfo{person}{Saeed
  Sadeghian}.} \bibinfo{year}{2013}\natexlab{}.
\newblock \showarticletitle{{Scaling Private Set Intersection to
  Billion-Element Sets}}. In \bibinfo{booktitle}{\emph{FC}}.
\newblock


\bibitem[\protect\citeauthoryear{Karako{\c{c}} and
  K{\"{u}}p{\c{c}}{\"{u}}}{Karako{\c{c}} and K{\"{u}}p{\c{c}}{\"{u}}}{2020}]%
        {KarakocK20}
\bibfield{author}{\bibinfo{person}{Ferhat Karako{\c{c}}} {and}
  \bibinfo{person}{Alptekin K{\"{u}}p{\c{c}}{\"{u}}}.}
  \bibinfo{year}{2020}\natexlab{}.
\newblock \showarticletitle{Linear Complexity Private Set Intersection for
  Secure Two-Party Protocols}. In \bibinfo{booktitle}{\emph{{CANS}}}.
\newblock


\bibitem[\protect\citeauthoryear{Kiss, Liu, Schneider, Asokan, and Pinkas}{Kiss
  et~al\mbox{.}}{2017}]%
        {KLS17}
\bibfield{author}{\bibinfo{person}{Agnes Kiss}, \bibinfo{person}{Jian Liu},
  \bibinfo{person}{Thomas Schneider}, \bibinfo{person}{N. Asokan}, {and}
  \bibinfo{person}{Benny Pinkas}.} \bibinfo{year}{2017}\natexlab{}.
\newblock \showarticletitle{{Private Set Intersection for Unequal Set Sizes
  with Mobile Applications}}.
\newblock \bibinfo{journal}{\emph{PoPETs}} (\bibinfo{year}{2017}).
\newblock


\bibitem[\protect\citeauthoryear{Kissner and Song}{Kissner and Song}{2005}]%
        {KissnerS05}
\bibfield{author}{\bibinfo{person}{Lea Kissner} {and}
  \bibinfo{person}{Dawn~Xiaodong Song}.} \bibinfo{year}{2005}\natexlab{}.
\newblock \showarticletitle{Privacy-Preserving Set Operations}. In
  \bibinfo{booktitle}{\emph{{CRYPTO}}}.
\newblock


\bibitem[\protect\citeauthoryear{Kolesnikov, Kumaresan, Rosulek, and
  Trieu}{Kolesnikov et~al\mbox{.}}{2016}]%
        {KKRT16}
\bibfield{author}{\bibinfo{person}{Vladimir Kolesnikov},
  \bibinfo{person}{Ranjit Kumaresan}, \bibinfo{person}{Mike Rosulek}, {and}
  \bibinfo{person}{Ni Trieu}.} \bibinfo{year}{2016}\natexlab{}.
\newblock \showarticletitle{{Efficient Batched Oblivious PRF with Applications
  to Private Set Intersection}}. In \bibinfo{booktitle}{\emph{CSS}}.
\newblock


\bibitem[\protect\citeauthoryear{Kolesnikov, Matania, Pinkas, Rosulek, and
  Trieu}{Kolesnikov et~al\mbox{.}}{2017}]%
        {KolesnikovMPRT17}
\bibfield{author}{\bibinfo{person}{Vladimir Kolesnikov}, \bibinfo{person}{Naor
  Matania}, \bibinfo{person}{Benny Pinkas}, \bibinfo{person}{Mike Rosulek},
  {and} \bibinfo{person}{Ni Trieu}.} \bibinfo{year}{2017}\natexlab{}.
\newblock \showarticletitle{Practical Multi-party Private Set Intersection from
  Symmetric-Key Techniques}. In \bibinfo{booktitle}{\emph{{CCS}}}.
  \bibinfo{publisher}{{ACM}}.
\newblock


\bibitem[\protect\citeauthoryear{Lattigo v2.1.1, online}{Lattigo v2.1.1,
  online}{2020}]%
        {lattigo}
Lattigo v2.1.1, online \bibinfo{year}{2020}\natexlab{}.
\newblock \bibinfo{title}{Lattigo}.
\newblock \bibinfo{howpublished}{\url{http://github.com/ldsec/lattigo}}.
\newblock
\newblock
\shownote{EPFL-LDS.}


\bibitem[\protect\citeauthoryear{{Laufk{\"o}tter, Oliver and Miyao, Tomoyuki
  and Bajorath, J{\"u}rgen}}{{Laufk{\"o}tter, Oliver and Miyao, Tomoyuki and
  Bajorath, J{\"u}rgen}}{2019}]%
        {laufkotter2019}
\bibfield{author}{\bibinfo{person}{{Laufk{\"o}tter, Oliver and Miyao, Tomoyuki
  and Bajorath, J{\"u}rgen}}.} \bibinfo{year}{2019}\natexlab{}.
\newblock \showarticletitle{{Large-Scale Comparison of Alternative Similarity
  Search Strategies with Varying Chemical Information Contents}}.
\newblock \bibinfo{journal}{\emph{ACS Omega}} \bibinfo{volume}{4},
  \bibinfo{number}{12} (\bibinfo{year}{2019}).
\newblock


\bibitem[\protect\citeauthoryear{Lindell}{Lindell}{2017}]%
        {Lindell17}
\bibfield{author}{\bibinfo{person}{Yehuda Lindell}.}
  \bibinfo{year}{2017}\natexlab{}.
\newblock \showarticletitle{{How to Simulate It - A Tutorial on the Simulation
  Proof Technique}}.
\newblock In \bibinfo{booktitle}{\emph{Tutorials on the Foundations of
  Cryptography}}.
\newblock


\bibitem[\protect\citeauthoryear{Ma and Chow}{Ma and Chow}{2022}]%
        {MaC22}
\bibfield{author}{\bibinfo{person}{Jack P.~K. Ma} {and}
  \bibinfo{person}{Sherman S.~M. Chow}.} \bibinfo{year}{2022}\natexlab{}.
\newblock \showarticletitle{Secure-Computation-Friendly Private Set
  Intersection from Oblivious Compact Graph Evaluation}. In
  \bibinfo{booktitle}{\emph{AsiaCCS}}. \bibinfo{publisher}{{ACM}}.
\newblock


\bibitem[\protect\citeauthoryear{Mendez, Gaulton, Bento, Chambers, De~Veij,
  F{\'e}lix, Magari{\~n}os, Mosquera, Mutowo, Nowotka, et~al\mbox{.}}{Mendez
  et~al\mbox{.}}{2019}]%
        {mendez2019chembl}
\bibfield{author}{\bibinfo{person}{David Mendez}, \bibinfo{person}{Anna
  Gaulton}, \bibinfo{person}{A~Patr{\'\i}cia Bento}, \bibinfo{person}{Jon
  Chambers}, \bibinfo{person}{Marleen De~Veij}, \bibinfo{person}{Eloy
  F{\'e}lix}, \bibinfo{person}{Mar{\'\i}a~Paula Magari{\~n}os},
  \bibinfo{person}{Juan~F Mosquera}, \bibinfo{person}{Prudence Mutowo},
  \bibinfo{person}{Micha{\l} Nowotka}, {et~al\mbox{.}}}
  \bibinfo{year}{2019}\natexlab{}.
\newblock \showarticletitle{{ChEMBL}: towards direct deposition of bioassay
  data}.
\newblock \bibinfo{journal}{\emph{Nucleic Acids Research}}
  \bibinfo{volume}{47}, \bibinfo{number}{D1} (\bibinfo{year}{2019}).
\newblock


\bibitem[\protect\citeauthoryear{Mouchet, Troncoso{-}Pastoriza, and
  Hubaux}{Mouchet et~al\mbox{.}}{2020}]%
        {MouchetTH20}
\bibfield{author}{\bibinfo{person}{Christian Mouchet},
  \bibinfo{person}{Juan~Ram{\'{o}}n Troncoso{-}Pastoriza}, {and}
  \bibinfo{person}{Jean{-}Pierre Hubaux}.} \bibinfo{year}{2020}\natexlab{}.
\newblock \showarticletitle{{Multiparty Homomorphic Encryption: From Theory to
  Practice}}.
\newblock \bibinfo{journal}{\emph{{IACR} Cryptol. ePrint Arch.}}
  (\bibinfo{year}{2020}).
\newblock


\bibitem[\protect\citeauthoryear{Muegge and Mukherjee}{Muegge and
  Mukherjee}{2016}]%
        {muegge2016}
\bibfield{author}{\bibinfo{person}{Ingo Muegge} {and}
  \bibinfo{person}{Prasenjit Mukherjee}.} \bibinfo{year}{2016}\natexlab{}.
\newblock \showarticletitle{An overview of molecular fingerprint similarity
  search in virtual screening}.
\newblock \bibinfo{journal}{\emph{Expert Opinion on Drug Discovery}}
  \bibinfo{volume}{11}, \bibinfo{number}{2} (\bibinfo{year}{2016}).
\newblock


\bibitem[\protect\citeauthoryear{Nagaraja, Mittal, Hong, Caesar, and
  Borisov}{Nagaraja et~al\mbox{.}}{2010}]%
        {NMH10}
\bibfield{author}{\bibinfo{person}{Shishir Nagaraja}, \bibinfo{person}{Prateek
  Mittal}, \bibinfo{person}{Chi-Yao Hong}, \bibinfo{person}{Matthew Caesar},
  {and} \bibinfo{person}{Nikita Borisov}.} \bibinfo{year}{2010}\natexlab{}.
\newblock \showarticletitle{{BotGrep: Finding P2P Bots with Structured Graph
  Analysis}}. In \bibinfo{booktitle}{\emph{{USENIX}}}.
\newblock


\bibitem[\protect\citeauthoryear{Nagy, De~Cristofaro, Dmitrienko, Asokan, and
  Sadeghi}{Nagy et~al\mbox{.}}{2013}]%
        {NCD13}
\bibfield{author}{\bibinfo{person}{Marcin Nagy}, \bibinfo{person}{Emiliano
  De~Cristofaro}, \bibinfo{person}{Alexandra Dmitrienko}, \bibinfo{person}{N.
  Asokan}, {and} \bibinfo{person}{Ahmad-Reza Sadeghi}.}
  \bibinfo{year}{2013}\natexlab{}.
\newblock \showarticletitle{{Do I Know You? Efficient and Privacy-Preserving
  Common Friend-Finder Protocols and Applications}}. In
  \bibinfo{booktitle}{\emph{{ACSAC}}}.
\newblock


\bibitem[\protect\citeauthoryear{OkCupid, online}{OkCupid, online}{pace}]%
        {OkCupid}
OkCupid, online \bibinfo{year}{\space}\natexlab{}.
\newblock \bibinfo{title}{OkCupid: Free Online Dating}.
\newblock \bibinfo{howpublished}{\url{https://www.okcupid.com/}}.
\newblock


\bibitem[\protect\citeauthoryear{Paillier}{Paillier}{1999}]%
        {Paillier99}
\bibfield{author}{\bibinfo{person}{Pascal Paillier}.}
  \bibinfo{year}{1999}\natexlab{}.
\newblock \showarticletitle{Public-Key Cryptosystems Based on Composite Degree
  Residuosity Classes}. In \bibinfo{booktitle}{\emph{{EUROCRYPT}}}.
\newblock


\bibitem[\protect\citeauthoryear{Pinkas, Rosulek, Trieu, and Yanai}{Pinkas
  et~al\mbox{.}}{2019a}]%
        {PinkasRTY19}
\bibfield{author}{\bibinfo{person}{Benny Pinkas}, \bibinfo{person}{Mike
  Rosulek}, \bibinfo{person}{Ni Trieu}, {and} \bibinfo{person}{Avishay Yanai}.}
  \bibinfo{year}{2019}\natexlab{a}.
\newblock \showarticletitle{SpOT-Light: Lightweight Private Set Intersection
  from Sparse {OT} Extension}. In \bibinfo{booktitle}{\emph{{CRYPTO} {(3)}}}
  \emph{(\bibinfo{series}{Lecture Notes in Computer Science})}.
\newblock


\bibitem[\protect\citeauthoryear{Pinkas, Rosulek, Trieu, and Yanai}{Pinkas
  et~al\mbox{.}}{2020}]%
        {PinkasRTY20}
\bibfield{author}{\bibinfo{person}{Benny Pinkas}, \bibinfo{person}{Mike
  Rosulek}, \bibinfo{person}{Ni Trieu}, {and} \bibinfo{person}{Avishay Yanai}.}
  \bibinfo{year}{2020}\natexlab{}.
\newblock \showarticletitle{{PSI from PaXoS: Fast, Malicious Private Set
  Intersection}}. In \bibinfo{booktitle}{\emph{{EUROCRYPT}}}.
\newblock


\bibitem[\protect\citeauthoryear{Pinkas, Schneider, Segev, and Zohner}{Pinkas
  et~al\mbox{.}}{2015}]%
        {PSSZ15}
\bibfield{author}{\bibinfo{person}{Benny Pinkas}, \bibinfo{person}{Thomas
  Schneider}, \bibinfo{person}{Gil Segev}, {and} \bibinfo{person}{Michael
  Zohner}.} \bibinfo{year}{2015}\natexlab{}.
\newblock \showarticletitle{{Phasing: Private Set Intersection Using
  Permutation-based Hashing}}. In \bibinfo{booktitle}{\emph{{USENIX}}}.
\newblock


\bibitem[\protect\citeauthoryear{Pinkas, Schneider, Smart, and Williams}{Pinkas
  et~al\mbox{.}}{2009}]%
        {PinkasSSW09}
\bibfield{author}{\bibinfo{person}{Benny Pinkas}, \bibinfo{person}{Thomas
  Schneider}, \bibinfo{person}{Nigel~P. Smart}, {and}
  \bibinfo{person}{Stephen~C. Williams}.} \bibinfo{year}{2009}\natexlab{}.
\newblock \showarticletitle{Secure Two-Party Computation Is Practical}. In
  \bibinfo{booktitle}{\emph{{ASIACRYPT}}}.
\newblock


\bibitem[\protect\citeauthoryear{Pinkas, Schneider, Tkachenko, and
  Yanai}{Pinkas et~al\mbox{.}}{2019b}]%
        {Pinkas0TY19}
\bibfield{author}{\bibinfo{person}{Benny Pinkas}, \bibinfo{person}{Thomas
  Schneider}, \bibinfo{person}{Oleksandr Tkachenko}, {and}
  \bibinfo{person}{Avishay Yanai}.} \bibinfo{year}{2019}\natexlab{b}.
\newblock \showarticletitle{{Efficient Circuit-Based {PSI} with Linear
  Communication}}. In \bibinfo{booktitle}{\emph{{EUROCRYPT}}}.
\newblock


\bibitem[\protect\citeauthoryear{Pinkas, Schneider, Weinert, and Wieder}{Pinkas
  et~al\mbox{.}}{2018b}]%
        {PSWW18}
\bibfield{author}{\bibinfo{person}{Benny Pinkas}, \bibinfo{person}{Thomas
  Schneider}, \bibinfo{person}{Christian Weinert}, {and} \bibinfo{person}{Udi
  Wieder}.} \bibinfo{year}{2018}\natexlab{b}.
\newblock \showarticletitle{{Efficient Circuit-Based {PSI} via Cuckoo
  Hashing}}. In \bibinfo{booktitle}{\emph{{EUROCRYPT}}}.
\newblock


\bibitem[\protect\citeauthoryear{Pinkas, Schneider, and Zohner}{Pinkas
  et~al\mbox{.}}{2014}]%
        {PSZ14}
\bibfield{author}{\bibinfo{person}{Benny Pinkas}, \bibinfo{person}{Thomas
  Schneider}, {and} \bibinfo{person}{Michael Zohner}.}
  \bibinfo{year}{2014}\natexlab{}.
\newblock \showarticletitle{{Faster Private Set Intersection Based on OT
  Extension}}. In \bibinfo{booktitle}{\emph{{USENIX}}}.
\newblock


\bibitem[\protect\citeauthoryear{Pinkas, Schneider, and Zohner}{Pinkas
  et~al\mbox{.}}{2018a}]%
        {PSZ18}
\bibfield{author}{\bibinfo{person}{Benny Pinkas}, \bibinfo{person}{Thomas
  Schneider}, {and} \bibinfo{person}{Michael Zohner}.}
  \bibinfo{year}{2018}\natexlab{a}.
\newblock \showarticletitle{{Scalable Private Set Intersection Based on OT
  Extension}}.
\newblock \bibinfo{journal}{\emph{{ACM} Trans. Priv. Secur.}}
  \bibinfo{volume}{21}, \bibinfo{number}{2} (\bibinfo{year}{2018}).
\newblock


\bibitem[\protect\citeauthoryear{Poddar, Boelter, and Popa}{Poddar
  et~al\mbox{.}}{2019}]%
        {PoddarBP19}
\bibfield{author}{\bibinfo{person}{Rishabh Poddar}, \bibinfo{person}{Tobias
  Boelter}, {and} \bibinfo{person}{Raluca~Ada Popa}.}
  \bibinfo{year}{2019}\natexlab{}.
\newblock \showarticletitle{Arx: An Encrypted Database using Semantically
  Secure Encryption}.
\newblock \bibinfo{journal}{\emph{Proc. {VLDB} Endow.}} (\bibinfo{year}{2019}).
\newblock


\bibitem[\protect\citeauthoryear{Popa, Redfield, Zeldovich, and
  Balakrishnan}{Popa et~al\mbox{.}}{2011}]%
        {PopaRZB11}
\bibfield{author}{\bibinfo{person}{Raluca~A. Popa}, \bibinfo{person}{Catherine
  M.~S. Redfield}, \bibinfo{person}{Nickolai Zeldovich}, {and}
  \bibinfo{person}{Hari Balakrishnan}.} \bibinfo{year}{2011}\natexlab{}.
\newblock \showarticletitle{CryptDB: protecting confidentiality with encrypted
  query processing}. In \bibinfo{booktitle}{\emph{{SOSP}}}.
\newblock


\bibitem[\protect\citeauthoryear{RDKit, online}{RDKit, online}{pace}]%
        {rdkit}
RDKit, online \bibinfo{year}{\space}\natexlab{}.
\newblock \bibinfo{title}{{RDK}it: Open-source cheminformatics}.
\newblock \bibinfo{howpublished}{\url{http://www.rdkit.org}}.
\newblock


\bibitem[\protect\citeauthoryear{Rindal and Schoppmann}{Rindal and
  Schoppmann}{2021}]%
        {RindalS21}
\bibfield{author}{\bibinfo{person}{Peter Rindal} {and}
  \bibinfo{person}{Phillipp Schoppmann}.} \bibinfo{year}{2021}\natexlab{}.
\newblock \showarticletitle{{VOLE-PSI:} Fast {OPRF} and Circuit-PSI from
  Vector-OLE}. In \bibinfo{booktitle}{\emph{EUROCRYPT}}.
\newblock


\bibitem[\protect\citeauthoryear{Rivest, Adleman, and Dertouzos}{Rivest
  et~al\mbox{.}}{1978}]%
        {RAD78}
\bibfield{author}{\bibinfo{person}{Ronald~L Rivest}, \bibinfo{person}{Len
  Adleman}, {and} \bibinfo{person}{Michael~L Dertouzos}.}
  \bibinfo{year}{1978}\natexlab{}.
\newblock \showarticletitle{On data banks and privacy homomorphisms}.
\newblock \bibinfo{journal}{\emph{Foundations of Secure Computation}}
  (\bibinfo{year}{1978}).
\newblock


\bibitem[\protect\citeauthoryear{Rosulek and Trieu}{Rosulek and Trieu}{2021}]%
        {RosulekT21}
\bibfield{author}{\bibinfo{person}{Mike Rosulek} {and} \bibinfo{person}{Ni
  Trieu}.} \bibinfo{year}{2021}\natexlab{}.
\newblock \showarticletitle{Compact and Malicious Private Set Intersection for
  Small Sets}. In \bibinfo{booktitle}{\emph{{CCS}}}.
  \bibinfo{publisher}{{ACM}}.
\newblock


\bibitem[\protect\citeauthoryear{Ruan, Wang, Mi, and Zhang}{Ruan
  et~al\mbox{.}}{2019}]%
        {RuanWMZ19}
\bibfield{author}{\bibinfo{person}{Ou Ruan}, \bibinfo{person}{Zihao Wang},
  \bibinfo{person}{Jing Mi}, {and} \bibinfo{person}{Mingwu Zhang}.}
  \bibinfo{year}{2019}\natexlab{}.
\newblock \showarticletitle{New Approach to Set Representation and Practical
  Private Set-Intersection Protocols}.
\newblock \bibinfo{journal}{\emph{{IEEE} Access}} (\bibinfo{year}{2019}).
\newblock


\bibitem[\protect\citeauthoryear{SEAL}{SEAL}{2022}]%
        {sealcrypto}
SEAL \bibinfo{year}{2022}\natexlab{}.
\newblock \bibinfo{title}{{M}icrosoft {SEAL} (release 4.0)}.
\newblock \bibinfo{howpublished}{\url{https://github.com/Microsoft/SEAL}}.
\newblock
\newblock
\shownote{Microsoft Research, Redmond, WA.}


\bibitem[\protect\citeauthoryear{Shimizu, Nuida, Arai, Mitsunari, Attrapadung,
  Hamada, Tsuda, Hirokawa, Sakuma, Hanaoka, et~al\mbox{.}}{Shimizu
  et~al\mbox{.}}{2015}]%
        {shimizu2015}
\bibfield{author}{\bibinfo{person}{Kana Shimizu}, \bibinfo{person}{Koji Nuida},
  \bibinfo{person}{Hiromi Arai}, \bibinfo{person}{Shigeo Mitsunari},
  \bibinfo{person}{Nuttapong Attrapadung}, \bibinfo{person}{Michiaki Hamada},
  \bibinfo{person}{Koji Tsuda}, \bibinfo{person}{Takatsugu Hirokawa},
  \bibinfo{person}{Jun Sakuma}, \bibinfo{person}{Goichiro Hanaoka},
  {et~al\mbox{.}}} \bibinfo{year}{2015}\natexlab{}.
\newblock \showarticletitle{Privacy-preserving search for chemical compound
  databases}.
\newblock \bibinfo{journal}{\emph{BMC bioinformatics}} (\bibinfo{year}{2015}).
\newblock


\bibitem[\protect\citeauthoryear{Smart and Vercauteren}{Smart and
  Vercauteren}{2014}]%
        {SmartV14}
\bibfield{author}{\bibinfo{person}{Nigel~P. Smart} {and}
  \bibinfo{person}{Frederik Vercauteren}.} \bibinfo{year}{2014}\natexlab{}.
\newblock \showarticletitle{Fully homomorphic {SIMD} operations}.
\newblock \bibinfo{journal}{\emph{Des. Codes Cryptogr.}} \bibinfo{volume}{71},
  \bibinfo{number}{1} (\bibinfo{year}{2014}).
\newblock


\bibitem[\protect\citeauthoryear{Stefanov, van Dijk, Shi, Chan, Fletcher, Ren,
  Yu, and Devadas}{Stefanov et~al\mbox{.}}{2018}]%
        {StefanovDSCFRYD18}
\bibfield{author}{\bibinfo{person}{Emil Stefanov}, \bibinfo{person}{Marten van
  Dijk}, \bibinfo{person}{Elaine Shi}, \bibinfo{person}{T.{-}H.~Hubert Chan},
  \bibinfo{person}{Christopher~W. Fletcher}, \bibinfo{person}{Ling Ren},
  \bibinfo{person}{Xiangyao Yu}, {and} \bibinfo{person}{Srinivas Devadas}.}
  \bibinfo{year}{2018}\natexlab{}.
\newblock \showarticletitle{Path {ORAM:} An Extremely Simple Oblivious {RAM}
  Protocol}.
\newblock \bibinfo{journal}{\emph{J. {ACM}}} (\bibinfo{year}{2018}).
\newblock


\bibitem[\protect\citeauthoryear{Strava Route Explorer, online}{Strava Route
  Explorer, online}{pace}]%
        {stravaexploreroute}
Strava Route Explorer, online \bibinfo{year}{\space}\natexlab{}.
\newblock \bibinfo{title}{Strava Rolls Out Significant New Routes Feature}.
\newblock
  \bibinfo{howpublished}{\url{https://www.dcrainmaker.com/2020/03/strava-rolls-out-significant-new-routes-feature.html}}.
\newblock


\bibitem[\protect\citeauthoryear{Stumpfe and Bajorath}{Stumpfe and
  Bajorath}{2011}]%
        {stumpfe2011}
\bibfield{author}{\bibinfo{person}{Dagmar Stumpfe} {and}
  \bibinfo{person}{J{\"u}rgen Bajorath}.} \bibinfo{year}{2011}\natexlab{}.
\newblock \showarticletitle{Similarity searching}.
\newblock \bibinfo{journal}{\emph{Wiley Interdisciplinary Reviews:
  Computational Molecular Science}} (\bibinfo{year}{2011}).
\newblock


\bibitem[\protect\citeauthoryear{The Fork, online}{The Fork, online}{pace}]%
        {thefork}
The Fork, online \bibinfo{year}{\space}\natexlab{}.
\newblock \bibinfo{title}{The Fork: Restaurant Booking System}.
\newblock \bibinfo{howpublished}{\url{https://www.thefork.com/}}.
\newblock


\bibitem[\protect\citeauthoryear{Tinder, online}{Tinder, online}{pace}]%
        {tinder}
Tinder, online \bibinfo{year}{\space}\natexlab{}.
\newblock \bibinfo{title}{Tinder: An Online Dating Application}.
\newblock \bibinfo{howpublished}{\url{https://tinder.com/}}.
\newblock


\bibitem[\protect\citeauthoryear{Tversky}{Tversky}{1977}]%
        {tversky1977features}
\bibfield{author}{\bibinfo{person}{Amos Tversky}.}
  \bibinfo{year}{1977}\natexlab{}.
\newblock \showarticletitle{Features of similarity.}
\newblock \bibinfo{journal}{\emph{Psychological review}}
  (\bibinfo{year}{1977}).
\newblock


\bibitem[\protect\citeauthoryear{Wang, Malozemoff, and Katz}{Wang
  et~al\mbox{.}}{2016}]%
        {emp-toolkit}
\bibfield{author}{\bibinfo{person}{Xiao Wang}, \bibinfo{person}{Alex~J.
  Malozemoff}, {and} \bibinfo{person}{Jonathan Katz}.}
  \bibinfo{year}{2016}\natexlab{}.
\newblock \bibinfo{title}{{EMP-toolkit: Efficient MultiParty computation
  toolkit}}.
\newblock \bibinfo{howpublished}{\url{https://github.com/emp-toolkit}}.
\newblock


\bibitem[\protect\citeauthoryear{Wang, Banawan, and Ulukus}{Wang
  et~al\mbox{.}}{2021}]%
        {WangBU21}
\bibfield{author}{\bibinfo{person}{Zhusheng Wang}, \bibinfo{person}{Karim
  Banawan}, {and} \bibinfo{person}{Sennur Ulukus}.}
  \bibinfo{year}{2021}\natexlab{}.
\newblock \showarticletitle{Multi-Party Private Set Intersection: An
  Information-Theoretic Approach}.
\newblock \bibinfo{journal}{\emph{{IEEE J. Sel. Areas Inf. Theory}}}
  (\bibinfo{year}{2021}).
\newblock


\bibitem[\protect\citeauthoryear{Willett, Barnard, and Downs}{Willett
  et~al\mbox{.}}{1998}]%
        {willett1998}
\bibfield{author}{\bibinfo{person}{Peter Willett}, \bibinfo{person}{John~M
  Barnard}, {and} \bibinfo{person}{Geoffrey~M Downs}.}
  \bibinfo{year}{1998}\natexlab{}.
\newblock \showarticletitle{Chemical similarity searching}.
\newblock \bibinfo{journal}{\emph{Journal of chemical information and computer
  sciences}} \bibinfo{volume}{38}, \bibinfo{number}{6} (\bibinfo{year}{1998}).
\newblock


\bibitem[\protect\citeauthoryear{Xue, Stahura, Godden, and Bajorath}{Xue
  et~al\mbox{.}}{2001}]%
        {xue2001}
\bibfield{author}{\bibinfo{person}{Ling Xue}, \bibinfo{person}{Florence~L
  Stahura}, \bibinfo{person}{Jeffrey~W Godden}, {and}
  \bibinfo{person}{J{\"u}rgen Bajorath}.} \bibinfo{year}{2001}\natexlab{}.
\newblock \showarticletitle{{Mini-fingerprints detect similar activity of
  receptor ligands previously recognized only by three-dimensional
  pharmacophore-based methods}}.
\newblock \bibinfo{journal}{\emph{Journal of chemical information and computer
  sciences}} \bibinfo{volume}{41}, \bibinfo{number}{2} (\bibinfo{year}{2001}).
\newblock


\bibitem[\protect\citeauthoryear{Yao}{Yao}{1986}]%
        {Yao86}
\bibfield{author}{\bibinfo{person}{Andrew Yao}.}
  \bibinfo{year}{1986}\natexlab{}.
\newblock \showarticletitle{{How to generate and exchange secrets}}.
\newblock \bibinfo{journal}{\emph{SCFS}} (\bibinfo{year}{1986}).
\newblock


\bibitem[\protect\citeauthoryear{Ying, Cao, Poh, Xu, and Lim}{Ying
  et~al\mbox{.}}{2022}]%
        {YingCPXL22}
\bibfield{author}{\bibinfo{person}{Jason H.~M. Ying}, \bibinfo{person}{Shuwei
  Cao}, \bibinfo{person}{Geong~Sen Poh}, \bibinfo{person}{Jia Xu}, {and}
  \bibinfo{person}{Hoon~Wei Lim}.} \bibinfo{year}{2022}\natexlab{}.
\newblock \showarticletitle{PSI-Stats: Private Set Intersection Protocols
  Supporting Secure Statistical Functions}. In
  \bibinfo{booktitle}{\emph{{ACNS}}}.
\newblock


\bibitem[\protect\citeauthoryear{Zhao and Chow}{Zhao and Chow}{2017}]%
        {ZhaoC17}
\bibfield{author}{\bibinfo{person}{Yongjun Zhao} {and} \bibinfo{person}{Sherman
  S.~M. Chow}.} \bibinfo{year}{2017}\natexlab{}.
\newblock \showarticletitle{Are you The One to Share? Secret Transfer with
  Access Structure}.
\newblock \bibinfo{journal}{\emph{Proc. Priv. Enhancing Technol.}}
  (\bibinfo{year}{2017}).
\newblock


\bibitem[\protect\citeauthoryear{Zhao and Chow}{Zhao and Chow}{2018}]%
        {0001C18}
\bibfield{author}{\bibinfo{person}{Yongjun Zhao} {and} \bibinfo{person}{Sherman
  S.~M. Chow}.} \bibinfo{year}{2018}\natexlab{}.
\newblock \showarticletitle{{Can You Find The One for Me?}}. In
  \bibinfo{booktitle}{\emph{{WPES}}}.
\newblock


\end{thebibliography}

  \appendix
\section{Extra material}
We provide extra materials in this section.

\parasec{Summary}
\label{ap:notation}
\label{ap:proto-summary}
\cref{tab:notation} summarizes our notation and asymptotic parameters and 
\cref{tab:proto-summary} summarizes the functionality of our protocols. \looseness=-1
\begin{table}[tb]
    \caption{Notation.}
    \begin{tabular}{@{}l@{\hskip4pt}l@{}}
        \toprule
        $q, \Zq, \Zq^*$ & A prime number, a prime ring, and a prime field.\\ 
        $\secpar$ & The security parameter. \\ 
        $a \randin A$ & Draw $a$ uniformly at random from the set $A$.\\ 
        $\NatNumUpTo{n}, \varAsList{a_i}$ & Present the set $\{1, \ldots, n\}$ and the list $[a_1,  \ldots, a_m]$.\\
        $\mathbbm{1}[E]$ & Function that returns `1' when $E$ is true and `0' otherwise.\\
        $\textsf{HE}$ & An IND-CPA circuit-private homomorphic scheme.\\
        $\cpk, \csk$& The client's public and private HE keys.\\
        $\encvar{a}$& An encryption of $a$.\\
        $\polyDegree$ & The degree of the RLWE polynomial. \\
        $\plainMod, \ctxMod$ & The plaintext and ciphertext modulus of the HE scheme.\\
        \midrule
        $X, \collec, Y_i$ & The client's set, server's collection, and server's $i$'th set.\\
        $\clientsize, \serversizekth$ & The size of client set $|X|$ and server's $i$'th set $|Y_i|$.\\
        $N, \totalserversize$ & The number of server sets and their total size $\totalserversize=\sum_i \serversizekth$.\\
        $D, T$ & The set input domain and matching threshold $[t_{min}, t_{max}]$.\\
        \midrule
        $\encvar{z_i}$  & The client's encrypted bit-vector $z_i \leftarrow  (d_i \in X)$. \\
        $Q$  & The client's query. Small input: $\varAsList{\encvar{x_i}}$, domain: $\varAsList{\encvar{z_i}}$. \\
        $\termisin{i}$ & An encrypted status determining iff $x_i \in Y_k$.\\
        $\encvar{\cardinality}$ & An encrypted cardinality of intersection $\cardinality=|X \cap Y_k|$.\\
        $\psmresp$  & A matching response. $\lambda = 0$ iff the set $Y_k$ is interesting.\\
        $\msresponse$  & An aggregated collection-wide response.\\
        $\encvar{R}$  & A term to randomize the output of malicious users. \\
        \bottomrule
    \end{tabular}
    \label{tab:notation}
  \end{table}
\newcommand{\psmFunc}{\codify{PSM}}
\begin{table}[tb]
    \caption{Summary of our protocols. We show the computed functionalities, their output range, and auxiliary input variables in the table.}
    \centering
    \begin{tabular}{@{}ll@{\hskip6pt}l@{\hskip6pt}l@{}}
        \toprule
        Protocol & Function & Range & Aux. \\
        \midrule
        PSI         & $X \cap Y$ & $\{0, 1\}^\clientsize$ & \\
        PSI-CA      & $|X \cap Y|$ & $\mathbb{Z}$ & \\
        \midrule \addlinespace[2pt]
        \multicolumn{4}{c}{Matching: $\lambda \gets f_M(X, Y)$ \vspace*{4pt}}\\
        F-Match       & $X \subseteq Y$  & $\{0, 1\}$ &  \\
        Th-Match      & $|X \cap Y| \geq t$ & $\{0, 1\}$  & $t$ \\
        Tv-Match      & $\tverskyfull(X, Y) \geq t$ & $\{0, 1\}$ & $t, \alpha, \beta$  \\
        \midrule \addlinespace[2pt]
        \multicolumn{4}{c}{Aggregation: $A \gets f_A(\lambda_1, \ldots, \lambda_N)$ \vspace*{4pt}}\\
        NA-\agg   & $\varAsList{\lambda_i}$ & $\{0, 1\}^\serversetnum$&  \\
        X-\agg    & $\exists i \, | \lambda_i = 1$ & $\{0, 1\}$ &\\
        CA-\agg   & $|\{i \,|\, \lambda_i = 1\}|$ & $\mathbb{Z}$ & \\
        Ret-\agg  & $D_j \, |  \lambda_j=1 \land |\{i \,|\, \lambda_i=1  \land i \in \NatNumUpTo{j} \}|= \kappa  $  & $\mathbb{Z}$ & $ D,  \kappa$  \\
        \bottomrule
    \end{tabular}
    \label{tab:proto-summary}
\end{table}

\parasec{Tversky similarity}
\label{ap:tversky}
Algorithm~\ref{alg:Tversky-param-proccess} shows how to process rational Tversky parameters to enable computing similarity with modular arithmetic.

\newcommand{\torational}{\textsf{ToRational}}
\begin{algorithm}[tb]
    \footnotesize
    \caption{Process Tversky parameters $t, \alpha$, and $\beta$ to compute integer coefficients $(a, b, c)$.}
    \label{alg:Tversky-param-proccess}
    \begin{algorithmic}
        \Function{$\tverskyParamTrans$}{$\alpha, \beta, t$}
        \State $(\alpha_1, \alpha_2) \leftarrow \torational(\alpha)$
        \Comment $\alpha = \alpha_1/ \alpha_2$
        \State $(\beta_1, \beta) \leftarrow \torational(\beta)$
        \Comment $\beta = \beta_1/ \beta$
        \State $(t_1, t_2) \leftarrow \torational(t)$
        \Comment $t = t_1/ t_2$
        \State $l \leftarrow \textsf{LCM}(t_1, \alpha_2, \beta_2)$
        \State $a \leftarrow  l\cdot (t^{-1}-1+\alpha+\beta)$
        \State $b \leftarrow l\cdot \alpha$
        \State $c \leftarrow l\cdot \beta$
        \State $g \leftarrow \textsf{GCD}(a, b, c)$
        \State \Return  $(a/g, b/g, c/g)$
    \EndFunction
    \end{algorithmic}
\end{algorithm}

\parasec{BFV parameters}
\label{ap:bfv-micro}
We report full details of our BFV parameters including their supported multiplicative depth in \cref{table:bfv-param}. Next we provide a microbenchmark of basic operations and key sizes in \cref{table:bfv-cost}.
It is possible to reduce the size of the rotation keys in exchange for more costly rotation operations. We report key sizes that provide a balanced computation/communication trade-off. \

\begin{table}[tb]
    \centering
    \caption{BFV parameters with 128-bit security}
    \begin{tabular}{lcccc}
        \toprule
         & $\polyDegree$ & $\plainMod$ & $\lg(\ctxMod)$ & Mult. depth\\ 
        \midrule
        $\bfvParam{8}$ & 8192 & 4079617 &  218-bit & 2\\
        $\bfvParam{16}$ & 16384 & 163841 & 438-bit & 7\\
        $\bfvParam{32}$ & 32768 & 786433 & 880-bit & 16\\
        \bottomrule
    \end{tabular}
    \label{table:bfv-param}
\end{table}
\begin{table}[tb]
    \centering
    \caption{Cost of basic BFV operations.}
    \begin{tabular}{lccc}
        \toprule
         & $\bfvParam{8}$ & $\bfvParam{16}$ & $\bfvParam{32}$ \\ 
        \midrule
        Addition ($\mu$s)  & 29 & 115 & 530 \\     
        Multiplication (ms)  & 7.3 & 36.3 & 182\\   
        Plaintext mult. (ms)  & 0.97 & 4.13 & 18.3 \\
        Rotation by 1 (ms)  & 2.16 & 10.8 & 57 \\
        \midrule
        Ciphertext (KB) & 384 & 1536 & 6144 \\
        Public key (KB) & 512 & 2048 & 7680 \\
        Relinearization key (MB) & 3 & 12 & 60 \\
        Rotation key (MB) & 22 & 96 & 510 \\
        \bottomrule
    \end{tabular}
    \label{table:bfv-cost}
\end{table}
\section{Sum of Random $\Zq^*$ Elements}
\label{ap:uniform-sum}
In \cref{sec:psm-layer}, we argued that the distribution of the sum $s = \sum_{i=1}^k x_i$ of $k$ random $x_i \randin \Zq^*$ elements is close to uniform when $q$ is prime. Now we prove that the probability of the sum being zero is bounded by $1/(q-1)$ and the difference between the probability of the sum being zero vs a non-zero value $a$ is at most $1/{(q-1)^2}$ when $k$ is larger than one.

\newcommand{\zqdepth}[2]{{#1}^{[#2]}}

Let $\zqdepth{z}{k}$ be the probability that the of sum of $k$ elements from $\Zq^*$ is zero and $\zqdepth{p_a}{k}$ be the probability that the of sum is a non-zero value $a$. When $k = 1$, e.g.,
we sum one element, these probabilities are $\zqdepth{z}{1} = 0 $ and $\zqdepth{p_a}{1} = 1/(q-1)$. The $k$ elements $x_i$ are independent from each other so we choose the value of the last element and recursively compute the probability by using the distribution of $k-1$ elements.
\begin{align*}
    \zqdepth{z}{k} &= \sum_{i=1}^{q-1} \zqdepth{p_a}{k-1}/(q-1) = \zqdepth{p_a}{k-1}\\
    \zqdepth{p_a}{k} &= \zqdepth{z}{k-1}/(q-1 ) + \sum_{i \in \Zq^* -\{a\}} \zqdepth{p_{a-i}}{k-1}/(q-1) \\
    &= \zqdepth{z}{k-1}/(q-1 ) + (q-2)\cdot\zqdepth{p_a}{k-1}/(q-1)\\
    &= \zqdepth{p_a}{k-1} + (\zqdepth{z}{k-1} - \zqdepth{p_a}{k-1})/(q-1)\\
\end{align*}

The probability gap of $\zqdepth{z}{k}$ and $\zqdepth{p_a}{k}$ gets narrower as the number of elements $k$ increases as:
\begin{align*}
    \zqdepth{p_a}{k} - \zqdepth{z}{k} &= \zqdepth{p_a}{k-1} + (\zqdepth{z}{k-1} - \zqdepth{p_a}{k-1})/(q-1) - \zqdepth{p_a}{k-1} \\
    &=  (\zqdepth{z}{k-1} - \zqdepth{p_a}{k-1})/(q-1)
\end{align*}

The highest probability of a zero sum happens when summing two random elements and the probability is bound by $\zqdepth{z}{2} = \zqdepth{p_a}{1}= 1/(q-1)$. Similarly the highest probability difference happens when $k=2$ and is bound by $\zqdepth{z}{2} - \zqdepth{p_a}{2} = (\zqdepth{p_a}{1} - \zqdepth{z}{1})/(q-1) = 1/{(q-1)^2}$.
This means that the probability of false-positive in our approach is bounded by $1/(q-1)$ while the probability of distinguishing the number of non-zero elements (when at least one non-zero element is present) is $1/{(q-1)^2}$.

\newcommand{\simxydomain}{{X, \collec, \secpar}}
\newcommand{\simcircuitp}{\sims{sip}}
\newcommand{\serverresp}{R}
\newcommand{\clientoutput}{A}
\newcommand{\oracle}[1]{\mathbb{O}_{\textrm{#1}}}

\section{Privacy Proof}
\label{ap:priv}
We prove the security and privacy of our framework.
We throughout assume that honest users do not have any interaction outside our framework that can leak information.

\parasec{Roadmap}
We start by formally defining security properties of HE schemes such as IND-CPA security, circuit privacy, and strong input privacy in \cref{ap:circuit-priv}. 
Next, we address our framework's security in a semi-honest setting and use real-world/ideal-world simulation to prove \cref{thm:semi-honest} in \cref{ap:simulation}.
We study malicious servers in \cref{ap:mal-server} and provide a tight reduction from our client privacy to the semantic security of our HE scheme to prove \cref{thm:client-priv}.
Finally, we extend the notion of real-world/ideal-world simulation to a new paradigm called cipher-world, which provides better support for simulating HE protocols in a malicious setting. We use this new notion to prove \cref{thm:server-priv} and show that our protocols achieve server privacy against malicious clients in \cref{ap:mal-priv}.

\subsection{Security properties of HE schemes}
\label{ap:circuit-priv}
We formally define the security properties of HE schemes in this section. We start with semantic security (IND-CPA). Next, we discuss noise in HE schemes and define circuit privacy, which ensures that the noise contained in ciphertexts does not leak information about the computation performed on them. Finally, we extend circuit privacy, which only applies in a semi-honest setting, to a malicious setting and introduce strong input privacy.

\begin{definition}[IND-CPA]
  \label{def:ind-cpa}
  An encryption scheme is indistinguishable against chosen plaintext attacks if  no PPT adversary $\adv$ exists such that:
  \[
    \textrm{Pr}\left[
    \begin{array}{c}
      p \gets \fheParamGen(1^{\ell}) \\
      (\cpk, \csk) \gets \fheKeyGen(p) \\
      (\textsf{st}, m_0, m_1) \gets \adv(\cpk) \\
      b \randin \{0, 1\} \\
      c_b \gets \fheEnc(\cpk, m_b) \\
      b' \gets \adv(\textsf{st}, c_b) \\
    \end{array}
    : 
    \begin{array}{c}
      b = b'
    \end{array}
    \right]  > \frac{1}{2} + \epsilon
  \]
\end{definition}

Each BFV ciphertext $c$ contains noise. The amount of noise increases with each operation and can be measured. As a result, a ciphertext $c$ contains more information than its decrypted value $p \leftarrow \fheDec(\csk, c)$. If the noise grows beyond the HE parameter's noise budget, then the decryption fails $\bot \leftarrow \fheDec(\csk, c)$. Informally, circuit privacy states that the ciphertext $c$ should not reveal any information about the computation performed on $c$ beyond the decrypted result $p$. We follow the definition of Castro et al.~\cite{CastroJV20}:\looseness=-1

\begin{definition}[Circuit privacy]
    \label{def:circuit-priv}
    Let HE be a leveled homomorphic encryption scheme and let
    \begin{align*}
      \param & \gets \fheParamGen(q) \\
      (\csk, \cpk) & \gets \fheKeyGen(\param) \\
      c_i & \gets \fheEnc(\csk, m_i) \\
      M & \gets f(m_1, \ldots, m_n, p_1, \ldots, p_k) 
    \end{align*}
    be an (honestly) generated key pair, ciphertexts, and output of the computation. 
    The scheme HE is $\epsilon$-circuit private if a PPT simulator $\sims{}$ exists such that for all functions $f$ of depth $l \leq L$  all PPT distinguisher algorithms $\mathcal{D}$ are bounded by
    \begin{multline*}
      \Big|\textrm{Pr}\big[\mathcal{D}\left(\fheeval\left(\cpk, f, \varAsList{c_i}, \varAsList{p_i}\right), \csk, \cpk, \varAsList{c_i})\right) = 1\big] - \allowbreak \\
      \textrm{Pr}\big[\mathcal{D}\left(\sims{}\left(\csk, \cpk, M\right), \csk, \cpk, \varAsList{c_i})\right) = 1\big] \Big|  \leq \epsilon\text{.}
    \end{multline*}
\end{definition} 

The circuit privacy's definition focuses on the semi-honest setting and requires honest generation of HE keys and ciphertexts. We extend this definition by (1) removing the honest generation requirement, which extends the property to the malicious setting, and (2) relaxing privacy by revealing the functionality $f$ and only hiding the evaluator's private data $\varAsList{p_i}$.

\begin{definition}[Strong input privacy]
  \label{def:str-input-priv}
  A leveled homomorphic encryption scheme HE is $\epsilon$-strong input private if a simulator $\simcircuitp$ exists such that all PPT adversaries $\adv$ are bounded by:
\begin{multline*}
  \textrm{Pr}\left[
  \begin{array}{c}
    (\textsf{st}, \cpk, \csk) \gets \adv(1^\ell) \\
    (\textsf{st}, f, \varAsList{c_i}, \varAsList{p_i}) \gets \adv(\textsf{st}) \\
    b \randin \{0, 1\} \\
    m_i \gets \fheDec(\csk, c_i) \\
    M \gets f(\varAsList{m_i}, \varAsList{p_i}) \\
    {a_0} \gets \textsf{HE.Eval}(\cpk, f, \varAsList{c_i}, \varAsList{p_i}) \\
    {a_1} \gets \simcircuitp(\csk, \cpk, f, \varAsList{c_i}, M) \\
    b' \gets \adv(\textsf{st}, {a_b}) \\
  \end{array}
  : 
  \begin{array}{c}
    b = b'
  \end{array}
  \right] \leq \frac{1}{2} + \epsilon
\end{multline*}
\end{definition} 

\parasec{Ciphertext indistinguishability}
When simulating our protocols, in the next section, the simulator generates ciphertexts. As part of our proof, we need to show that these ciphertexts are indistinguishable from the output of our protocols. Hence, we discuss when a distinguisher $\mathcal{D}$ can distinguish two ciphertexts $c$ and $c'$ in three settings:

\para{Known public key} We first consider the case where the distinguisher only knows the public key $(\cpk)$. This scenario directly follows from the IND-CPA property. As long as the HE scheme is \emph{IND-CPA secure} and ciphertexts have the same size $|c| = |c'|$, then the distinguisher $\mathcal{D}(\cpk, c, c')$ has a negligible chance.

\para{Semi-honest with known secret key} Next, we consider the case where the distinguisher knows both the secret and public keys $(\csk, \cpk)$ in the semi-honest setting, i.e., keys and ciphertexts are honestly generated. In this scenario, the distinguisher can decrypt ciphertexts so IND-CPA is not enough. Decrypting a ciphertext $p \gets \fheDec(\csk, c)$ results in a plaintext $p$ and a measurable noise $\epsilon$.
This transforms ciphertext indistinguishability to showing two statements: both decrypted plaintexts and ciphertext noises are indistinguishable.
While comparing decrypted values $p$ is straightforward, we rely on circuit privacy to ensure noises are indistinguishable. As long as the HE scheme is \emph{IND-CPA secure} and \emph{circuit-private}, and  $p \gets \fheDec(\csk, c) \land p' \gets \fheDec(\csk, c') \land p \compequiv p'$, then the distinguisher $\mathcal{D}(\csk, \cpk, c, c')$ has a negligible chance in a semi-honest setting.

\para{Malicious with known secret key} The distinguisher knows both the secret and public keys $(\csk, \cpk)$ in a malicious setting. We require our scheme to be \emph{IND-CPA secure} and \emph{strongly input private} in this setting and use the simulator of strong input privacy $(\simcircuitp)$ to produce indistinguishable ciphertexts.

\parasec{Lack of circuit privacy} The Lattigo library does not provide circuit privacy. This is not surprising since other popular HE libraries such as Microsoft Seal~\cite{sealcrypto} do not provide circuit privacy either.
While there are possible mitigations in the semi-honest setting such as noise-flooding~\cite{Gentry09}  (alternatively called noise smudging) to achieve circuit privacy, there is no known mechanism for the malicious setting (i.e., no HE scheme achieves strong input privacy). Noise-flooding is only proven private in a semi-honest scenario where keys are generated honestly and the computation is guaranteed to start on freshly encrypted ciphertexts. 
Beyond flooding, there is a new line of work~\cite{CastroJV20} that uses careful parameter selection in RNS or DCRT representation of ciphertexts to achieve lightweight circuit privacy. We hope this approach will be adopted by HE libraries. 

Our implementation does not add extra defense mechanisms to prevent possible leakage from noise, due to the extra cost associated with these defenses. 
Despite this leakage, practical attacks using noise are limited.
The complexity and depth of our functions make extracting information from this noise more challenging; especially since the size of the server's private data, $\totalserversize$, is significantly larger than the capacity of the noise for storing information.

\subsection{Semi-honest security}
\label{ap:simulation}
We use real-world/ideal-world simulation to prove the security of our PCM protocols in the semi-honest setting (see \cref{thm:semi-honest}). Our malicious protection functions, \malcheckbothfunc, have no impact on honest execution as they only add $\encvar{0}$ to the result when the query is generated honestly. Therefore, we ignore these functions in this section. Our framework can be instantiated to support different functionalities, but they share a similar structure, which enables us to write a single proof that is customizable depending on the protocol variation. 
As seen in \cref{def:PCM}, PCM is a two-party interaction that computes 
\[\left(f_c(X, \collec), \bot\right) \leftarrow PCM_{f_M, f_A}(X, \collec)\]
where $f_c(X, \collec) = f_A\left(f_M\left(X, Y_1\right), \ldots, f_M\left(X, Y_N\right)\right)$ is a \emph{deterministic} function selected from the \cref{tab:proto-summary}.
To reason about the properties the matching layer, we allow $f_A$ to be equal to the identity function. To reason about the PSI layer, we allow $f_c$ to have the natural PSI and PSI-CA definition.

Our semi-honest scenario has deterministic output functions. Therefore, we can use the simpler formulation of security in Lindell~\cite{Lindell17} which requires schemes to satisfy two properties to be secure. \emph{Correctness}: the output of parties is correct. \emph{Privacy}: the view of parties can be separately simulated as follows:
\begin{align*}
  \left\{\sims{C}\left(1^{\secpar}, X, f_c\left(X, \collec\right)\right)\right\}_\simxydomain &\compequiv \left\{\simview{client}\left(\secpar, X, \collec\right)\right\}_\simxydomain,\\
  \left\{\sims{S}\left(1^{\secpar}, \collec, \bot\right)\right\}_\simxydomain &\compequiv \left\{\simview{server}\left(\secpar, X, \collec\right)\right\}_\simxydomain
\end{align*}
where $\secpar$ is the security parameter, $X$ is the client input, and $\collec$ is the server input. We omit $\secpar$ in the rest of this section. We assume that the client honestly generates the key pair  $(\cpk, \csk) \leftarrow \fheKeyGen(\param)$ and sends the public key $\cpk$ to the server before running the protocol.

\parasec{Correctness} In a semi-honest scenario where both parties follow the protocol specification, showing that \cref{alg:single_psi_algs,alg:single_psi_algs_sd,alg:single_psm_algs,alg:ms_algs} compute the functionality described in \cref{tab:proto-summary} is straightforward math. We have described the intuition behind these algorithms in \cref{sec:base-layer,sec:psm-layer,sec:aggregation-layer}, so we do not repeat the argument here.

\parasec{Server privacy} 
We simulate the view of clients to ensure server privacy. Let the view of the client be
\[\simview{client}\left(X, \collec\right) = (X, \text{rnd}, Q, \serverresp, \clientoutput = f_c(X, \collec))\]
where $\text{rnd}$ is the internal random tape, $Q$ is the encrypted query, $\serverresp$ is the server's encrypted response, and $\clientoutput$ is the client's output. 

All our protocols start with clients sending an encrypted query $Q$ to the server and receiving an encrypted response $\serverresp$. Afterward, clients apply the appropriate reveal function on $\serverresp$ to compute the output $\clientoutput \leftarrow {\clientfunc}(\csk, \serverresp)$.
Correctness ensures that the output $\clientoutput$  computed by the reveal algorithm is equal to the expected output $f_c(X, \collec)$ summarized in \cref{tab:proto-summary}.

Now we build a simulator $\sims{C}$ that given the input $(X, \text{rnd}, A = f_c(X, \collec))$ simulates the clients view as follows:

\begin{enumerate}
  \item The simulator uses $Q' \gets \querybothfunc(\cpk, X)$ to compute a query depending on the domain size with $\text{rnd}$ as internal randomness.
  \item We categorize the client's output based on the range of $f_c(X, \collec)$ into 3 groups: 
  \[
    \clientoutput \in
      \left\{\begin{array}{@{\hskip2pt}l@{\hskip12pt}l}
        \{0, 1\} & \textrm{For: F-\match, Th-\match, Tv-\match, X-\agg} \\
        \{0, 1\}^k & \textrm{For: PSI, NA-\agg}\\
        \mathbb{Z} & \textrm{For: PSI-CA, CA-\agg, Ret-\agg}\\
      \end{array}\right.
  \]

   Depending on the range category, the simulator computes the server response $\serverresp'$ as follows using the circuit-privacy simulator to ensure equivalent noise levels:
    \[
      \serverresp' \gets
        \left\{\begin{array}{@{\hskip2pt}l@{\hskip12pt}l}
          \fheEnc(\cpk, r \cdot \clientoutput) & \clientoutput \in \{0, 1\} \\
          \varAsList{\fheEnc(\cpk, r_j \cdot \clientoutput[j])}  & \clientoutput \in \{0, 1\}^k\\
          \fheEnc(\cpk, A) & \clientoutput \in \mathbb{Z}\\
        \end{array}\right.
    \]
    where $r \randin \Zq^* $ and $r_j \randin \Zq^*$ are random values.\footnote{When evaluating F-\match, the simulator chooses random variables from $\Zq$ instead of $\Zq^*$ to ensure the same false positive probability as the real-world execution.\looseness=-1}
  \item The simulator returns $(X, \text{rnd}, Q', \serverresp', \clientoutput)$.
\end{enumerate}

Now we show that the distribution returned from the simulator $\sims{C}$ is indistinguishable from $\simview{client}$. The three variables $X, A, \text{rnd}$ are taken from the input and are guaranteed to have the same distribution between the simulation and the view. Therefore, we only need to show that the joint distribution for the query and the server response are indistinguishable -- conditional on the common  $X, A,$ and $\text{rnd}$ variables. 

Both the query $Q$ and response $\serverresp$ are ciphertexts.
As discussed in `ciphertext indistinguishability' in \cref{ap:circuit-priv}, we are in a semi-honest scenario where the distinguisher knows both the public and private keys $(\cpk, \csk)$.
Our HE scheme is both IND-CPA and circuit private so we only need to show that these ciphertext pairs,  $(Q, Q')$ and $(\serverresp, \serverresp')$, decrypt to the same value.
Queries are computed following $Q \gets \querybothfunc(\cpk, X)$. The query function only encrypts the input set $X$ as the query so both $Q$ and $Q'$ should decrypt to the same value.
The server's response $\serverresp$ depends on the client's output $\clientoutput$ and this relation is specified in our $\clientfunc$ functions. Our correctness property ensures that the following relation between $\serverresp$ and $\clientoutput$ holds in the real-world:

\[
  \serverresp \gets
    \left\{\begin{array}{@{\hskip2pt}l@{\hskip6pt}l@{\hskip12pt}l}
      
          \encvar{b} & b = \left\{\begin{array}{@{\hskip2pt}l@{\hskip12pt}l}
            0 & \clientoutput = 0 \\ 
            \randin \Zq^* & \clientoutput = 1 \\ 
          \end{array}\right| 
          & \clientoutput \in \{0, 1\}  \\

          \varAsList{\encvar{b_j}} &  b_j = \left\{\begin{array}{@{\hskip2pt}l@{\hskip12pt}l}
            0 & \clientoutput[j] = 0 \\ 
            \randin \Zq^* & \clientoutput[j] = 1 \\ 
          \end{array}\right|
          & \clientoutput \in \{0, 1\}^k  \\

      \encvar{\clientoutput} & & \clientoutput \in \mathbb{Z}\\
    \end{array}\right. \text{.}
\]
It is straightforward to see that the underlying plaintext of $\serverresp'$, produced in step (2) by $\sims{C}$, follows the same distribution.
\hfill\qed

\parasec{Client privacy} 
We simulate the view of servers to ensure client privacy. Simulating the server's view is considerably simpler than the client's as the server only observes ciphertexts encrypted under the client's key.  In this simulation, the adversary is a semi-honest server and the distinguisher does not have access to the client's secret key. Moreover, our HE scheme is IND-CPA, which simplifies ciphertext indistinguishability to ensuring $|c| = |c'|$ (see \cref{ap:circuit-priv}).

Let the server's view be
\(\simview{server} = (\collec, \text{rnd}, Q, \serverresp)\). 
We build a simulator $\sims{S}$ that given the input $(\collec, \text{rnd})$ proceeds as follows:
\begin{enumerate}
  \item The simulator chooses two random variables $r_1$ and $r_2$ with the same size as the query and the server response then encrypts them.
  \[Q' \gets \fheEnc(\cpk, r_1), \serverresp' \gets \fheEnc(\cpk, r_2)\]
  \item The simulators return $(\collec, \text{rnd}, Q', \serverresp')$.
\end{enumerate}

The variables $\collec$ and $\text{rnd}$ are taken from the input and have the same distribution between the simulation and the view. The two variables $Q'$ and $\serverresp'$ are encrypted under the client's key. Without the knowledge of the secret key $Q \compequiv Q'$ and $\serverresp \compequiv \serverresp'$ hold independent of their content. 
\hfill\qed

\subsection{Malicious server}
\label{ap:mal-server}
Now, we study the setting where the server is malicious. In this setting, the client is honest and server privacy is not applicable. Our framework provides \emph{no correctness guarantee} in this setting. The malicious server can enforce a corrupted response that depends on the client input or fix the outcome independent of the query. Despite the lack of correctness guarantee, malicious servers cannot learn any information about the client's private data. We restate client privacy (\cref{def:cl-priv}) more formally, then prove \cref{thm:client-priv} by giving a tight reduction from our scheme's client privacy to the IND-CPA security of our HE encryption scheme. During this proof, we assume that the client's input fits into a single batched ciphertext, which holds throughout our evaluation. Adjusting the proof to support input encrypted in $k$ ciphertexts is straightforward and gives a $k$-fold advantage to the adversary.

\begin{definition*}[Client privacy]
  A PCM protocol is client private if no PPT adversary $\adv$ exists such that:
  \[
    \textrm{Pr}\left[
    \begin{array}{c}
      p \gets \fheParamGen(1^{\ell}) \\
      (\cpk, \csk) \gets \fheKeyGen(p) \\
      (\textsf{st}, X_0, X_1) \gets \adv(\cpk) \\
      b \randin \{0, 1\} \\
      {Q_b} \gets \querybothfunc(\cpk,X_b) \\
      b' \gets \adv(\textsf{st}, Q_b) \\
    \end{array}
    : 
    \begin{array}{c}
      |X_0| = |X_1|\\
      b = b'
    \end{array}
    \right] > \frac{1}{2} + \epsilon \text{.}
  \]
\end{definition*}

\begin{proof}
  Let $\adv$ be an adversary that can break the client privacy property with a non-negligible probability $\epsilon$. We build a new adversary $\adv'$ that can break the IND-CPA security (\cref{def:ind-cpa}) of our HE scheme with the same probability $\epsilon$. 

  \begin{enumerate}
    \item The adversary $\adv'$ starts an IND-CPA challenge and receives a public key $\adv'(\cpk)$.
    \item The adversary $\adv'$ calls $(\textsf{st}, X_0, X_1) \gets \adv(\cpk)$.
    \item Depending on the domain size, $\adv'$ converts sets $X_a$ to $m_a = \varAsList{x_i}$ or $m_a = \varAsList{z_i}$ following the logic of \querybothfunc. 
    \item The adversary $\adv'$ continues the IND-CPA challenge with $(\textsf{st}, m_0, m_1)$ and receives  $\adv'(\textsf{st}, c_b)$.
    \item The adversary $\adv'$ passes the challenge $b' \gets \adv(\textsf{st}, Q_b = c_b)$ to $\adv$ and returns $b'$ as the output.
  \end{enumerate}

  We show that the adversary $\adv$ cannot distinguish interaction with $\adv'$ from our protocol. The adversary $\adv$ has two interactions in the client privacy challenge. The first interaction is getting a fresh public key $\cpk$ which is the same between the IND-CPA and client privacy challenges. In the second interaction, $\adv$ receives a query $Q_b$ produced by $\querybothfunc$. We know that $\querybothfunc$ consists of two steps: convert $X$ into $p = \varAsList{x_i}$ or $p = \varAsList{z_i}$ which $\adv'$ performs in (3) and encrypting $p$ with $\cpk$ which is performed as part of IND-CPA challenge. This ensures that $c_b$ is computed in the same way as $Q_b$ and follows the same distribution.
  The adversary $\adv'$ runs one instance of client privacy with $\adv$ and succeeds the IND-CPA as long as the $\adv$ succeeds leading to the same $\epsilon$ advantage.
\end{proof}

\newcommand{\ownernoisectx}{{c_e}}
\newcommand{\ownernoiseptx}{e}
\newcommand{\owneroutputctx}{{c_{\clientoutput}}}
\newcommand{\unboundedoracle}{\textsf{CIPHER}}
\newcommand{\leakage}{\mathcal{L}}

\subsection{Malicious clients}
\label{ap:mal-priv}
When addressing malicious clients, client privacy is irrelevant and correctness does not apply either as the server has no output. Therefore, we only need to address our protocols' server privacy. Unfortunately, direct application of real-world/ideal-world simulation on our protocols in the malicious setting is not possible.
The biggest challenges in simulation proofs in the malicious setting is that the adversary is not required to use its input and random tape during the execution. Therefore, the simulation needs to extract the effective input that the malicious adversary uses to determine the corresponding output.
In our protocols, the server only receives an encrypted query from the client and there is no further interaction, such as having ROM calls in the client, to provide any extraction opportunity.
Therefore, the semantic security of our HE scheme prevents extracting the effective inputs. 
To address this challenge, we adapt the ideal-world paradigm and introduce a new notion for simulating HE protocols called cipher-world. Cipher-world is inspired by trapdoor commitments and allows the trusted party to decrypt the query and compute the ideal functionality.
We first prove that cipher-world provides the same server privacy guarantee as the ideal-world then use it to prove our framework server-private.

\subsection*{Cipher world}
We extend the notion of real-world/ideal-world simulation for two-party HE-based schemes in which exactly one party, called the owner, holds a pair of HE keys while the other party, called the evaluator, performs computation in the encrypted domain. This extension is \emph{one-sided} and only addresses the privacy of the evaluator (i.e., the server in our protocols). We call this extension cipher-world.

Cipher-world assumes that the trusted party defined in the ideal-world accepts encrypted inputs, is \emph{computationally unbounded}, and can extract the secret key $\csk$ of the owner from its public key $\cpk$ -- breaking the security of the HE scheme in the process.
The trusted party uses this secret key to extract the effective input $X$ of malicious owners.
Unlike the ideal-world, there is no guarantee that the decrypted input of the cipher oracle is valid and follows the input restrictions. The cipher oracle first verifies input restrictions and outputs $\bot$ if any check fails; otherwise, it computes and outputs $f(X, Y)$.  Additionally, the cipher-world reveals the secret key $sk$ to the simulator. This key is solely used for the purpose of simulating ciphertext noise.
\Cref{fig:cipher-world} shows the structure of the cipher-world and how it compares to the standard ideal-world setting.
We define both ideal-world and cipher-world oracles below to highlight their differences:
\begin{gather*}
  f(X, Y) \gets \textsf{IDEAL}(\secpar, X, Y)\\
  (\csk,   f(X, Y)/ \bot) \gets \unboundedoracle(\secpar, (\cpk, c = \encvar{X}), Y)\text{.}
\end{gather*}
Recall that here $Y$ contains the evaluator's private input.

\begin{figure}[tb]
  \centering
  \includegraphics[width=1\linewidth]{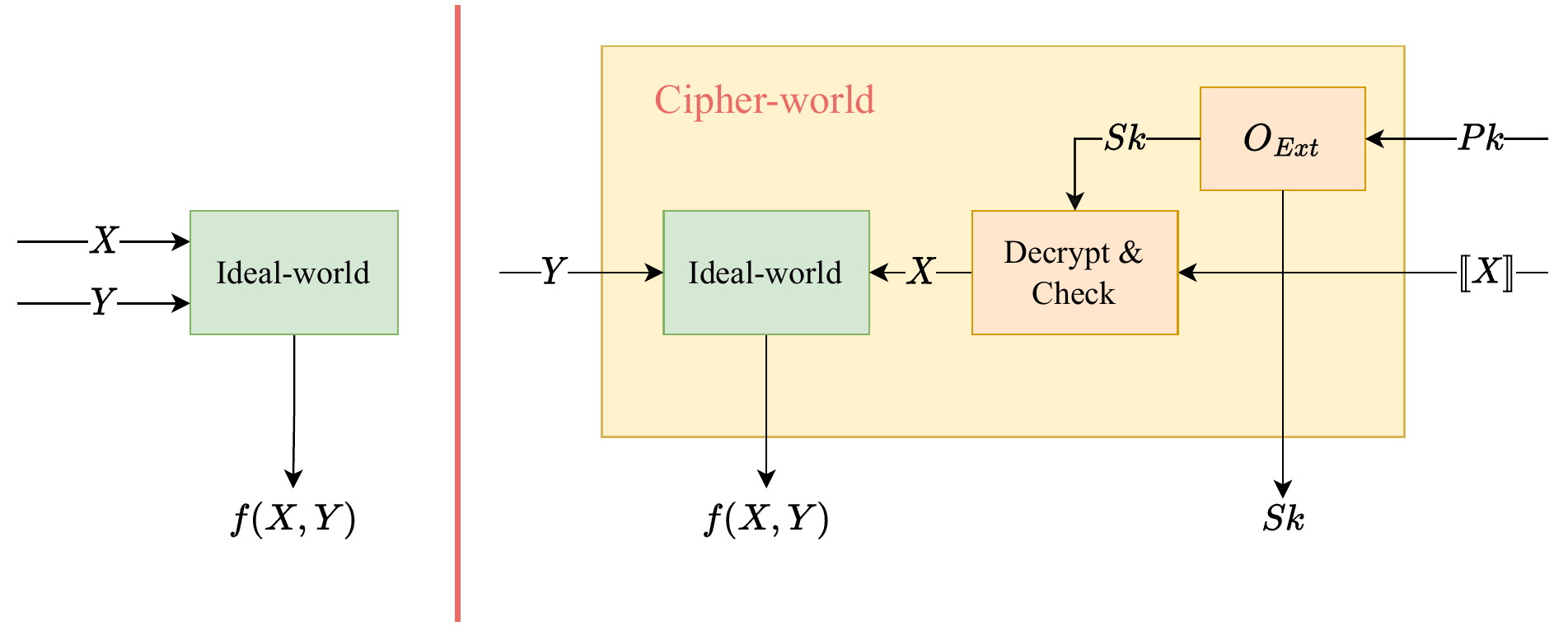}
  \caption{Structure of the cipher-world and its difference with the ideal-world.}
  \label{fig:cipher-world}
\end{figure}

\begin{theorem}
  \label{thm:evaluator-priv}
  The cipher-world provides the same privacy guarantee for the evaluator's private information as the ideal-world.
\end{theorem}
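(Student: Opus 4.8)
The plan is to prove \cref{thm:evaluator-priv} by a direct simulation argument at the level of \emph{models}; no cryptographic assumption is needed, because the cipher oracle $\unboundedoracle$ is deliberately endowed with unbounded power and with the ability to recover $\csk$ from $\cpk$ precisely so that, as far as the evaluator's private input $Y$ is concerned, it behaves exactly like the ideal trusted party. Concretely, I would show that $Y$ enters the owner's view \emph{only} through the functional output $f(X,Y)$ (or the reject symbol $\bot$) in both worlds, by exhibiting, for every malicious owner in one world, a malicious owner in the other world that induces exactly the same output distribution for every choice of $Y$. Since in the corrupt-owner case the evaluator has no output, this is exactly the statement that the two worlds afford the evaluator the same privacy.

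For the direction that matters for soundness---that the cipher-world leaks no more about $Y$ than the ideal world---let $\adv$ be a malicious owner in the cipher-world and build an ideal-world owner $\adv'$ that runs $\adv$ as a black box. $\adv'$ lets $\adv$ perform its (possibly dishonest) key generation and produce the single message $c$ it would hand to $\unboundedoracle$; it then replicates the cipher oracle's extraction step locally, setting $X \gets \fheDec(\csk, c)$ and aborting (submitting the reject symbol) if decryption fails or if the recovered $X$ violates the input restrictions, and otherwise submitting $X$ to the ideal trusted party. Upon receiving $f(X,Y)$ (or $\bot$), $\adv'$ hands $(\csk, f(X,Y)/\bot)$ back to $\adv$---it already holds $\csk$---and outputs whatever $\adv$ outputs. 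The view of $\adv$ inside $\adv'$ is identical, message for message, to its view against $\unboundedoracle$: the same keys, the same effective input $X$ obtained by decrypting the same $c$ under the same $\csk$, the same input-restriction verdict, the same functional output, and the same $\csk$ in the reply. Hence the distribution of $\adv'$'s output in the ideal world equals that of $\adv$'s output in the cipher-world for every $Y$.

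The converse is immediate: an ideal-world owner submitting $X$ is emulated by a cipher-world owner that runs $\fheKeyGen$ honestly, sends $(\cpk, \encvar{X})$, and discards the $\csk$ component of the answer. Combining both directions, the owner's view in each world is a randomised function of $Y$ only through $f(X,Y)/\bot$, which establishes the theorem; in particular, proving server privacy against a corrupt owner in the (more tractable) cipher-world implies it in the standard ideal-world sense.

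The delicate point---and the only real obstacle---is malicious key generation. The cipher oracle recovers $\csk$ from $\cpk$ by unbounded search, so a cipher-world owner need not ``know'' $\csk$, whereas the efficient emulator $\adv'$ cannot brute-force it. I would resolve this exactly as in \cref{def:str-input-priv}, where the adversary itself outputs the pair $(\cpk, \csk)$: one takes it as part of the cipher-world convention that the owner supplies (or retains) $\csk$, which is without loss of generality since an owner that cannot decrypt learns nothing from the interaction and may as well run $\fheKeyGen$. A secondary alignment issue is that the cipher oracle may answer $\bot$, on a decryption failure or a failed input-restriction check, whereas the plain ideal functionality has no such outcome; this is handled by letting the ideal trusted party reject invalid submissions with $\bot$ as well, after which the two verdict rules coincide and the views match exactly. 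No use of IND-CPA, circuit privacy, or strong input privacy is required here; those enter only when one later bridges the real protocol to its cipher-world execution.
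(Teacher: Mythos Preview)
Your overall plan is sound and tracks the paper's argument closely: both you and the paper reduce the question to showing that whatever extra power the cipher-world grants the owner, it is independent of the evaluator's private input $Y$. The simulation you sketch for the ``soundness'' direction is essentially the same five-step construction the paper gives.

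Where you diverge is in how you deal with malicious key generation. The paper does \emph{not} modify the cipher-world to require that the owner supply $\csk$; instead it equips the ideal-world simulator $\adv'$ with an abstract extraction oracle $\oracle{Ext}:\cpk\mapsto\csk$, shows that with this oracle $\adv'$ perfectly emulates the cipher oracle, and then argues in one line that $\oracle{Ext}$ is a function of $\cpk$ alone and therefore carries no information about $Y$. That decomposition---bound the leakage, then show the leakage is $Y$-independent---is cleaner than your WLOG manoeuvre, because it leaves the cipher-world definition untouched and makes the independence step explicit.

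Your WLOG justification, by contrast, has a small but real imprecision. You write that ``an owner that cannot decrypt learns nothing from the interaction and may as well run $\fheKeyGen$,'' but in the paper's cipher-world the oracle \emph{returns} $\csk$ to the owner, so an owner that submitted a $\cpk$ without knowing the matching $\csk$ does learn something---namely $\csk$ itself. The right statement is not that such an owner learns nothing, but that what it learns ($\csk$) is determined entirely by $\cpk$ and hence independent of $Y$; once you say it that way you have reproduced the paper's extraction-oracle argument. Your fallback of redefining the cipher-world so that the owner must output $(\cpk,\csk)$, as in \cref{def:str-input-priv}, is a legitimate alternative formalisation, but then you are proving the theorem for a modified model rather than the one stated, and you would still owe a line explaining why the two models are equivalent for evaluator privacy---which again is exactly the $Y$-independence of $\csk$.
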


\begin{proof}
  It is clear that owners can learn more information from interacting with cipher oracles than ideal oracles. However, we show that this leakage does not impact evaluator's privacy.
  First, we formally prove that this leakage $\leakage$ is bounded to providing an oracle $\csk \gets \oracle{Ext}(\cpk)$ which extracts the secret key from HE public keys.
  Next, we prove that the leakage $\leakage$ is independent of the evaluator's private date $Y$.

  We assume that a PPT adversary $\adv$ exists such that $\adv$ gains more advantage from interacting with a cipher oracle instead of an ideal oracle than $\leakage =  \{\oracle{Ext}\}$. We build a new adversary $\adv'$ that given an ideal oracle $\textsf{IDEAL}(Y, \cdot)$ and an extraction oracle $\oracle{Ext}$ can simulate the view of $\adv$.

  \renewcommand{\labelenumii}{(\arabic{enumi}.\alph{enumii})}%
  \begin{enumerate}
    \item Adversary $\adv'$ initiates a new interaction with $\adv$ and receives $(\cpk, c = \encvar{X}) \gets \adv(\cdot)$.
    \item Adversary $\adv'$ uses the extraction oracle within $\leakage$ to extract the secret $\csk \gets \oracle{Ext}(\cpk)$.
    \item Adversary $\adv'$ decrypts $\adv$'s query $X \gets \fheDec(\csk, c)$.
    \item Adversary $\adv'$ verifies if $X$ passes input restrictions. 
    \begin{enumerate}
      \item If any check fails, $\adv'$  sets $A \gets \bot$.
      \item Otherwise,  $\adv'$  interacts with the ideal world oracle with the decrypted input and sets $A \gets \textsf{IDEAL}(X)$.
    \end{enumerate}
    \item Adversary $\adv'$ finishes the execution $\adv(\csk, A)$.
  \end{enumerate}

  We need to show that the adversary $\adv$ cannot distinguish $(\csk, A)$ produced in our simulation from the output of the cipher-world.  Both the simulation and the cipher-world oracle are directly using the extraction oracle $\oracle{Ext}$ to produce the secret key $\csk$ which ensures secret key indistinguishability.
  We study two cases for $A$: (1) The adversary $\adv$ does not follow the input restriction. In this case, the cipher oracle responds with the failure symbol $\bot$. The adversary $\adv'$ performs the same input verification process as the cipher oracle which leads to setting $A \gets \bot$ when one of step (4.a) checks fail. (2) The adversary $\adv$ follows the input restriction. In this scenario, both the cipher and ideal oracles compute the same output ensuring that $A = f(X, Y)$. 
  This proves that the leakage of cipher-world can be bound to $\leakage=\{\oracle{Ext}\}$.

  Now we need to show that the leakage of the cipher-world is independent of the evaluator's private data. 
  As we bound the leakage to an extraction oracle $\{\oracle{Ext}\}$, this independence is clear since extraction is not impacted by changing the evaluator's private data $Y$. 
  Note that in our protocol, exactly one party, owner, generates HE keys, so assuming that HE key extraction is easy has no impact on the evaluator. 
  Therefore, our cipher-world provides the same evaluator privacy guarantee as the ideal world.
\end{proof}

We showed that our cipher-world provides the same server (evaluator) privacy as the original ideal-world.
Note that the cipher-world is one-sided and does not make any security claim about clients (owners).
To prove that our protocol provides server privacy, we have to show that the real-world view can be simulated given access to a cipher-world oracle.
\begin{gather*}
 \left\{\unboundedoracle_{\sims{C}'}^{\text{Client}}(X, \collec)\right\} \compequiv \left\{\textsf{REAL}_{\adv}^{\text{Client}}\left(X, \collec\right)\right\} \\
 \Updownarrow \\
 \sims{C}'(X, \text{rnd},  \adv, \unboundedoracle(Y, \cdot)) \compequiv  \text{View}_{\adv}\left(X, \collec\right) = (X, \text{rnd}, Q, \serverresp, \clientoutput)\
\end{gather*}

We build a simulator $\sims{C}'$ that given a PPT real-world malicious client $\adv$ and a cipher-world oracle $\unboundedoracle(\collec, \cdot)$ fixed with the server's input, simulates the real-world as follows: 
\begin{enumerate}  
  \item Simulator $\sims{C}'$ initiates a new interaction with $\adv$ and receives $(\cpk, Q = \encvar{X'}) \gets \adv(\cdot)$.
  \item Simulator $\sims{C}'$  interacts with the cipher oracle and learns $(\csk, A') \gets \unboundedoracle((\cpk, Q))$.
  \item \label{step:mal-sim-failure} If the cipher-world detects a malicious query $Q$ which does not respect input checks (i.e., $A' = \bot$), the simulator $\sims{C}'$ chooses a random response $t \randin \Zq$, sets the output accordingly $A' \gets \clientfunc(\csk, \encvar{t})$, and skips to the step \ref{step:mal-sim-enc}.
  \item Otherwise, $\sims{C}'$ computes the response $t$ as follows:
  \[
    t \gets
      \left\{\begin{array}{@{\hskip2pt}l@{\hskip12pt}l}
        r \cdot A' & \clientoutput \in \{0, 1\}  \\
        \varAsList{r_j \cdot A'[j]} & \clientoutput \in \{0, 1\}^k  \\
        A' & \clientoutput \in \mathbb{Z}\\
      \end{array}\right. \text{.}
  \]
  Note that the range of $\clientoutput$ is determined by functionality $f$ and is independent of the parties' input.
  \item \label{step:mal-sim-enc} Simulator $\sims{C}'$ performs $\serverresp' \gets \simcircuitp(\csk, f, Q, t)$.
  \item \label{step:mal-sim-a-fail} Adversary $\adv$ only learns the response $\serverresp$ if the decryption succeeds. Therefore, Simulator $\sims{C}'$  checks if the decryption $\fheDec(\csk, \serverresp')$ succeeds. Otherwise, $\sims{C}'$ sets $A' \gets \bot$.
  \item The simulator returns $(X, \text{rnd}, Q = \encvar{X'}, \serverresp', \clientoutput')$.
\end{enumerate}

Now, we show that the real view $(X, \text{rnd}, Q = \encvar{X'}, \serverresp, \clientoutput)$ is indistinguishable from the simulated view $(X, \text{rnd}, Q , \serverresp', \clientoutput')$.
The simulator $\sims{C}'$ does not use variables $X$ and $\text{rnd}$ as there is no guarantee that malicious clients will use their input or random tapes. As $(X, \text{rnd}, Q)$ are directly taken from the input, we only need to show that $(\serverresp, \clientoutput) \compequiv (\serverresp', \clientoutput')$ conditional on the common variables. We split our analysis into three cases:

\emph{Malicious queries which do not represent a set.} When the client is malicious there is no guarantee that the query represents a set. In \cref{sub:base-malicious}, we designed $\malcheckbothfunc$ functions and proved that they randomize the output of our protocols when the query does not represent a set as long as the HE abstraction holds. Our simulator $\sims{C}'$ relies on the cipher oracle to detect when the query does not represent a valid set in step \ref{step:mal-sim-failure} and assigns a uniformly random value $t$ to be encrypted as the response $\serverresp'$. To ensure that $\serverresp'$ has an indistinguishable noise from the real response $\serverresp$,  instead of directly using the encryption in step \ref{step:mal-sim-enc}, $\sims{C}'$ uses the simulator $\simcircuitp$ from the strong input privacy property (\cref{def:str-input-priv}).
Our simulator follows the same $\clientfunc$ process to compute the output $A$ from the response as the real protocol. 
Since $\serverresp \compequiv \serverresp'$, we will have $A \compequiv A'$ as long as the server response $\serverresp'$ decrypts successfully.

\emph{The malicious query decrypts to the set $X'$:} When the query represents a set, $\malcheckbothfunc$ produces an encrypted zero and does not impact the output of the protocol (i.e., adding $\encvar{0}$ is neutral). Knowing that the query is an encryption of the set $X'$, our protocols ensure that $A = f(X', \collec)$ as long as the decryption of $\serverresp$ succeeds. This guarantee follows from combining (1) our semi-honest correctness from \cref{ap:simulation} and (2) knowing that the query decrypts to the same value as $\querybothfunc(\cpk, X')$. The simulator $\sims{C}'$ follows a similar process to compute $\serverresp$ from $A$ as our semi-honest simulator $\sims{C}$ with the difference of replacing $\fheEnc$ with $\simcircuitp$ (i.e., relying on strong input privacy instead of circuit privacy).

\emph{Failed decryption.} Unlike the semi-honest setting where we know that the client's query is freshly encrypted, the server may receive queries with a high noise level. This may lead to producing server responses that fail the decryption. The use of $\simcircuitp$ in step \ref{step:mal-sim-enc} ensures that the response of $\sims{C}'$ has the same noise level as our real-world response. Therefore, the decryption of $\serverresp'$ fails if and only if the decryption of $\serverresp$ fails. When the decryption fails, the real-world cannot compute the output ($A = \bot$) while the simulator $\sims{C}'$ sets $A' \gets \bot$ in step \ref{step:mal-sim-a-fail}. 

\section{Extra benchmarks}
\label{ap:base-bench}
In this section, we provide extra details on our evaluation and add more benchmarks.
First, we compare our small domain PSI layer to an existing small domain paper. Second, we provide more detail on how we design and evaluate generic solutions with the same privacy and functionality as our document search engine and provide extra performance plots. Third, we compare our document search engine to one of the fastest OT-based PSI protocols which does not satisfy our privacy requirements.

\subsection{Small-domain protocols}
\label{ap:sd-bench}
In this section, we evaluate the performance of our small domain PSI-CA protocol and compare it to existing work. We focus on Ruan et al.~\cite{RuanWMZ19} in this section. We compare with Shimizu et al.~\cite{shimizu2015} in \cref{sec:eval}. We do not consider Bay et al.~\cite{BayEAV21} here, as it focuses on a multi-\emph{party} scenario, which leads to higher costs.

The code for the protocol of Ruan et al.~is not available but the paper provides a detailed cost analysis. We use the same scenario and the same CPU (Intel Core i7-7700) to allow us to directly compare performance without requiring us to rerun their protocol from the scratch. Moreover, we extend and optimize their approach for a many-set scenario.  Ruan et al.~use bit-vectors encrypted with ElGamal~\cite{Elgamal85} or Paillier~\cite{Paillier99} encryption to perform PSI. We extend their approach to only compute the query once and apply it to many sets. Their detailed performance benchmark allows us to compute the cost of performing a many-set query with $N$ sets.
In \cref{fig:sd-perf}, we show the computation cost with a fixed input domain size ($|D| \in \{256, 4096\}$) and varied the number of sets. The cost does not include the key exchange. The performance of our scheme is comparable to Run et al. -- which scheme is performing best depends on the specific scenario.

\begin{figure}[tb]
    \centering
    \includegraphics[scale=0.5]{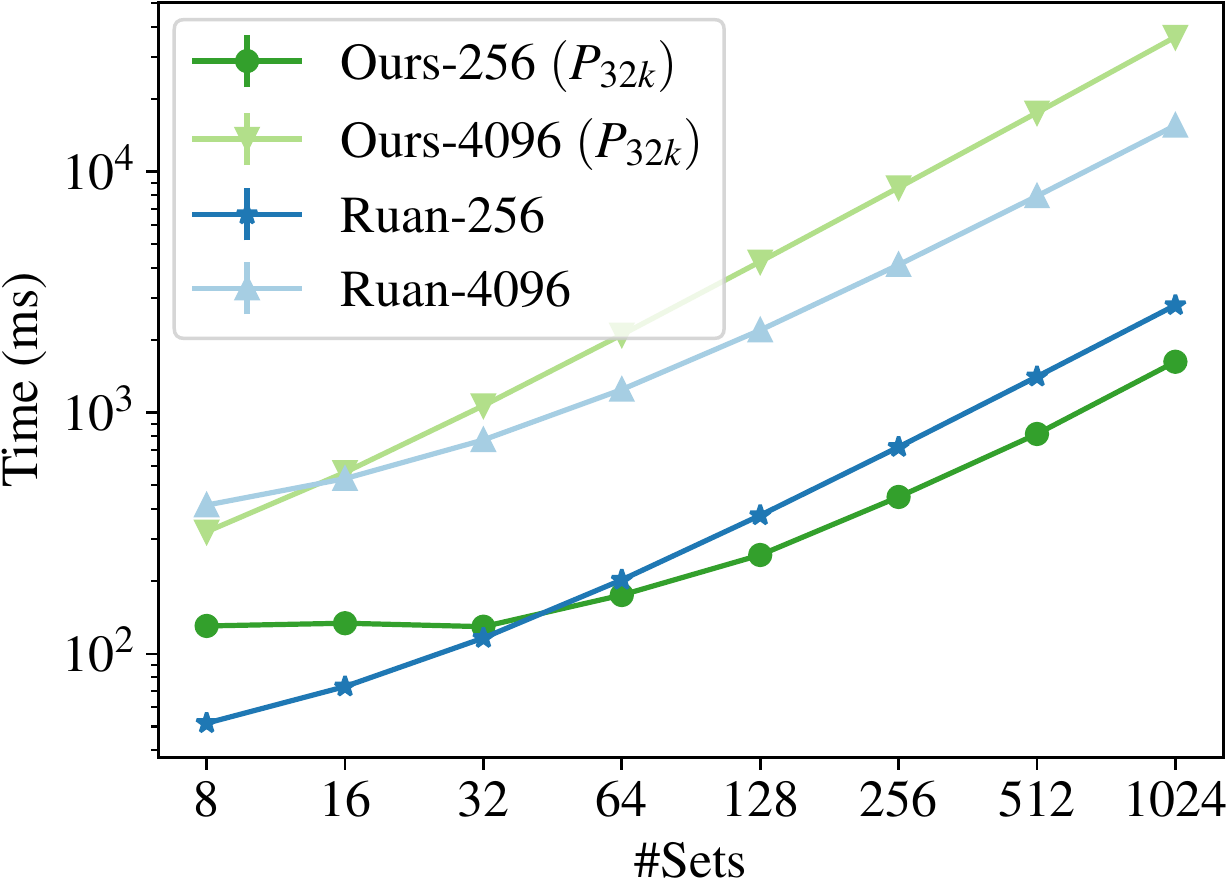}
    \caption{Computation cost for performing small domain PSI-CA Two systems use different threat models: Ruan et al. support 80-bit security in a semi-honest model while ours give 128-bit security against malicious adversaries.}
    \label{fig:sd-perf}
\end{figure}

Despite having the same operating point for both approaches, they have very different security guarantees. Ruan et al. assume a \emph{semi-honest} privacy model and use Pallier keys with 1024-bit RSA primes which only provide \emph{80-bit} security. Extracting more information than cardinality from their protocol is trivial for malicious clients. On the other hand, we provide full \emph{128-bit} security against malicious clients.

\subsection{Circuit-based protocols}
\label{ap:circuit-bench}
In this section, we provide more details on our evaluation of the generic solutions in the document search scenario of \cref{sub:doc-search-eval}.
We expand on their threat model and properties. Moreover, we report and compare the server computation cost of all approaches.

\parasec{Generic SMC} 
We use a high-level SMC compiler, EMP tool-kit~\cite{emp-toolkit}, to design and evaluate circuits providing the same properties as our search engine. More specifically, we use the `EMP-sh2pc' branch of the compiler that provides security in a two-party semi-honest setting and supports both boolean and arithmetic circuits.

We encode each input as a 32 bits binary value, which results in a higher false-positive rate due to encoding keywords than our framework, where we encode keywords using 39 or 44 bits. We use a private equality check offered by EMP for comparing encrypted set elements.
To determine whether one keyword of the client set has a match in a given document, we perform equality tests against all keywords of a document and perform an OR over the comparison results. To perform the full matching, we do an AND over the matching status of all client keywords. This produces a 1-bit result for each document determining its relevance. The F-\match process does not create any extra false-positive in this approach.

Now we have to aggregate $N$ 1-bit document matching statuses according to our X-\agg and CA-\agg policies. The X-\agg variant is straightforward and we use an OR to check if any document is relevant. To count the number of relevant documents, we first convert binary matching statuses to integers encoding `0' or `1' to enable us to continue the computation with an arithmetic circuit. Then, we compute the sum of these integer statuses.

\parasec{Circuit-PSI} 
We choose Chandran et al.~\cite{ChandranGS22} as a state-of-the-art circuit-PSI paper that is secure against semi-honest adversaries. Circuit-PSI protocols perform an intersection between the sets of two parties and secret share the output among them. This enables using circuits to privately compute arbitrary functions over the intersection.
Despite the capability of Chandran et al. to be extended with circuits to compute F-\match matching and X-\agg or CA-\agg aggregation, we decide to not extend their circuit and use the time necessary for computing the intersection as a lower bound on the cost of searching.
Since the PSI protocol of Chandran et al. is a single set protocol, we run $N$ instances of Circuit-PSI sequentially to simulate searching $N$ documents.  
We use the default parameters decided by Chandran et al., meaning that each item is encoded using 32 bits with an additional false positive rate of $2^{-40}$ due to computation.

\parasec{Server's computation cost}
We have already reported the end-to-end latency, communication cost, and client's computation cost in the main body (\cref{fig:doc_search}). We report the server's computation cost in \cref{fig:doc_server_comp}. Starting from 16 documents, our framework has lower server computation than Chandran et al.~\cite{ChandranGS22}. On the other hand, the EMP solutions have better server efficiency than our scheme which is not surprising as our framework outsources the computation load from thin clients to the server. Despite our outsourcing, our CA-\agg search only increases the server cost by a factor of \num{2.5}x and X-\agg by a factor of \num{30}x when searching 1k documents.

\begin{figure}[tb]
    \centering
    \includegraphics[scale=0.5]{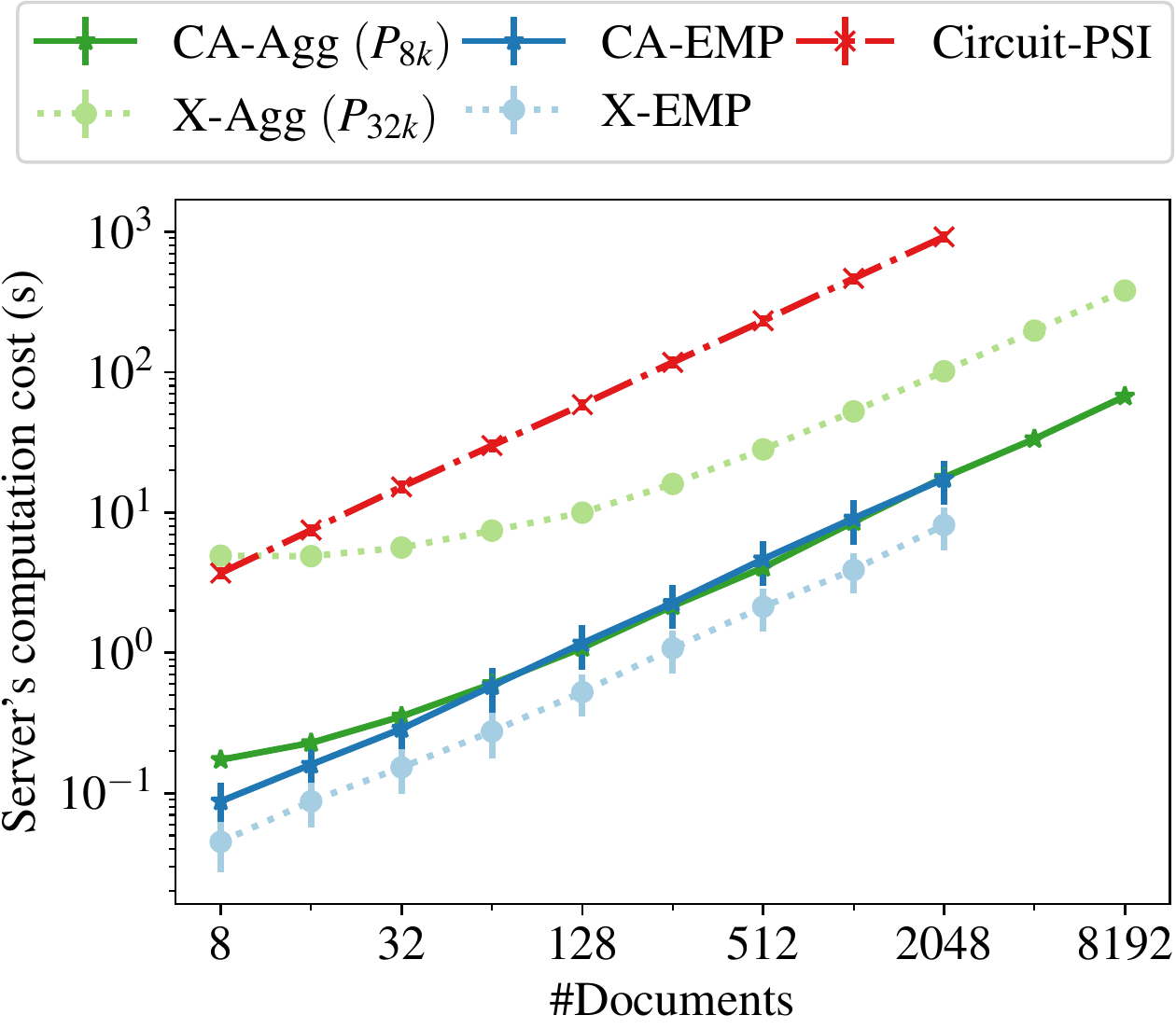}
    \caption{Server's computation cost for document search.}
    \label{fig:doc_server_comp}
\end{figure}

\subsection{OT-based protocol}
\label{ap:spot-bench}
There are efficient PSI protocols that are based on the oblivious transfer in both the semi-honest (such as SpOT-light~\cite{PinkasRTY19}) and the malicious setting (such as PaXoS~\cite{PinkasRTY20}). As discussed in \cref{sec:rel}, this line of research focuses on computing one-to-one equality tests between the client and server which leaks information about each server set and cannot satisfy our privacy requirements. Despite providing a lower privacy guarantee, we compare our approach to the SpOT-light protocol as a baseline cost.

\parasec{Document search with SpOT-light} We follow the document search setting from the \cref{sub:doc-search-eval} and evaluate the cost of using SpOT-light to search $N$ documents. SpOT offers 128-bit security in a semi-honest setting and accepts 256-bit input elements which bypasses the false-positive rate of mapping keywords. However, SpOT is (1) a single-set protocol and (2) does not support privacy extensions such as computing relevance without leaking the intersection cardinality to the client (i.e., private set matching) or aggregating the search result of multiple documents similar to our X-\agg and CA-\agg variant. We handle the single-set limitation by running $N$ sequential PSI interactions to search $N$ documents, but we do not add any countermeasure for the lack of matching or aggregation functionality. 
We encountered concurrency issues (with async IO) when running SpOT. This issue gets amplified when repeating the protocol $N$ times. Instead of directly running the code, we benchmarked the cost of each interaction through multiple runs with $N=8$, then extrapolate the cost for all entries.

\begin{figure}[tb]
    \centering
     \includegraphics[width=0.45\textwidth]{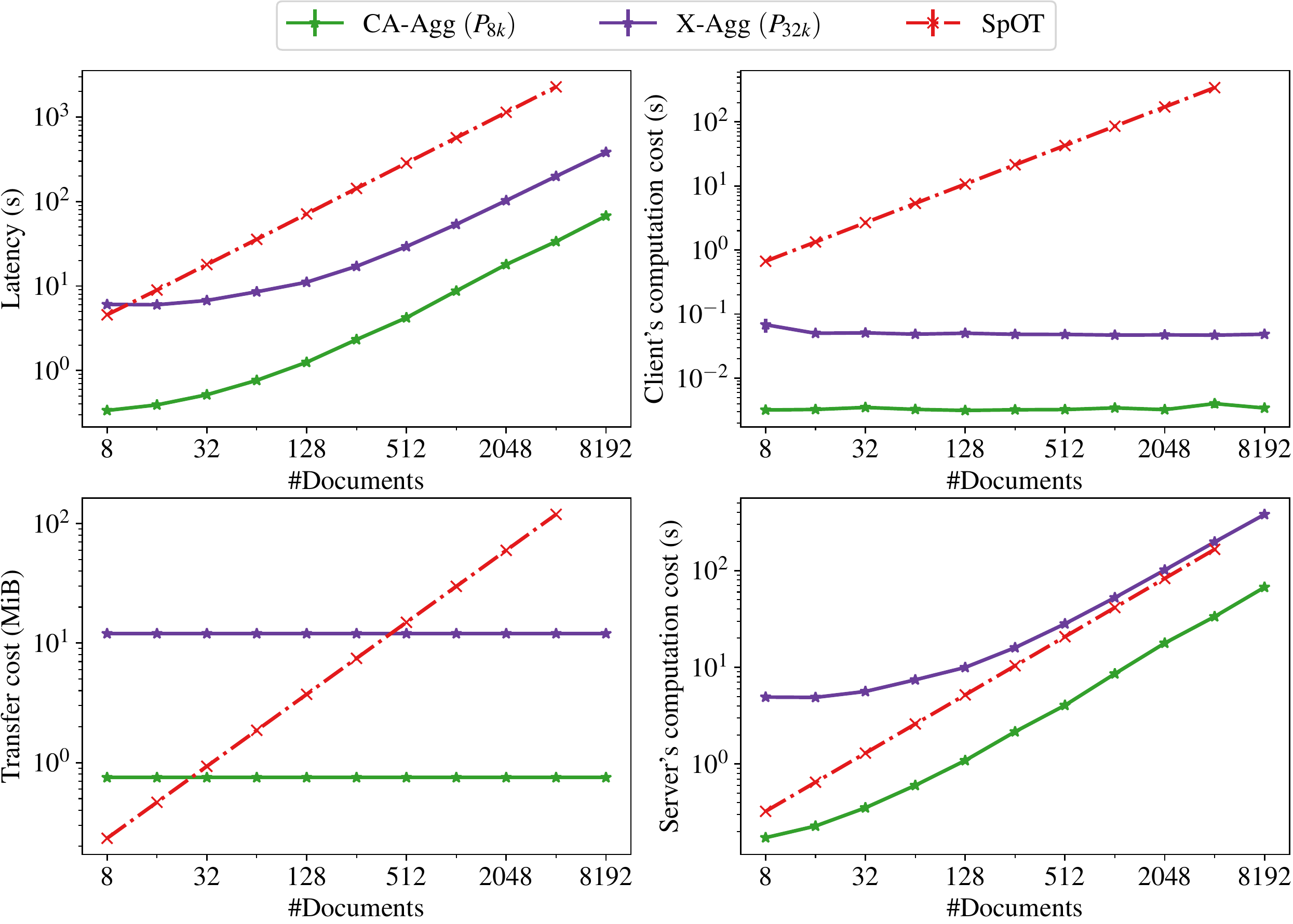}
    \caption{The end-to-end latency (upper-left), communication cost (lower-left), and client and server computation cost (right) of the document search.}
    \label{fig:spot}
\end{figure}

We report the end-to-end latency, computation, and communication costs of SpOT-light in \cref{fig:spot}. 
In the single-set setting and for a small number of documents, SpOT provides better performance.
However, as soon as reaching 32 documents, our CA-\agg variant starts to provide better latency,  computation, and communication than SpOT despite having higher privacy.
When searching 1k documents, our framework improves latency by a factor of \numrange{10}{65}x, communication by a factor of \numrange{1.7}{27}, and client's computation by a factor of \numrange{1800}{24800}x  depending on the search functionality.

\section{Solving Matching in Mobile Apps}
\label{ap:dating}
We do not separately evaluate the matching in the mobile apps scenario as it is similar to the chemical similarity scenario. The set of attributes to be matched can be represented using a small domain as the number of attributes is limited and they have few possible values.
Matching of individual records can be implemented using the threshold matching (Th-\match) protocol. This allows for approximate matches. The results can then be combined using naive aggregation to reveal the matching indices to the querier. Since Th-\match is simpler than the Tv-\match protocol and the threshold for matching is likely smaller than the chemical similarity case, we expect better performance for matching than chemical similarity.

\section{PSI-SUM}
\label{ap:psi-sum}A benefit of our framework's modular design is extensibility. 
To show the ease of adding new functionality, we design a new protocol called PSI-SUM in this section which is getting more popular in the literature due to its use in private ad-monetization systems~\cite{MaC22, IonKNPSSSY17, YingCPXL22}.
In the PSI-SUM protocol, the server assigns a weight to each of its elements and the client wants to compute the sum of weights of common elements, i.e., $\text{PSI-Sum}(X, (Y, W)) = \sum_{\{i | y_i \in X\}} w_i$. 

\begin{algorithm}[tb]
    \footnotesize
    \caption{Adding PSI-SUM capabilities.}
    \label{alg:psi-sum}
    \begin{algorithmic}
        \Function{$\tcserver{\addvariant{PSI-SUM}{\serverfunc}}$}{$\cpk,  Q = \varAsList{\encvar{x_i}}, Y, W = \varAsList{w_i}$}
            \State $\termisinserver{i} \leftarrow  \fheiszero(\cpk, \fheisin(\cpk, y_i, \varAsList{\encvar{x_j}}))$
            \State $\encvar{\mathcal{W}} \leftarrow \sum_{i \in \NatNumUpTo{n}} w_i \cdot \termisinserver{i}$
            \State \Return $\encvar{\mathcal{W}}$
        \EndFunction
        \interalgspace

        \Function{$\tcserver{\addvariant{PSI-SUM-SD}{\serverfunc}}$}{$\cpk,  Q = \varAsList{\encvar{z_i}}, Y, W= \varAsList{w_i}$}
            \State $\varAsList{\termisinserver{i}} \leftarrow \addvariant{PSI-SD}{\serverfunc}(\cpk,  \langle \encvar{{z_i}} \rangle, Y)$
            \State $\encvar{\mathcal{W}} \leftarrow \sum_{d_i \in D}   w_i \cdot \termisinserver{i}$ 
            \State \Return $M \leftarrow \encvar{\mathcal{W}}$
        \EndFunction
        \interalgspace

        \Function{$\tcclient{\addvariant{PSI-SUM}{\clientfunc}}$}{$\cpk,  M = \encvar{\mathcal{W}}$}
            \State \Return  $\fheDec(\csk, \encvar{\mathcal{W}})$
        \EndFunction
    \end{algorithmic}
\end{algorithm}

We add this new protocol in our PSI layer following the structure in \cref{fig:single_psi_proto}. 
We define the PSI-SUM algorithms in \cref{alg:psi-sum}. The $\addvariant{PSI-SUM}{\serverfunc}$ computes a binary inclusion status $\termisinserver{i}$ for each \emph{server} element $y_i$. The server then proceeds similarly to ePSI-CA to compute the encrypted weighted sum. 
The server can continue processing this value in the next layers, such as checking for a threshold value on the sum, or return  $\encvar{W}$ to the client which decrypts it to obtain the answer (see $\addvariant{PSI-SUM}{\clientfunc}$). Our extensions such as malicious check and many-set can directly apply to this protocol without any extra effort.

\end{document}